\def\N{\mathbb{N}}
\def\Z{\mathbb{Z}}
\def\Q{\mathbb{Q}}
\def\K{\mathbb{K}}
\newcommand{\myatop}[2]{\genfrac{}{}{0pt}{}{#1}{#2}}
\DeclareMathOperator{\gcrd}{gcrd}
\DeclareMathOperator{\ord}{ord}
\newcommand {\cB} {{\cal B}}
\newcommand {\cP} {{\cal P}}
\newcommand {\cC} {{\cal C}}
\newcommand {\cL} {{\cal L}_\cB}
\newcommand {\cRc} {{\cal R}_\cC}
\newcommand {\cR}[1][\cB] {{\cal R}_{#1}}
\newcommand {\cE} {{\cal E}}
\def\KK{\K[[\cB]]}
\newtheorem{theorem}{Theorem}
\newtheorem{corollary}[theorem]{Corollary}
\newtheorem{lemma}[theorem]{Lemma}
\newtheorem{proposition}[theorem]{Proposition}
\theoremstyle{remark}
\newtheorem{remark}[theorem]{Remark}
\theoremstyle{definition}
\newtheorem{definition}[theorem]{Definition}
\newtheorem{example}[theorem]{Example}
\newtheorem{notation}[theorem]{Notation}
\newcommand{\bc}{\ensuremath{\mathbf{c}}}
\newcommand{\code}[1]{\texttt{#1}}
\newcommand{\packname}{\code{pseries\_basis}}
\newcommand{\packurl}{{\url{https://github.com/Antonio-JP/pseries\_basis}}}
\newcommand{\packbinder}{{\url{https://mybinder.org/v2/gh/Antonio-JP/pseries_basis/master?labpath=notebooks\%2Fpaper_examples.ipynb}}}
\newcommand{\packappb}{{\url{https://mybinder.org/v2/gh/Antonio-JP/pseries_basis/master?labpath=notebooks\%2Fpaper_examples_appB.ipynb}}}
\newcommand{\packdoc}{{\url{https://antonio-jp.github.io/pseries_basis/}}}
\lstdefinelanguage{Sage}[]{Python}
{morekeywords={False,sage,True},sensitive=true}
\definecolor{dblackcolor}{rgb}{0.0,0.0,0.0}
\definecolor{dbluecolor}{rgb}{0.01,0.02,0.7}
\definecolor{dgreencolor}{rgb}{0.2,0.4,0.0}
\definecolor{dgraycolor}{rgb}{0.30,0.3,0.30}
\begin{document}

\begin{frontmatter}

\title{The Factorial-Basis Method\\ for Finding Definite-Sum Solutions of\\ Linear Recurrences With Polynomial Coefficients}

\author[a]{Antonio Jim\'enez-Pastor}
\author[b,c]{Marko Petkov\v sek\corref{ca}}
\cortext[ca]{Corresponding author}
\ead{marko.petkovsek@fmf.uni-lj.si}

\address[a]{LIX, CNRS, \'{E}cole Polytechnique, Institute Polytechnique de Paris, France}
\address[b]{Faculty of Mathematics and Physics, University of Ljubljana, Slovenia}
\address[c]{Institute of Mathematics, Physics and Mechanics, Ljubljana, Slovenia}

\begin{abstract}
The problem of finding a nonzero solution of a linear recurrence $Ly=0$ with polynomial coefficients where $y$ has the form of a definite hypergeometric sum, related to the Inverse Creative Telescoping Problem of \cite{ChKa}[Sec.~8], has now been open for three decades. Here we present an algorithm (implemented in a SageMath package) which, given such a recurrence and a \emph{quasi-triangular}, \emph{shift-compatible factorial basis} $\cB = \langle P_k(n) \rangle_{k=0}^{\infty}$ of the polynomial space $\K[n]$ over a field $\K$ of characteristic zero, computes a recurrence satisfied by the coefficient sequence $c = \langle c_k \rangle_{k=0}^{\infty}$ of the solution $y_n = \sum_{k=0}^\infty c_k P_k(n)$ (where, thanks to the quasi-triangularity of $\cB$, the sum on the right terminates for each $n\in\N$). More generally, if $\cB$ is $m$-sieved for some $m\in\N$, our algorithm computes a system of $m$ recurrences satisfied by the $m$-sections of the coefficient sequence $c$. If an explicit nonzero solution of this system can be found, we obtain an explicit nonzero solution of $Ly=0$.
\end{abstract}

\begin{keyword}
definite hypergeometric sums; shift-compatible factorial bases; (formal) polynomial series; quasi-triangular bases; binomial-coefficient bases; solutions of linear recurrences

\MSC[2020] 33F10 \sep 39A06 \sep 68W30
\end{keyword}

\end{frontmatter}


\section{Introduction}

By definition, a \emph{P-recursive} (or: \emph{holonomic}) sequence over a field $\K$ of characteristic 0 is given by a homogeneous linear recurrence with polynomial coefficients, together with suitable initial conditions. Often one wishes to find \textit{explicit representations} of P-recursive sequences, so algorithms have been devised to find solutions of such recurrences within a given class of explicitly representable sequences. Some well-known examples of this kind are the algorithms for finding polynomial \cite{A89poly}, rational \cite{A89rat, unden}, hypergeometric \cite{hyper, MvH99}, d'Alembertian \cite{AP94}, and Liouvillian solutions \cite{HS}, as well as hypergeometric solutions in the setting of $\Pi\Sigma^*$-fields~\cite{ABPS21}. These classes do not exhaust explicitly representable P-recursive sequences; for instance, every definite hypergeometric sum on which Zeilberger's Creative Telescoping algorithm \cite{Zeil90, Zeil91} succeeds is a P-recursive sequence, but such sequences are typically not Liouvillian. Hence it makes sense to consider the \textit{Inverse Creative Telescoping Problem} (ICTP): given a homogeneous linear recurrence with polynomial coefficients with no Liouvillian solutions, find its solutions in the form of \emph{definite sums} of a given type. Problems of this kind have been posed in \cite[p.~84, item 2]{CMUThesis}, and again in \cite[Sec.~8]{ChKa}. 

Here we make a small but important step towards solving ICTP. Let $L y = 0$ be the equation we wish to solve, where $L$ is a linear recurrence operator with polynomial coefficients. We provide an algorithm which, given $L$ and a sequence $\cB=\langle P_k(n)\rangle_{k=0}^\infty$ of polynomials in $n$ which is a \emph{quasi-triangular, shift-compatible factorial basis} of the polynomial space $\K[n]$ (see Definitions~\ref{def:factorial},~\ref{def:compat} and~\ref{qt}),
returns a linear recurrence operator $L'$ such that for any sequence $y\in\K^\N$ of the form 
\begin{equation}
\label{defsumsol}
y_n\ =\ \sum_{k=0}^\infty c_k P_k(n)\ \text{ for all } n\in \N
\end{equation}
for some $c\in\K^\N$, we have $L y = 0$ if and only if $L' c = 0$. So, if we can solve the latter equation for the unknown sequence $c$ (by, say, one of the algorithms mentioned in the preceding paragraph, or by using our \emph{factorial-basis method} (FBM) recursively), $y$ in  (\ref{defsumsol}) will be an explicit definite-sum solution of $Ly = 0$. 
 
We point out that FBM can be viewed as a kind of a discrete converse of the method of generating functions (GFM) for solving recurrences of the form $Ly = 0$. As is well known, GFM produces a \emph{differential} equation $L'f = 0$ satisfied by the (ordinary) generating function $f(x) = \sum_{n=0}^\infty y_n x^n$, which we then solve (if we can) and read off the coefficient sequence $\langle y_n\rangle_{n=0}^\infty$ from the obtained solution. On the other hand, FBM assumes that the unknown sequence $y$ is of the form (\ref{defsumsol}) (with the basis $\cB$ given as part of the input, playing the role of the power basis $\langle x^n\rangle_{n=0}^\infty$ in GFM),
and produces a \emph{recurrence} equation $L'c = 0$  which we then solve for $c$ (if we can), and obtain the solution $y$ from (\ref{defsumsol}).  
Like other solution methods, FBM can also be used for \emph{factoring linear recurrence operators} (cf.~Example~\ref{ex:main}), as well as for \emph{deriving summation identities} (cf.~Examples~\ref{bigex},~\ref{ex:main}) when some other form of the solution is already known.
 %
 
There is also some remote similarity between our problem where, given a recurrence operator $L$ 
and a basis $\langle P_k(n) \rangle_{k=0}^{\infty}$ of the space of polynomials $\K[n]$, we seek sequences $c\in\K^\N$ such that $y$ in (\ref{defsumsol}) satisfies $Ly = 0$, and the classical \emph{Fredholm} or \emph{Volterra integral equations of the first kind}, as well as the  \emph{Stieltjes moment problem} (cf.~\cite{Mnat8_13}) having the form
\begin{align*}
g(n)\ &=\ \int_0^\infty K(n,t) f(t) dt \qquad \text{ (Fredholm)}\\
g(n)\ &=\ \int_0^n K(n,t) f(t) dt \qquad \text{ \ (Volterra)}\\
m_n\ &=\ \int_0^\infty t^n f(t) dt \qquad\quad \text{ \ \ \ (Stieltjes)}
\end{align*}
where the left-hand sides $g(n)$ resp.\ $\langle m_n\rangle_{n=0}^\infty$, as well as the \emph{kernels} $K(n,t)$ resp.\ $t^n$ are given, and one seeks the unknown function $f(t)$.
Apart from our problem being ``discrete'' while the above three are ``continuous'', the main difference between them lies in the fact that our sequence $y$ in (\ref{defsumsol}) is given \emph{recursively}, and our goal is to find its explicit representation in terms of the unknown $c$, while in the other three problems the left-hand sides are presumably given explicitly, and finding $f(t)$ (corresponding to our $c$) is the final goal. Nevertheless, we will occasionally write our polynomial basis element $P_k(n)$ as $K(n,k)$, and call it the \emph{kernel} of (\ref{defsumsol}).

Note that recently, Imamoglu and van Hoeij \cite{MvH17} have solved 
the important related problem of finding definite-sum solutions of second-order linear \emph{differential} equations with rational-function coefficients. Their algorithms (very effective in practice, but called ``heuristic'' by the authors as they haven't been fully proven yet) find solutions in the form $A \cdot\, _2F_1(a_1,a_2;b_1;f)$ or in the form 
\[
A \cdot \left(r_0 \cdot {_2F_1}(a_1,a_2;b_1;f)\ +\ r_1 \cdot {_2F'_1}(a_1,a_2;b_1;f)\right)
\]
where $A$ has algebraic logarithmic derivative, and $f, r_0, r_1$ are algebraic. 

 The contents of the rest of the paper are as follows:
 In Section \ref{sec:FPS}, we define \emph{factorial bases} of the polynomial algebra $\K[x]$, and the notion of their \emph{compatibility with endomorphisms} of $\K[x]$. Following \cite{APR}, to each factorial basis $\cal B$ we assign the algebra $\K[[\cal B]]$ of \emph{formal polynomial series} as a generalization of the algebra $\K[[x]]$ of formal power series, with the basis element $P_k(x) \in {\cal B}$ in the former algebra playing the role of $x^k$ in the latter. We 
extend the action of an endomorphism $L$ of $\K[x]$ to $\K[[\cal B]]$ in a natural way, then assign to $L$ its \emph{associated operator} $L'=\cR L$ acting on sequences in such a way that $L\left(\sum_{k=0}^\infty c_k P_k(x)\right) = 0$ iff $\sum_{k=0}^\infty (L' c_k)\, P_k(x) = 0$. This enables us to solve the equation $Ly=0$ for $y\in \K[[\cal B]]$ by solving the (perhaps simpler) equation $L' c = 0$ for $c \in \K^\N$.

In order for FBM to be useful,
we need our formal polynomial series $y$ to have a definite value $y_n \in \K$ for every $n\in \N$. For example, when $P_k(n) = \binom{n}{k}$, this is true since 
the series $y$ in (\ref{defsumsol}) is terminating (in fact, $y$ is the classical \emph{binomial transform} of $c$).  With this example in mind, in Section \ref{sec:BinTransf} we define \emph{quasi-triangular bases} which are factorial shift-compatible bases with the properties that for a fixed $n$, we have $P_k(n) = 0$ for all $k$ large enough compared with $n$, and that for each $a\in\K^\N$ there is some $b\in \K^\N$ such that $a_n = \sum_{k=0}^\infty b_k P_k(n)$. 

Section \ref{sec:ProdBases} defines generalized binomial-coefficient bases, and provides a mechanism for creating many new compatible bases by taking \emph{products} of the already constructed ones. In bases that are products of $m > 1$ factors, the coefficients $\alpha_{k,i}$ expressing the actions of the operators on the basis elements are quite complicated conditional expressions depending on the residue class of $k \bmod m$. Therefore in Section \ref{sec:Sieved} we extend our approach to the so-called \textbf{\emph{sieved polynomial bases}} where the definition of the $k$-th basis element $P_k(x)$ depends on the residue class of $k \bmod m$. To facilitate the computations, we do not attempt to compute the associated operator $L' = \cR L$ directly but instead represent it by a matrix of operators $[\cR L] = [L_{r,j}]_{r,j=0}^{m-1}$ where the operator $L_{r,j}$ expresses the contribution of the $j$-th $m$-section of the coefficient sequence of $y$ to the $r$-th $m$-section of the coefficient sequence of $Ly$. Section \ref{sec:ex} presents several nontrivial applications of the developed theory and algorithms, such as the explicit solution of a recurrence equation of order 7 (which leads to complete factorization of the corresponding operator -- see Example~\ref{ex:main}), and construction of explicit solutions of the so-called \emph{Ap\'ery recurrences} for $\zeta(2)$ and $\zeta(3)$ (\ref{exm:apery2}, and \ref{exm:apery3}). 

In Section~\ref{sec:gen} we introduce a particular instance of sieved polynomial bases called
\textbf{\emph{shuffled polynomial bases}}, which are defined as some specific interlacing or shuffling 
of basic sieved polynomial bases. This can be seen as a straightforward generalization of the product bases 
defined in Section~\ref{sec:ProdBases}. This section includes a fully constructive way of extending the 
compatibilities of different operators.


\begin{notation} $\N = \{0,1,2,\ldots\}$ denotes the set of nonnegative integers, $\K$ a field of characteristic zero, $\K^{\N}$ the set of all sequences with terms from $\K$, $\K[x]$ the $\K$-algebra of univariate polynomials over $\K$, and ${\cal L}_{\K[x]}$ the $\K$-algebra of linear operators $L: \K[x] \rightarrow \K[x]$. 
\end{notation}

\begin{definition}
\label{def1}
 Let  $m \in \N \setminus \{0\}$ and $j \in \{0,1,\ldots,m-1\}$.
\begin{itemize}
\item A sequence $c \in \K^{\N}$ is called the \emph{$j$-th $m$-section} of a sequence $a \in \K^{\N}$ if $c_k = a_{mk+j}$ for all $k \in \N$. We say that $c$ is obtained from $a$ by \emph{multisection}, and denote it by $s_j^m a$.
\item A sequence $c =\Lambda(a^{(0)},a^{(1)}, \ldots, a^{(m-1)}) \in \K^{\N}$ is called
 the \emph{interlacing} of se\-quen\-ces $a^{(0)},a^{(1)}, \ldots, a^{(m-1)} \in \K^{\N}$ if  $c_k = a_q^{(r)}$ where $k = qm + r$ with $q \in \N$ and $r \in \{0,1,\dots,m-1\}$ for all $k \in \N$. 
\end{itemize}
\end{definition}

\section{Formal polynomial series}
\label{sec:FPS}
 
The power-series method is a time-honored approach to solving differential equations by reducing them to  recurrences satisfied by the coefficient sequences of their power series solutions. In \cite{APR} it was shown how, by generalizing the notion of formal \emph{power series} to formal \emph{polynomial series}, one can use this method to find solutions of other linear operator equations such as $q$-difference equations, and recurrence equations themselves, which interest us here. In this section we summarize some relevant definitions, examples and results from \cite{APR}.

\begin{definition}\label{def:factorial}
We call a sequence of polynomials $\cB = \langle P_k(x) \rangle_{k=0}^{\infty}$ from $\K[x]$ a \emph{factorial basis} of $\K[x]$, if for all $k \in \N$:
\begin{enumerate}
\item[\bf P1.] $\deg P_k(x) = k$, 
\item[\bf P2.] $P_k(x) \,|\, P_{k+1}(x)$ in $\K[x]$. 
\end{enumerate}
\end{definition}
\noindent
Note that due to property \textbf{P1}, any factorial basis of $\K[x]$ is a basis of $\K[x]$ as a vector space over $\K$.
\begin{notation}
Denote by $\cP = \left\langle x^k\right\rangle_{k=0}^{\infty}$ the {\em power basis}, and by 
$\cC = \left\langle {x\choose k}\right\rangle_{k=0}^{\infty}$ the {\em binomial-coefficient basis} of $\K[x]$, respectively. 
\end{notation}

\begin{example}
Clearly, both $\cP$ and $\cC$  are factorial bases of $\K[x]$.
\end{example}

\begin{proposition}
${\cal B} = \langle P_k(x)\rangle_{k=0}^\infty$ is a factorial basis iff there are a \emph{root sequence} $\rho = \langle \rho_1,\rho_2,\rho_3,\ldots\rangle \in \K^{\N\setminus \{0\}}$ and a sequence $\langle c_0,c_1,c_2,\ldots\rangle \in {(\K^*)}^{\N}$ such that
\begin{align}
\label{roots}
P_k(x)\ =\ c_k (x-\rho_1)(x-\rho_2)\cdots (x-\rho_k) \ \ \text{for all } k \in \N.
\end{align}
\end{proposition}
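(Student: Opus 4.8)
The plan is to prove the two implications separately, with the reverse direction (product form $\Rightarrow$ P1 and P2) handled by a direct computation and the forward direction ($\cB$ a factorial basis $\Rightarrow$ product form) by induction on $k$.

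For the reverse direction I would assume $P_k(x) = c_k(x-\rho_1)\cdots(x-\rho_k)$ with each $c_k \in \K^*$, reading the empty product at $k=0$ as $1$ so that $P_0 = c_0$. Counting the $k$ linear factors and using $c_k \neq 0$ gives $\deg P_k = k$, which is \textbf{P1}. For \textbf{P2} I would simply observe that $P_{k+1}(x) = (c_{k+1}/c_k)(x-\rho_{k+1})\,P_k(x)$, exhibiting $P_k \mid P_{k+1}$ explicitly.

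For the forward direction I would induct on $k$. In the base case, \textbf{P1} forces $\deg P_0 = 0$, so $P_0$ is a \emph{nonzero} constant, which I set to be $c_0 \in \K^*$ (the empty-product case). Assuming $P_k = c_k\prod_{i=1}^k(x-\rho_i)$, I use \textbf{P2} to write $P_{k+1} = Q\,P_k$ for some $Q \in \K[x]$, and \textbf{P1} together with $\deg P_k = k$ to conclude $\deg Q = 1$. Since every degree-one polynomial over a field has its unique root in that field, I can write $Q(x) = a(x-\rho_{k+1})$ with $a \in \K^*$ and $\rho_{k+1} \in \K$, and set $c_{k+1} = a c_k \in \K^*$; this yields $P_{k+1} = c_{k+1}\prod_{i=1}^{k+1}(x-\rho_i)$ and extends the root sequence by exactly one new entry.

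The step requiring the most care, though ultimately routine, is checking that the root sequence is built consistently, i.e.\ that passing from $P_k$ to $P_{k+1}$ only appends the single new root $\rho_{k+1}$ without disturbing $\rho_1,\ldots,\rho_k$. This is guaranteed precisely because \textbf{P2} makes $P_{k+1}$ a polynomial multiple of $P_k$, so the previously extracted factors persist and the divisibility chain forces the roots to accumulate one at a time. I do not expect genuine difficulty here; the only points worth flagging are the degree-zero base case (where the nonvanishing of $c_0$ follows from $\deg P_0 = 0$ rather than being assumed) and the fact that linearity of $Q$ means no passage to a field extension is needed to locate $\rho_{k+1}$.
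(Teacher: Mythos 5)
Your proposal is correct and follows essentially the same route as the paper: the reverse direction is the same direct computation, and the forward direction is the same induction in which \textbf{P2} gives $P_{k+1} = Q\,P_k$ and \textbf{P1} forces $\deg Q = 1$, so that $Q = u_k x - v_k$ with $u_k \in \K^*$ and the new root is $\rho_{k+1} = v_k/u_k$. Your explicit attention to the base case and to the persistence of earlier roots is just a more spelled-out version of what the paper compresses into one sentence.
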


\begin{proof}
If (\ref{roots}) holds then $\langle P_k(x)\rangle_{k=0}^\infty$ clearly satisfies \textbf{P1} and \textbf{P2}.

\smallskip
Conversely, if $\langle P_k(x)\rangle_{k=0}^\infty$ satisfies \textbf{P1} and \textbf{P2} then for each $k \in \N$ there are $u_k \in \K^*$ and $v_k \in \K$ such that $P_{k+1}(x) =(u_k x-v_k) P_k(x) = u_k P_k(x) (x - v_k/u_k)$. Let $c_0 := P_0(x) \in K^*$. By induction on $k$, we see that each $P_k(x)$ is of the form (\ref{roots}) with $c_k = c_0\prod_{j=0}^{k-1} u_j$ and $\rho_k = v_k/u_k$ for all $k \in \N\setminus \{0\}$. 
\end{proof}

\begin{example}
 The root sequence of the power basis $\cal P$ is $\rho = \langle 0,0,0,0,\dots \rangle$, and $c_k=1$ for all $k \in \N$. The root sequence of the binomial-coefficient basis $\cal C$ is $\rho = \langle 0,1,2,3,\dots \rangle$, and $c_k=1/k!$\, for all $k \in \N$. 
\end{example}

Note that in the umbral calculus, factorial bases with $c_k=1$ for all $k \in \N$ are known as \emph{sequences of polynomials with persistent roots} (cf.~\cite{dBL}).

\begin{notation}
Denote by $D$, $E$, $Q$, $X \in {\cal L}_{\K[x]}$ the \emph{differentiation}, \emph{shift}, \emph{q-shift}, and \emph{multiplication-by-the-independent-variable operators}, respectively, acting on polynomials $p \in \K[x]$ by
\[
\begin{array}{lll}
D p(x) &=& p'(x), \\
E p(x) &=& p(x+1), \\
Q p(x) &=& p(qx), \\
X p(x) &=& x p(x)
\end{array}
\]
where $q \in \K^*$ is not a root of unity, so that the operators $\{1, Q, Q^2,\ldots\}$ are linearly independent.
\end{notation}

\begin{definition}
\label{def:compat}

A factorial basis $\cB$ of $\K[x]$ and an operator $L \in {\cal L}_{\K[x]}$ are \emph{compatible} with each other if there are $A, B \in \N$ such that, for all $k \in \N$, there are $\alpha_{k,i} \in \K$ with $-A \le i \le B$, such that
\begin{equation}
\label{main}
LP_k(x) = \sum_{i=-A}^B \alpha_{k,i}\;P_{k+i}(x)\ \ { for\ all\ } k \in \N,
\end{equation}
with $P_j(x) = 0$ when $j < 0$. To assert that (\ref{main}) holds for specific $A, B \in \N$, we will say 
that $\cB$ is \emph{$(A,B)$-compatible} with $L$.
\end{definition}

\begin{example}
\label{basis}
\leavevmode
\begin{itemize}
\item $(x^k)'=k x^{k-1}$, so $\cP$ is (1,0)-compatible with $D$ (simply, take $\alpha_{k,-1}=k$, $\alpha_{k,0} = 0$),
\item $\binom{x+1}{k} = \binom{x}{k-1}+\binom{x}{k}$, so $\cC$ is (1,0)-compatible with $E$ (take $\alpha_{k,-1} = 1$, $\alpha_{k,0} = 1$),
\item $(q x)^k = q^k x^k$, so $\cP$ is (0,0)-compatible with $Q$ (take $\alpha_{k,0}=q^k$),
\item the basis $\cB = \langle \binom{x+k}{k}\rangle_{k=0}^\infty$ is \emph{not} compatible   with $E$, since the well-known identity $\binom{x+k+1}{k} = \sum_{j=0}^k \binom{x+j}{j}$ implies that $E P_k(x) = \sum_{j=0}^k P_j(x)$, where the upper bound depends on $k$.
\end{itemize}
\end{example}

\begin{proposition}
\label{compat}
A factorial basis $\cB$ of $\K[x]$ is $(A,B)$-compatible with $L\in {\cal L}_{\K[x]}$ if and only if 
\begin{enumerate}
\item[\bf C1.] $\deg L P_k(x) \le k + B$ for all $k \ge 0$,
\item[\bf C2.] $P_{k-A}(x) \,|\, L P_k(x)$\ for all $k  \ge A$.
\end{enumerate}
\end{proposition}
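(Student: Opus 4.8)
The plan is to prove the two implications separately, using only properties \textbf{P1} and \textbf{P2} together with the fact (noted after Definition~\ref{def:factorial}) that $\cB$ is a vector-space basis of $\K[x]$. Throughout I would invoke the immediate consequence of \textbf{P2} that $P_a(x) \,|\, P_b(x)$ whenever $a \le b$, by transitivity of divisibility along the chain.

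For the forward direction, assume $(A,B)$-compatibility, so $LP_k(x) = \sum_{i=-A}^{B}\alpha_{k,i}P_{k+i}(x)$ with the convention $P_j(x) = 0$ for $j < 0$. Condition \textbf{C1} is then read off from \textbf{P1}: every summand $P_{k+i}(x)$ has degree $k+i \le k+B$, so $\deg LP_k(x) \le k+B$. Condition \textbf{C2} follows from the divisibility chain: for $k \ge A$ each index $k+i$ that actually appears (with $k+i \ge 0$) satisfies $k+i \ge k-A \ge 0$, hence $P_{k-A}(x) \,|\, P_{k+i}(x)$, while the vanishing summands are divisible by anything; summing gives $P_{k-A}(x) \,|\, LP_k(x)$. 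This direction is pure bookkeeping.

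For the converse, assume \textbf{C1} and \textbf{C2}. Since $\cB$ is a basis I would expand $LP_k(x) = \sum_{j \ge 0}\beta_{k,j}P_j(x)$ uniquely. The upper cut-off is immediate: because $\deg P_j = j$, the $P_j$ have pairwise distinct degrees, so a nonzero top coefficient $\beta_{k,j}$ with $j > k+B$ could not cancel and would force $\deg LP_k(x) > k+B$, contradicting \textbf{C1}; hence $\beta_{k,j} = 0$ for $j > k+B$. The crux is the lower cut-off, namely $\beta_{k,j} = 0$ for $j < k-A$ when $k \ge A$. Here is the step I would use: set $s = k-A$; since $P_s(x) \,|\, P_j(x)$ for every $j \ge s$, condition \textbf{C2} shows that $P_s(x)$ divides both the tail $\sum_{j \ge s}\beta_{k,j}P_j(x)$ and $LP_k(x)$, hence it divides their difference $\sum_{j < s}\beta_{k,j}P_j(x)$. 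But this difference has degree at most $s-1 < s = \deg P_s(x)$, so the only multiple of $P_s(x)$ it can equal is $0$; by linear independence of the basis, $\beta_{k,j} = 0$ for all $j < s$. Combining the two cut-offs yields $LP_k(x) = \sum_{j=k-A}^{k+B}\beta_{k,j}P_j(x)$, and setting $\alpha_{k,i} := \beta_{k,k+i}$ gives exactly (\ref{main}). For the remaining indices $k < A$ there is nothing to prove beyond \textbf{C1}, since then $k-A < 0$ and the missing low-order terms are absorbed by the convention $P_j(x) = 0$ for $j < 0$.

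The only place demanding a genuine idea is this lower cut-off; everything else is routine. I would emphasize that the argument needs nothing about the specific roots of the $P_k$: the degree-versus-divisibility comparison, powered by the fact that $\deg P_s(x) = s$ exactly, does all the work, and this is precisely what makes \textbf{P1} and \textbf{P2} together the right hypotheses.
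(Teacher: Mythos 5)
Your proposal is correct and follows essentially the same route as the paper: expand $LP_k(x)$ in the basis, truncate above $k+B$ via the degree condition \textbf{C1}, and kill the coefficients below $k-A$ by observing that $P_{k-A}(x)$ divides the difference between $LP_k(x)$ and the tail while that difference has degree less than $k-A$. The paper leaves the forward direction as ``obvious'' where you spell it out, but the substance is identical.
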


\begin{proof}
Necessity of these two conditions  is obvious. For sufficiency, let 
\[
L P_k(x)\ = \sum_{j=0}^{\deg L P_k(x)} \lambda_{j,k}P_j(x)
\]
be the expansion of $L P_k(x)$ w.r.t.\ $\cB$. By \textbf{C1}, we can replace the upper summation bound by $ k+B$. Rewriting the resulting equation as
\[
LP_k(x)\ - \sum_{j=k-A}^{k+B} \lambda_{j,k} P_j(x)\ =  \sum_{j=0}^{k-A-1} \lambda_{j,k}P_j(x),
\]
we see  by \textbf{C2} and \textbf{P2} that $P_{k-A}(x)$ divides the left side, while the right side is of degree less than $k-A = \deg P_{k-A}(x)$. Hence both sides vanish, and so
\[
LP_k(x)\ = \sum_{j=k-A}^{k+B} \lambda_{j,k} P_j(x)\ = \sum_{i=-A}^{B} \lambda_{k+i,k} P_{k+i}(x)\ = \sum_{i=-A}^{B} \alpha_{k,i} P_{k+i}(x)
\]
where $ \alpha_{k,i} := \lambda_{k+i,k}$. This proves $(A,B)$-compatibility of $\cB$ with $L$. 
\end{proof}

\begin{corollary}
\label{cor_compat}
Every factorial basis is $(0,1)$-compatible with $X$.
\end{corollary}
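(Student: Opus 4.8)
The plan is to invoke Proposition~\ref{compat} with $A = 0$ and $B = 1$. By definition, $(0,1)$-compatibility of $\cB$ with $X$ amounts to producing, for each $k \in \N$, scalars $\alpha_{k,0}, \alpha_{k,1} \in \K$ such that $X P_k(x) = \alpha_{k,0} P_k(x) + \alpha_{k,1} P_{k+1}(x)$. Rather than exhibiting these coefficients by hand, the shortest route is to verify the two structural conditions \textbf{C1} and \textbf{C2} of Proposition~\ref{compat}, which then yields the conclusion automatically.

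First I would check \textbf{C1}. Since $X P_k(x) = x P_k(x)$ and $\deg P_k(x) = k$ by property \textbf{P1}, we get $\deg X P_k(x) = k + 1 = k + B$, so \textbf{C1} holds. Next, \textbf{C2} with $A = 0$ reads $P_{k}(x) \mid X P_k(x)$; but $X P_k(x) = x P_k(x)$ is manifestly a multiple of $P_k(x)$, so \textbf{C2} holds trivially for all $k \ge 0$. Applying Proposition~\ref{compat} then gives $(0,1)$-compatibility of $\cB$ with $X$.

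If explicit coefficients are wanted, I would instead use the root-sequence factorization $P_k(x) = c_k (x-\rho_1)\cdots(x-\rho_k)$ from the earlier proposition. Comparing $P_{k+1}$ and $P_k$ gives $(x - \rho_{k+1}) P_k(x) = (c_k/c_{k+1}) P_{k+1}(x)$, whence $x P_k(x) = (c_k/c_{k+1}) P_{k+1}(x) + \rho_{k+1} P_k(x)$, so one may take $\alpha_{k,0} = \rho_{k+1}$ and $\alpha_{k,1} = c_k/c_{k+1}$. I expect no genuine obstacle here: the result is an immediate corollary of Proposition~\ref{compat}. The only points meriting a word of care are that the bound $B = 1$ is forced by the single-degree increase under multiplication by $x$, and that $A = 0$ suffices precisely because $P_k(x)$ itself—not merely a proper divisor of it—divides $x P_k(x)$.
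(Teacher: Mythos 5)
Your proof is correct and follows the same route as the paper: verify conditions \textbf{C1} and \textbf{C2} of Proposition~\ref{compat} with $A=0$, $B=1$, noting $\deg x P_k(x)=k+1$ and $P_k(x)\mid x P_k(x)$. The explicit coefficients $\alpha_{k,0}=\rho_{k+1}$, $\alpha_{k,1}=c_k/c_{k+1}$ are a correct bonus the paper omits.
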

 
\begin{proof}
Since $\deg x P_k(x) = k+1$ and $P_k(x) \,|\, x P_k(x)$ for all $k \ge 0$, this follows from Proposition \ref{compat}. 
\end{proof}

\begin{proposition}
\label{B=0_for_E}
If a factorial basis $\cB$ is $(A,B)$-compatible with $E$, then $\cB$ is also $(A,0)$-compatible with $E$.
\end{proposition}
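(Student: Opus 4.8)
The plan is to apply Proposition~\ref{compat} twice, using the single observation that the shift operator $E$ preserves degrees exactly. First I would note that for every $k \in \N$ we have
\[
\deg E P_k(x) = \deg P_k(x+1) = \deg P_k(x) = k,
\]
since replacing $x$ by $x+1$ does not alter the leading coefficient. In particular $\deg E P_k(x) \le k$, so condition~\textbf{C1} of Proposition~\ref{compat} is satisfied with the value $B = 0$.

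Next I would extract the divisibility condition from the hypothesis. Since $\cB$ is by assumption $(A,B)$-compatible with $E$, the necessity direction of Proposition~\ref{compat} gives $P_{k-A}(x) \mid E P_k(x)$ for all $k \ge A$; this is precisely condition~\textbf{C2}, and it refers only to $A$, not to $B$. Thus both \textbf{C1} (now with $B=0$) and \textbf{C2} (with the same $A$) hold, and the sufficiency direction of Proposition~\ref{compat} yields that $\cB$ is $(A,0)$-compatible with $E$.

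I do not expect any genuine obstacle in this argument. Its entire content is that a shift of the independent variable changes neither the degree nor the divisibility structure relevant to compatibility, so the degree bound in \textbf{C1} automatically tightens from $k+B$ to $k$. Equivalently, in the original expansion $E P_k(x) = \sum_{i=-A}^{B} \alpha_{k,i} P_{k+i}(x)$ any coefficient $\alpha_{k,i}$ with $i > 0$ would force $\deg E P_k(x) > k$, hence all such coefficients must vanish; the double application of Proposition~\ref{compat} is merely a clean packaging of this degree count.
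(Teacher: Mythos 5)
Your proof is correct and rests on the same single observation as the paper's: the shift operator preserves polynomial degrees, so the coefficients of $P_{k+i}(x)$ with $i>0$ in the expansion of $EP_k(x)$ must vanish. Routing this through Proposition~\ref{compat} is just a packaging choice; the final paragraph of your proposal is essentially the paper's one-line argument verbatim.
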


\begin{proof}
Since the shift operator $E$ preserves polynomial degrees, the coefficients of $P_j(x)$ with $j > k$ in the expansion of $E P_k(x)$ w.r.t.\ $\cB$ all vanish. 
\end{proof}

\begin{proposition}
\label{compatroots}
A factorial basis $\cB$ of $\K[x]$ having the root sequence $\rho$ 
is $(A,0)$-compatible with the shift operator $E$ if and only if for all $k \in \N$ the following inclusion of multisets is valid:
\begin{align}
\label{rhotest}
[\rho_1 + 1, \rho_2 + 1,  \dots, \rho_k+ 1]\ \subseteq\ [\rho_1, \rho_2, \dots, \rho_{k+A}].
\end{align}
\end{proposition}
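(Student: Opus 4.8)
The plan is to reduce $(A,0)$-compatibility with $E$ to a purely combinatorial condition on roots via Proposition~\ref{compat}, and then to translate the resulting polynomial divisibility into the multiset inclusion~(\ref{rhotest}). First I would invoke Proposition~\ref{compat} with $B=0$. Since the shift $E$ preserves degrees, we have $\deg E P_k(x)=\deg P_k(x)=k$, so condition \textbf{C1} holds automatically (this is the reasoning behind Proposition~\ref{B=0_for_E}). Hence $(A,0)$-compatibility of $\cB$ with $E$ is equivalent to condition \textbf{C2} alone, namely $P_{k-A}(x)\mid E P_k(x)$ for all $k\ge A$.

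Next I would make the roots explicit. By the factorization~(\ref{roots}) we have $P_k(x)=c_k\prod_{j=1}^{k}(x-\rho_j)$, whence
\[
E P_k(x)\ =\ P_k(x+1)\ =\ c_k\prod_{j=1}^{k}\bigl(x-(\rho_j-1)\bigr),
\]
so the multiset of roots of $E P_k(x)$ is $[\rho_1-1,\dots,\rho_k-1]$, while that of $P_{k-A}(x)$ is $[\rho_1,\dots,\rho_{k-A}]$. Since $\K[x]$ is a UFD whose monic irreducible factors of these completely split polynomials are exactly the linear ones, and the leading coefficients $c_k,c_{k-A}$ are nonzero, the divisibility $P_{k-A}(x)\mid E P_k(x)$ is equivalent to the inclusion of root multisets
\[
[\rho_1,\dots,\rho_{k-A}]\ \subseteq\ [\rho_1-1,\dots,\rho_k-1]
\qquad\text{for all } k\ge A .
\]

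Finally I would normalize the indexing. Adding $1$ to every element is a bijection of $\K$ and therefore preserves multiset inclusion, so the displayed inclusion is equivalent to $[\rho_1+1,\dots,\rho_{k-A}+1]\subseteq[\rho_1,\dots,\rho_k]$; substituting $k\mapsto k+A$ (so the range $k\ge A$ becomes $k\ge 0$, i.e.\ all $k\in\N$) yields precisely~(\ref{rhotest}). I expect the only real pitfall to be the sign bookkeeping: $E$ shifts the \emph{argument} by $+1$ and hence the \emph{roots} by $-1$, and one must add $1$ back at the end to recover the ``$+1$'' on the left-hand side of~(\ref{rhotest}); getting this direction backwards would produce the wrong inclusion. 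The divisibility-equals-multiset-inclusion step, by contrast, is routine once one observes that every $P_k$ splits into linear factors over $\K$.
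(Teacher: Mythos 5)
Your proposal is correct and follows essentially the same route as the paper's own proof: reduce to condition \textbf{C2} of Proposition~\ref{compat} (with \textbf{C1} automatic since $E$ preserves degrees), translate the divisibility $P_{k-A}(x)\mid P_k(x+1)$ into an inclusion of root multisets, then reindex and shift by $1$ to obtain~(\ref{rhotest}). The sign bookkeeping you flag is handled correctly.
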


\begin{proof}
We use Proposition \ref{compat} with $L=E$ and $B=0$. Since for every factorial basis we have $\deg E P_k(x) = \deg P_k(x) = k$, condition \textbf{C1} is always satisfied. Hence $(A,0)$-compatibility of $\cB$ with $E$ is equivalent to condition \textbf{C2} which requires that $P_{k-A}(x)$ divides $P_k(x+1)$ for all $k \ge A$. In terms of $\rho$ this is equivalent to
$
[\rho_1, \rho_2,  \dots, \rho_{k-A}]\ \subseteq\ [\rho_1 - 1, \rho_2 - 1, \dots,\rho_k - 1]
$
for all $k \ge A$, or
\begin{align*}
[\rho_1, \rho_2,  \dots, \rho_{k}]\ \subseteq\ [\rho_1 - 1, \rho_2 - 1, \dots,\rho_{k+A} - 1]
\end{align*}
for all $k \ge 0$.
By adding 1 to all the terms on both sides, this turns into (\ref{rhotest}). 
\end{proof}

\begin{example}
   
   \begin{itemize}
   \item Since $[\overbrace{1,1,\dots,1}^k]\ \not\subseteq\ [\overbrace{0,0,\dots,0}^{k+A}]$ for all $k \ge 1$ and $A\ge 0$, Proposition \ref{compatroots} implies that $\cal P$ is not compatible with $E$.
   \item Since $[1,2,3,\dots,k]\ \subseteq\ [0,1,2,\dots,k-1,k]$ for all $k \ge 0$, Proposition \ref{compatroots} implies that $\cal C$ is $(1,0)$-compatible with $E$.
   \end{itemize}
\end{example}

Compatibility of a factorial basis with the differentiation operator $D$ can also be characterized
in terms of its root sequence as shown in the next result.

\begin{proposition}
   \label{compat_der_roots}
   Let $\cB$ be a factorial basis with the root sequence $\rho$. Then $\cB$ is $(p,0)$-compatible with $D$ if 
   and only if, for all $n \in \N$ we have
   \[\{\rho_1,\ldots, \rho_n\} \subset \{\rho_{n+1},\ldots, \rho_{n+p}\}.\]
\end{proposition}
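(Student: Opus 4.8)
The plan is to invoke Proposition~\ref{compat} with $L = D$ and $B = 0$, reducing $(p,0)$-compatibility to the two conditions \textbf{C1} and \textbf{C2}. Since differentiation lowers degree by one, $\deg D P_k(x) = k-1 \le k$ for every $k$, so \textbf{C1} holds automatically and the whole content lies in \textbf{C2}, namely that $P_{k-p}(x) \mid D P_k(x) = P_k'(x)$ for all $k \ge p$. Writing $k = n+p$, this is equivalent to requiring
\[
P_n(x)\ \bigm|\ P_{n+p}'(x) \qquad \text{for all } n \in \N.
\]

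First I would record the elementary fact, valid because $\operatorname{char}\K = 0$, that differentiation lowers each root multiplicity by exactly one: if $\alpha$ is a root of $f \in \K[x]$ of multiplicity $m \ge 1$, then writing $f = (x-\alpha)^m g$ with $g(\alpha)\neq 0$ gives $f' = (x-\alpha)^{m-1}\bigl(m\,g + (x-\alpha)g'\bigr)$, whose second factor is $m\,g(\alpha)\neq 0$ at $\alpha$; hence $\alpha$ has multiplicity exactly $m-1$ in $f'$. For $\alpha\in\K$ let $\mu_n(\alpha)$ denote the multiplicity of $\alpha$ in the list $\rho_1,\dots,\rho_n$, which by (\ref{roots}) is precisely its multiplicity as a root of $P_n(x)$. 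Applying the fact to $f = P_{n+p}(x)$, every root $\alpha$ of $P_{n+p}(x)$ occurs in $P_{n+p}'(x)$ with multiplicity $\mu_{n+p}(\alpha)-1$.

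Next I would translate the divisibility into these multiplicities. As $P_n(x)$ splits into linear factors over $\K$ by (\ref{roots}), we have $P_n \mid P_{n+p}'$ iff $\mu_n(\alpha) \le \operatorname{mult}_\alpha(P_{n+p}')$ for every distinct root $\alpha$ of $P_n(x)$. For such $\alpha$ we have $\mu_n(\alpha)\ge 1$, hence $\mu_{n+p}(\alpha)\ge 1$ and $\operatorname{mult}_\alpha(P_{n+p}') = \mu_{n+p}(\alpha)-1$; writing $\mu_{n+p}(\alpha) = \mu_n(\alpha) + \nu(\alpha)$, where $\nu(\alpha)$ counts the occurrences of $\alpha$ among $\rho_{n+1},\dots,\rho_{n+p}$, the inequality $\mu_n(\alpha)\le \mu_{n+p}(\alpha)-1$ collapses to the single requirement $\nu(\alpha)\ge 1$. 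Thus $P_n \mid P_{n+p}'$ holds iff every distinct root of $P_n(x)$ appears at least once among $\rho_{n+1},\dots,\rho_{n+p}$, which is exactly the stated inclusion $\{\rho_1,\dots,\rho_n\}\subseteq\{\rho_{n+1},\dots,\rho_{n+p}\}$.

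The point to get right — and the place where this differs from the shift case of Proposition~\ref{compatroots} — is the cancellation of multiplicities: because differentiation drops every multiplicity by the same amount, a root of high multiplicity in $P_n$ needs only to reappear \emph{once} in $\rho_{n+1},\dots,\rho_{n+p}$, so the correct condition is a \emph{set} inclusion rather than the \emph{multiset} inclusion appearing for $E$. As a sanity check, the power basis $\cP$ (all $\rho_i = 0$) satisfies $\{0\}\subseteq\{0\}$ for every $n\ge 1$ and is indeed $(1,0)$-compatible with $D$, whereas for the binomial basis $\cC$ (with $\rho_i = i-1$) the two sides are disjoint once $n\ge 1$, matching the fact that $\binom{x}{k}$ has only simple roots and so $\binom{x}{k-1}\nmid\binom{x}{k}'$.
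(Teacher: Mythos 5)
Your proof is correct. Like the paper, you reduce $(p,0)$-compatibility with $D$ via Proposition~\ref{compat} to condition \textbf{C2}, i.e.\ to $P_n(x)\mid P_{n+p}'(x)$ for all $n\in\N$ (with \textbf{C1} automatic), and in both arguments the reason the answer is a \emph{set} rather than a \emph{multiset} inclusion is the same: in characteristic zero, differentiation lowers every root multiplicity by exactly one. The execution differs. The paper writes $P_{n+p}(x)=P_n(x)(x-\rho_{n+1})\cdots(x-\rho_{n+p})$, applies the product rule to reduce the divisibility to $P_n\mid P_n'\cdot(x-\rho_{n+1})\cdots(x-\rho_{n+p})$, cancels $g_n=\gcd(P_n,P_n')$, and uses $\gcd(A_n,B_n)=1$ to land on $A_n\mid(x-\rho_{n+1})\cdots(x-\rho_{n+p})$, where $A_n$ is the squarefree part of $P_n$ carrying exactly its distinct roots. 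You instead work root by root: since $P_n$ splits over $\K$, you translate $P_n\mid P_{n+p}'$ into the multiplicity inequalities $\mu_n(\alpha)\le\mu_{n+p}(\alpha)-1$ and observe that these collapse to requiring each distinct root of $P_n$ to occur at least once among $\rho_{n+1},\dots,\rho_{n+p}$. Your route avoids the product rule and the gcd decomposition and makes the cancellation of multiplicities completely explicit, which is arguably more transparent; the paper's route packages the same phenomenon into the single algebraic object $A_n=P_n/\gcd(P_n,P_n')$. Both are complete, and your closing remark contrasting the set inclusion here with the multiset inclusion of Proposition~\ref{compatroots} correctly identifies the one point where a careless adaptation of the shift argument would go wrong.
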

\begin{proof}
   Let $g_n(x) = \gcd(P_n(x), P_n'(x))$. It is well known that, if we write $P_n(x) = A_n(x)g_n(x)$ and 
   $P_n'(x) = B_n(x)g_n(x)$, we have that $A_n(x)$ is a polynomial with simple roots and such that $A_n(\rho_m) = 0$
   for all $m \leq n$ (i.e., it contains all the different roots up to and including $\rho_n$).

   By Proposition~\ref{compat}, condition \textbf{C2}, we have that $D$ is $(p,0)$-compatible with $\cB$
   if and only if $P_{n}(x)$ divides $P_{n+p}'(x)$ for all $n\geq 0$. Write 
   \[P_{n+p}(x) = P_{n}(x)(x-\rho_{n+1})\cdots (x-\rho_{n+p}).\]
   Using the product rule, we have that $P_{n}(x)$ divides $P_{n+p}'(x)$ if and only if 
   \begin{equation}\label{equ:cdr:1}P_{n}(x)\ |\ P_{n}'(x) (x-\rho_{n+1})\cdots (x-\rho_{n+p}).\end{equation}
   Using the definition of $A_n(x)$ and $B_n(x)$ described above, we have that~\eqref{equ:cdr:1} holds if and only if:
   \[A_{n}(x)\ |\ B_{n}(x) (x-\rho_{n+1})\cdots (x-\rho_{n+p}),\]
   and using the fact that $\gcd(A_{n}(x), B_{n}(x)) = 1$, this is equivalent to:
   \[A_{n}(x)\ |\ (x-\rho_{n+1})\cdots (x-\rho_{n+p}).\]
   By the definition of $A_n(x)$, this is equivalent to
   \[\{\rho_1,\ldots, \rho_n\} \subset \{\rho_{n+1},\ldots, \rho_{n+p}\}.\] 
\end{proof}

\begin{corollary}
   \label{cor:compat_der_roots:1}
   Let $\cB$ be a factorial basis with the root sequence $\rho$. If $\cB$ is $(p,0)$-compatible with $D$, then
   the number of distinct roots in $\rho$ is at most $p$.
\end{corollary}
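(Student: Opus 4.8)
The plan is to derive the bound directly from Proposition~\ref{compat_der_roots}, which already packages the essential structural content. First I would invoke that proposition: since $\cB$ is $(p,0)$-compatible with $D$, for every $n \in \N$ we have the inclusion
\[
\{\rho_1,\ldots,\rho_n\}\ \subseteq\ \{\rho_{n+1},\ldots,\rho_{n+p}\}.
\]

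Next I would observe that the right-hand side $\{\rho_{n+1},\ldots,\rho_{n+p}\}$ is the set of values of a window of exactly $p$ consecutive terms of $\rho$, and therefore contains at most $p$ distinct elements. Combining this with the inclusion above, I conclude that for every $n$ the number of distinct values appearing among $\rho_1,\ldots,\rho_n$ is at most $p$. This is a bound uniform in $n$, which is the point.

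Finally, to pass from this uniform bound on finite initial segments to a bound on the whole sequence, I would note that the sets $\{\rho_1,\ldots,\rho_n\}$ of distinct roots form an increasing chain as $n$ grows, whose union is precisely the set of all distinct roots occurring in $\rho$. If that union had at least $p+1$ elements, then all of those $p+1$ distinct values would already occur within some finite initial segment $\rho_1,\ldots,\rho_N$, forcing $\{\rho_1,\ldots,\rho_N\}$ to have at least $p+1$ elements and contradicting the bound just established. Hence the number of distinct roots in $\rho$ is at most $p$.

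There is essentially no obstacle here, since the entire combinatorial content is inherited from Proposition~\ref{compat_der_roots}; the only mildly delicate point is the passage from a bound holding on every finite initial segment to a bound on the infinite sequence, which is handled cleanly by the monotonicity-of-union argument above.
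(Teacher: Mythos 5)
Your proposal is correct and follows essentially the same route as the paper: both deduce from Proposition~\ref{compat_der_roots} that each initial segment $\{\rho_1,\ldots,\rho_n\}$ is contained in a window of $p$ consecutive terms and hence has at most $p$ distinct elements. You merely spell out the (routine) passage from the uniform bound on finite initial segments to the whole sequence, which the paper leaves implicit.
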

\begin{proof}
   Using Proposition~\ref{compat_der_roots}, if $\cB$ is $(p,0)$-compatible, then the inclusion of roots
   implies that the distinct roots of $P_n(x)$ are at most $p$. 
\end{proof}

\begin{corollary}
   \label{cor:compat_der_roots:2}
   Let $\cB$ be a factorial basis whose root sequence $\rho$ is periodic with period length $p$. 
   Then $\cB$ is $(p,0)$-compatible with $D$. Conversely, if $\cB$ is $(p,0)$-compatible with 
   $D$ and $\cB$ has $p$ different roots, then there is $n_0 \in \N$ such that 
   $\rho_{n+p} = \rho_n$ for all $n \geq n_0$.
\end{corollary}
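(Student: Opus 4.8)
The plan is to prove Corollary~\ref{cor:compat_der_roots:2} by applying Proposition~\ref{compat_der_roots} in both directions, treating the two implications separately since they require rather different handling of the periodicity hypothesis.

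\textbf{Forward direction.} First I would assume that $\rho$ is periodic with period length $p$, meaning $\rho_{n+p} = \rho_n$ for all $n \geq 1$. The goal is to verify the root-inclusion condition $\{\rho_1,\ldots,\rho_n\} \subset \{\rho_{n+1},\ldots,\rho_{n+p}\}$ for every $n$, which by Proposition~\ref{compat_der_roots} is equivalent to $(p,0)$-compatibility with $D$. The key observation is that by periodicity the block $\{\rho_{n+1},\ldots,\rho_{n+p}\}$ of $p$ consecutive roots is, as a set, exactly equal to the full set of distinct values $\{\rho_1,\ldots,\rho_p\}$ occurring in one period; indeed every residue class modulo $p$ is represented exactly once among $p$ consecutive indices. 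Since $\{\rho_1,\ldots,\rho_n\}$ is trivially a subset of this full set of period-values, the inclusion holds for all $n$, and compatibility follows immediately.

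\textbf{Converse direction.} Here I would assume $(p,0)$-compatibility together with the hypothesis that $\cB$ has exactly $p$ distinct roots, and derive eventual periodicity. By Proposition~\ref{compat_der_roots} the inclusion $\{\rho_1,\ldots,\rho_n\} \subset \{\rho_{n+1},\ldots,\rho_{n+p}\}$ holds for all $n$. Let $n_0$ be an index by which all $p$ distinct root values have appeared among $\rho_1,\ldots,\rho_{n_0}$; such $n_0$ exists because there are only $p$ distinct values. For $n \geq n_0$ the left-hand set $\{\rho_1,\ldots,\rho_n\}$ already equals the full set of $p$ distinct values, so the inclusion forces the $p$-element block $\{\rho_{n+1},\ldots,\rho_{n+p}\}$ to contain all $p$ distinct values; since it has at most $p$ elements, it must consist of each distinct value exactly once. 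Thus for every $n \geq n_0$, each of the $p$ distinct roots appears exactly once in any window of $p$ consecutive indices, which I would then argue pins down the sequence to be periodic: comparing the windows starting at $n$ and at $n+1$ shows that $\rho_{n+p}$ must equal $\rho_n$ (the value that ``leaves'' the window on the left must re-enter on the right, as it is the unique missing value in the shifted window). This yields $\rho_{n+p} = \rho_n$ for all $n \geq n_0$.

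\textbf{Main obstacle.} I expect the delicate point to be the final inference in the converse, namely passing from ``each distinct value occurs exactly once in every length-$p$ window'' to the clean statement $\rho_{n+p}=\rho_n$. The set-theoretic inclusion only controls which \emph{values} appear, not multiplicities within the sequence, so one must carefully use that the window has exactly $p$ slots and exactly $p$ distinct values to conclude each appears with multiplicity one; the window-shift comparison then has to be phrased as a bijection argument rather than a naive equality of sets. This is also where the hypothesis ``$\cB$ has $p$ different roots'' is essential: without it the window could repeat values and periodicity would fail, which is precisely why the corollary states eventual rather than exact periodicity and conditions on the number of distinct roots.
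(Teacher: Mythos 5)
Your proposal is correct and follows essentially the same route as the paper: the forward direction uses periodicity to identify each length-$p$ window with $\{\rho_1,\ldots,\rho_p\}$, and the converse takes $n_0$ to be the point by which all $p$ distinct values have appeared, deduces from the inclusion of Proposition~\ref{compat_der_roots} that every subsequent window of $p$ consecutive roots contains each value exactly once, and concludes $\rho_{n+p}=\rho_n$ by a sliding-window comparison. In fact you spell out the final window-shift step, which the paper leaves as ``we can see that''; just note the harmless off-by-one (comparing the windows at $n-1$ and $n$ gives $\rho_{n+p}=\rho_n$ for $n\ge n_0+1$, which suffices since only the existence of some $n_0$ is claimed).
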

\begin{proof}
   If the root sequence of $\cB$ is periodic with period length $p$, then $\rho_{n+p} = \rho_n$, so in
   particular, 
   \[\{\rho_1,\ldots, \rho_n\} \subseteq \{\rho_1,\ldots, \rho_p\} = \{\rho_{n+1},\ldots,\rho_{n+p}\},\]
   hence, by Proposition~\ref{compat_der_roots}, $\cB$ is $(p,0)$-compatible with $D$.
   
   On the other hand, assume that $\rho$ has $p$ different elements and let $n_0$ be the least positive integer
   such that $|\{\rho_1,\ldots,\rho_{n_0}\}| = p$. Since $\cB$ is $(p,0)$-compatible with $D$, we have that,
   for $n \geq n_0$, all the elements $\rho_{n+1},\ldots,\rho_{n+p}$ are different. We can see that from this point on 
   $\rho_n = \rho_{n+p}$ for all $n \geq n_0$. 
\end{proof}

\begin{example}
   
   \begin{itemize}
      \item Since $\cP$ has as root sequence $\langle 0, 0, 0, \ldots\rangle$, which is periodic with period length $1$, it is $(1,0)$-compatible with $D$ by Corollary~\ref{cor:compat_der_roots:2}.
      \item Since the root sequence of $\cC$ contains infinitely many different elements, by Corollary~\ref{cor:compat_der_roots:1}, $\cC$ is not compatible with $D$.
   \end{itemize}
\end{example}

Let $\cB = \langle P_k(x) \rangle_{k=0}^{\infty}$ be a factorial basis, and let $\ell_k : \K[x] \rightarrow \K$ for $k \in \N$ be linear functionals such that $\ell_k(P_m(x))=\delta_{k,m}$ for all $k, m \in \N$ (i.e.,  $\ell_k(p(x))$ is the coefficient of $P_k(x)$ in the expansion of $p(x) \in \K[x]$ w.r.t.\ $\cB$). Property {\bf P2} implies that $\ell_k(P_j(x) P_m(x)) = 0$ when $k < \max \{j,m\}$, hence $\K[x]$ naturally embeds into the \textbf{\textit{algebra $\KK$ of formal polynomial series}} of the form 
\begin{equation}
\label{series}
y(x) = \sum_{k=0}^{\infty} c_k P_k(x) \qquad (c_k\in \K),
\end{equation}
with multiplication defined by 
\[
\left(\sum_{k=0}^{\infty} c_k P_k(x)\right)
\left(\sum_{k=0}^{\infty} d_k P_k(x)\right)
\ =\ \sum_{k=0}^{\infty} e_k P_k(x),
\]
\[
e_k\ = \sum_{\max\{i,j\} \le k \le i+j} c_i d_j \,\ell_k(P_i(x) P_j(x)).
\]

\begin{notation}
\label{notL_B}
\begin{itemize}
\item \label{def4} For any factorial basis $\cB$ of $\K[x]$,  let $\cL$ denote the set of all operators $L \in {\cal L}_{\K[x]}$ such that  $\cB$ is compatible with $L$.
\item Let $\cE$ denote the $\K$-algebra of recurrence operators of the form $L' = \sum_{i=-s}^r c_i(k) S^i$ with $r,s\in\N$ and $c_i: \Z \rightarrow \K$ for $-s \le i \le r$, acting on the $\K$-algebra of all two-way infinite sequences $c \in \K^\Z$ by $S^i c(k) = c(k+i)$ for all $k,i \in \Z$.
\item For any factorial basis $\cB$ of $\K[x]$, let $\sigma_\cB$ denote the map $\K[[\cB]] \to \K^{\Z}$ assigning to $y(x) = \sum_{k=0}^\infty c_k P_k(x) \in \K[[\cB]]$ its coefficient sequence $c = \langle c_k\rangle_{k\in\N}$ extended to $\langle c_k\rangle_{k\in\Z}$ by taking $c_k = 0$ whenever $k<0$. We will omit the subscript $\cB$ when it is clear from the context. 
\end{itemize}
\end{notation}

\begin{definition}
\label{Lext}
Let $\cB = \langle P_k(x) \rangle_{k=0}^{\infty}$ be a factorial basis of $\K[x]$, $(A,B)$-compatible with $L \in {\cal L}_{\K[x]}$. Extend $L$ to an operator acting on $\KK$ by setting
\begin{eqnarray}
L \sum_{k=0}^{\infty} c_k P_k(x)  &:=&   \sum_{k=0}^{\infty} c_k L P_k(x) 
\ =\ \sum_{k=0}^{\infty} c_k \sum_{i=-A}^B \alpha_{k,i} P_{k+i}(x) \nonumber \\
&=& \sum_{k=0}^{\infty}\left( \sum_{i=-B}^A \alpha_{k+i,-i} c_{k+i}\right) P_k(x) \label{Ly}
\ =\ \sum_{k=0}^{\infty}(\cR L)c_k\, P_k(x)\qquad
\end{eqnarray}
where
\begin{equation}
\label{RL}
\cR L\ := \sum_{i=-B}^A \alpha_{k+i,-i} S^i\ \in\ \cE
\end{equation}
is the operator, \emph{associated to} $L$ in basis $\cB$,
with $A, B, \alpha_{k,i}$ as in $(\ref{main})$, and $P_k(x) = c_{k-i} = 0$ whenever $k < 0$ or $i > k$. 
\end{definition}

\begin{theorem}
\label{RBL}
Let  $\cB$, $L$ and $\cR L$ be as in Definition \ref{Lext}, and let $y, p \in \KK$. 
\begin{enumerate}
\item $\sigma_\cB(Ly)\ =\ (\cR L) \sigma_\cB y$,
\item $Ly=p\ \Longleftrightarrow\ (\cR L)\sigma_\cB y = \sigma_\cB p$.
\end{enumerate}
\end{theorem}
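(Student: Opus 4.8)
The plan is to read off part~1 directly from the computation that already defines the extended action of $L$ on $\KK$, and then to deduce part~2 from part~1 together with the injectivity of $\sigma_\cB$.

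For part~1, I would start from Definition~\ref{Lext}: by construction $Ly = \sum_{k=0}^\infty c_k\, LP_k(x)$, and substituting the compatibility relation (\ref{main}) produces the double sum $\sum_{k=0}^\infty c_k \sum_{i=-A}^B \alpha_{k,i}\, P_{k+i}(x)$, where $c = \sigma_\cB y$. The key step is to collect, for each fixed $m \ge 0$, the coefficient of $P_m(x)$: the contributing pairs are exactly those with $k+i=m$, so that coefficient equals $\sum_{i=-A}^B \alpha_{m-i,i}\, c_{m-i}$; the reindexing $i \mapsto -i$ turns this into $\sum_{i=-B}^A \alpha_{m+i,-i}\, c_{m+i}$, which is precisely $\bigl((\cR L)c\bigr)(m)$ by the definition (\ref{RL}) of the associated operator. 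Applying $\sigma_\cB$ then yields the claimed identity of sequences.

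The one point that needs genuine care -- and the step I expect to be the main (if modest) obstacle -- is to verify the identity as an equality in $\K^\Z$, that is, also at and below the index $0$, rather than merely for large $m$. This amounts to making the standing conventions explicit: $c_k = 0$ for $k<0$ (built into $\sigma_\cB$) and $\alpha_{k,i} = 0$ whenever $k+i<0$, since then $P_{k+i} = 0$ and such an $\alpha_{k,i}$ is merely the coefficient of a vanishing basis element. With these conventions the reindexed sum for $m \ge 0$ loses no terms when the constraint $m-i \ge 0$ is dropped, because any extra summand carries a factor $c_k$ with $k<0$; and for $m<0$ every summand $\alpha_{m+i,-i}\, c_{m+i}$ vanishes (if $m+i<0$ then $c_{m+i}=0$, while if $m+i \ge 0$ then $(m+i)+(-i)=m<0$ forces $\alpha_{m+i,-i}=0$), matching $\sigma_\cB(Ly)(m) = 0$. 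Hence $\sigma_\cB(Ly) = (\cR L)\,\sigma_\cB y$ throughout $\Z$.

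Part~2 is then formal. For the forward implication, $Ly = p$ gives $\sigma_\cB(Ly) = \sigma_\cB p$, which by part~1 reads $(\cR L)\,\sigma_\cB y = \sigma_\cB p$. For the converse, part~1 rewrites the hypothesis $(\cR L)\,\sigma_\cB y = \sigma_\cB p$ as $\sigma_\cB(Ly) = \sigma_\cB p$; since $\cB$ is a basis of $\K[x]$ (property~\textbf{P1}), the expansion of any element of $\KK$ in $\cB$ is unique, so $\sigma_\cB$ is injective, and therefore $Ly = p$. This injectivity is the only ingredient beyond part~1, and it is immediate from the definition of a factorial basis.
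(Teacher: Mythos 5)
Your proposal is correct and follows essentially the same route as the paper: part 1 is exactly the coefficient-collection computation recorded in equation (\ref{Ly}) of Definition \ref{Lext} (which the paper simply cites), and part 2 is the same formal equivalence via injectivity of $\sigma_\cB$. Your extra care with the conventions at negative indices and the explicit mention of injectivity only spell out what the paper leaves implicit.
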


\begin{proof}
Write $y(x) = \sum_{k=0}^{\infty} c_k P_k(x)$. Then $\sigma_\cB y = c$ and $\eqref{Ly}$ imply that $\sigma_\cB(Ly)\ =\ (\cR L) c$, hence $\sigma_\cB(Ly)\ =\ (\cR L) \sigma_\cB y$, proving item 1. Furthermore,
\begin{align*}
L y &= p\ \ \ \Longleftrightarrow\ \ \  \sigma_\cB(L y)\, =\, \sigma_\cB p\ \ \ \Longleftrightarrow\ \ \ (\cR L)\sigma_\cB y\, =\, \sigma_\cB p,  
\end{align*}
proving item 2. 
\end{proof}

\begin{example}
\label{subst}
Using \eqref{RL} we read off from the $\alpha_{k,i}$ given in Example \ref{basis} that
\begin{eqnarray*}
{\cal R}_{\cP} D &=& (k+1)S, \\
{\cal R}_{\cC} E\, &=& S+1, \\
{\cal R}_{\cP} Q &=& q^k,
\end{eqnarray*}
while  $x\, x^k = x^{k+1}$ and $x\, {x\choose k} = (k+1){x\choose k+1} + k {x\choose k}$ imply by (\ref{main}) and (\ref{RL}) that
\begin{eqnarray*}
{\cal R}_{\cP} X &=& S^{-1}, \\
{\cal R}_{\cC} X\,\! &=& k(1+S^{-1}).
\end{eqnarray*}
\end{example}

\begin{theorem}
\label{iso}
{ \cite[Prop.\,2 \& Thm.\,1]{APR}}
$\cL$ is a $\K$-algebra, and the transformation
$\cR: \cL \rightarrow \cE$,
defined in { (\ref{RL})}, is an isomorphism of $\K$-algebras.
\end{theorem}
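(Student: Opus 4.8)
The statement packages three assertions: that $\cL$ is closed under the algebra operations of ${\cal L}_{\K[x]}$, that $\cR$ respects them, and that $\cR$ is bijective. I would begin with the algebra structure of $\cL$, using Proposition~\ref{compat}. Closure under $\K$-linear combinations is immediate: if $L_1$ is $(A_1,B_1)$-compatible and $L_2$ is $(A_2,B_2)$-compatible, then with $A=\max(A_1,A_2)$ and $B=\max(B_1,B_2)$ both \textbf{C1} and \textbf{C2} hold for $\lambda_1L_1+\lambda_2L_2$ (for \textbf{C2} one passes from $P_{k-A_j}\mid L_jP_k$ to $P_{k-A}\mid L_jP_k$ via \textbf{P2}). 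For composition I would check that $L_1L_2$ is $(A_1+A_2,B_1+B_2)$-compatible: since $\deg L_2P_k\le k+B_2$, applying $L_1$ raises the degree by at most a further $B_1$, giving \textbf{C1}; and since every $P_{k+j}$ occurring in $L_2P_k$ has index $\ge k-A_2$, the factor $P_{k-A_1-A_2}$ divides each $L_1P_{k+j}$ by \textbf{C2} and \textbf{P2}, giving \textbf{C2}. As the identity is $(0,0)$-compatible, $\cL$ is a unital $\K$-subalgebra of ${\cal L}_{\K[x]}$.

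Next I would show $\cR$ is a homomorphism. $\K$-linearity can be read directly off~(\ref{RL}), since the expansion coefficients $\alpha_{k,i}$ of $LP_k$ depend $\K$-linearly on $L$ while the shifts $S^i$ are fixed; and $\cR(\mathrm{id})=1$ is immediate. The substantive point is multiplicativity. Conceptually it is forced by Theorem~\ref{RBL}(1): the identity $\sigma_\cB(Ly)=(\cR L)\sigma_\cB y$ says that $\cR$ turns composition of operators on $\KK$ into composition of recurrence operators on coefficient sequences, so $\cR(L_1L_2)$ and $(\cR L_1)(\cR L_2)$ agree on every sequence of the form $\sigma_\cB y$. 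I would then upgrade this to an equality in $\cE$ by computing both sides from~(\ref{main}): writing $L_1L_2P_k=\sum_t\beta_{k,t}P_{k+t}$ with $\beta_{k,t}=\sum_j\alpha^{(2)}_{k,j}\,\alpha^{(1)}_{k+j,\,t-j}$ and substituting into~(\ref{RL}), a single reindexing ($i=t+j$) shows that the coefficient of $S^t$ in $\cR(L_1L_2)=\sum_t\beta_{k+t,-t}S^t$ equals the coefficient of $S^t$ in the composition of $\cR L_1=\sum_i\alpha^{(1)}_{k+i,-i}S^i$ with $\cR L_2=\sum_j\alpha^{(2)}_{k+j,-j}S^j$.

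For bijectivity I would argue as follows. Injectivity is direct: if $\cR L=0$, then every coefficient function $k\mapsto\alpha_{k+i,-i}$ vanishes, so all $\alpha_{k,i}=0$, whence $LP_k=0$ for every $k$ and thus $L=0$ on the basis $\cB$. For surjectivity, given $L'=\sum_{i=-s}^r c_i(k)S^i\in\cE$, I would set $A=r$, $B=s$ and prescribe $\alpha_{k,\ell}:=c_{-\ell}(k+\ell)$ for $-r\le\ell\le s$, defining $LP_k:=\sum_{\ell=-r}^{s}\alpha_{k,\ell}P_{k+\ell}$ (with $P_j=0$ for $j<0$). Since $\cB$ is a vector-space basis, this extends uniquely to $L\in{\cal L}_{\K[x]}$; it is $(r,s)$-compatible by construction, and evaluating~(\ref{RL}) gives $\alpha_{k+i,-i}=c_i(k)$, i.e.\ $\cR L=L'$.

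The step I expect to be the genuine obstacle is multiplicativity, precisely because $\sigma_\cB$ maps $\KK$ bijectively only onto the sequences supported on $\N$, not onto all of $\K^\Z$. Hence the slick intertwining argument shows $\cR(L_1L_2)$ and $(\cR L_1)(\cR L_2)$ agree only on those sequences, which does not by itself pin down their coefficient functions at negative indices of $k$. The explicit computation closes this gap, since it exhibits the two coefficient functions as literally the same expression in the $\alpha^{(1)},\alpha^{(2)}$; its only real content is the reindexing $i=t+j$, the rest being bookkeeping.
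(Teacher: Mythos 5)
The paper does not prove this statement at all: it is quoted from \cite[Prop.\,2 \& Thm.\,1]{APR}, so there is no in-paper argument to compare yours against. Judged on its own, your proof is correct and complete. The subalgebra step (linear combinations via \textbf{C1}/\textbf{C2} with $A=\max$, $B=\max$, and composition with $(A_1+A_2,B_1+B_2)$) is right, the reindexing $i=t+j$ does verify that the coefficient of $S^t$ in $\cR(L_1L_2)=\sum_t\beta_{k+t,-t}S^t$ equals $\sum_i\alpha^{(1)}_{k+i,-i}\alpha^{(2)}_{k+t,i-t}$, and your injectivity and surjectivity arguments work. Your closing remark is a genuinely good catch: since $\sigma_\cB$ only produces sequences supported on $\N$, the intertwining identity of Theorem~\ref{RBL}(1) determines $c_i(k)$ only for $k\ge -i$, so multiplicativity really does require the explicit coefficient computation rather than the slick conceptual argument. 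The one loose end (inherited from the paper's Definition~\ref{Lext}, not introduced by you) is that the $\alpha_{k,i}$ with $k+i<0$ are not determined by $L$, so one must fix a convention (e.g.\ $\alpha_{k,i}=0$ there, or identify elements of $\cE$ agreeing on $\N$-supported sequences) for $\cR$ to be well defined and injective in the literal sense; a sentence acknowledging this would make your write-up airtight.
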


Let $L \in \cL$ and $p\in \K[[\cB]]$ where $\cB = \langle P_k(x)\rangle_{k=0}^\infty$ is a factorial basis of $\K[x]$. Note that Theorem \ref{RBL} opens the way to finding solutions $y\in \K[[\cB]]$ of the equation $L y = p$ by the following three-step procedure:
\smallskip
\begin{center}
\large
\textbf{Procedure}\footnote{by a  \emph{procedure} we mean a \emph{high-level algorithm} where not all steps are fully specified yet} \textsc{DefiniteSumSols}
\end{center}
\label{defsumsols}

\begin{enumerate}
\item Compute $L' = \cR L \in \cE$.
\item Solve $L' c = \sigma_{\cB} p$ for the unknown $c \in \K^{\Z}$ with $c_k = 0$ for $k<0$.
\item Return $y(x) = \sum_{k=0}^\infty c_k P_k(x)$.
\end{enumerate}

\smallskip
As our goal is finding definite-sum solutions of linear recurrence equations, we henceforth limit our 
attention to linear \emph{\textbf{recurrence operators}} $L \in \K[x]\langle E\rangle$ and their associated 
operators $L' = \cR L\in \cE$ with respect to various factorial bases $\cB$ compatible with the shift 
operator $E$ (\emph{\textbf{shift-compatible bases}}, for short), and polynomial right-hand sides 
$p\in\K[x]$. In this case, we can use Definitions \ref{def:compat} and \ref{Lext}, Corollary \ref{cor_compat} and 
Proposition \ref{B=0_for_E} to elaborate step 1 of procedure \textsc{DefiniteSumSols} as follows:

\smallskip
\begin{center}
\label{alg:AssocOp}
\large
\textbf{Procedure} \textsc{AssociatedOp}
\label{AssOp}
\end{center}

\smallskip
\textsc{Input:} \verb+   + $L \in \K[x]\langle E\rangle$; 

\verb+          + a factorial basis $\cB = \langle P_k(x)\rangle_{k=0}^\infty$,  $(A,0)$-compatible with $E$

\medskip
\textsc{Output:} \verb+ + $L' = \cR L \in \cE$

\vskip 1pc
\begin{enumerate}
\item Using linear algebra, compute $\alpha_{k,-A},\alpha_{k,-A+1},\ldots, \alpha_{k,0} \in \K$ such that
   \begin{center}$P_k(x+1)\ = \ \sum_{i=-A}^0 \alpha_{k,i} P_{k+i}(x)$\  for all $k\in\N$.\end{center}
Let $E' = \sum_{i=0}^A \alpha_{k+i,-i} S^i\, \in\, \cE.$
\item Using linear algebra, compute $\beta_{k,0},\beta_{k,1} \in \K$ such that
   \begin{center}$x P_k(x)\ = \ \beta_{k,0}P_k(x) + \beta_{k,1}P_{k+1}(x)$.\end{center}
Let $X' = \beta_{k-1,1}S^{-1} + \beta_{k,0}\, \in\, \cE$.
\item Return the operator $L' \in \cE$, obtained from $L$ by substituting $E'$ for $E$ and $X'$ for $x$. 
\end{enumerate}

\section{The binomial transform and quasi-triangular bases}
\label{sec:BinTransf}

In the rest of the paper \emph{\textbf{we occasionally use $n$ instead of $x$}} to denote the independent variable of basis polynomials as well as of recurrence equations resp.\ operators. In particular, the shift operator $E$ acts both by $Ex = x+1$ and by $En = n+1$. 

In order for a formal-series solution $y(n) = \sum_{k=0}^\infty c_k P_k(n)$ of an equation $Ly=p$ obtained by procedure \textsc{DefiniteSumSols} given at the end of Section \ref{sec:FPS} to be a \textbf{\emph{definite-sum solution of our original equation}} where $L \in \K[x]\langle E\rangle$ and $y,p \in \K^\N$, we need to impose some additional requirements on the basis $\cB$. One obvious such requirement (satisfied, e.g., by the binomial-coefficient basis $\cC=\langle\binom{x}{k}\rangle_{k=0}^\infty$) is that it is \emph{locally finite}, meaning that for each $n\in\N$, there is an $f(n)\in\N$ such that $P_k(n)=0$ for all $k > f(n)$. If this is the case, we have $y(n) = \sum_{k=0}^{f(n)} c_k P_k(n) \in \K$, hence $y\in\K^\N$. Another desirable property of $\cC$ is its \emph{invertibility}, meaning that for each $a\in\K^\N$ there exists $b\in\K^\N$ such that $a_n = \sum_{k=0}^n b_k \binom{n}{k}$. In this section, we give some examples of computing $\cR L$ when $\cB = \cC$, and define the class of \emph{quasi-triangular bases} which are locally finite and invertible.


It follows from Example \ref{subst} and Theorem \ref{iso} that every linear recurrence operator $L \in \K[x]\langle E\rangle$ is compatible with the binomial-coefficient basis $\cC \ =\ \left\langle {x\choose k}\right\rangle_{k=0}^{\infty}$. To compute the associated operator $L' = {\cal R}_{\cC}L \in \cE$, we apply the substitution
\[
\begin{array}{lll}
E & \mapsto & S + 1, \\
x & \mapsto & k(1 + S^{-1})
\end{array}
\]
to all terms of $L \in \K[x]\langle E\rangle$. Clearly, $L' \in \K[k]\langle S, S^{-1}\rangle$,
and every $h \in \ker L'$ gives rise to a solution $y_n = \sum_{k=0}^\infty \binom{n}{k} c_k$ of $Ly = 0$. 

\begin{example}
\label{bigex}
Here we list some operators  $L \in \K[n]\langle E\rangle$, their associated operators $L' = {\cal R}_{\cC}L \in \cE$ with respect to the binomial-coefficient basis $\cC$, and some of the elements of their kernels. 
\begin{enumerate}
\item $L = E - c$ where $c \in \K^*$: Here $L' = S - (c - 1)$, and
\[
y_n\ =\ \sum_{k=0}^\infty \binom{n}{k} (c - 1)^k\ =\ c^n
\]
is indeed a solution of $Ly = 0$.

\item $L = E^2 - 2 E + 1$: Here $L' = S^2$, and by Theorem \ref{RBL}, any $c \in \ker L'$ satisfies $c_{n+2} = 0$ for all $n \ge 0$, or equivalently, $c_n = 0$ for all $n \ge 2$. Hence
\[
y_n\ =\ \sum_{k=0}^\infty \binom{n}{k} c_k\ =\ c_0 + c_1 n
\]
is indeed a solution of $Ly = 0$.

\item $L = E^2 - E - 1$: Here $L' = S^2 + S - 1$, and any $h \in\ker L'$ is of the form $c_n = (-1)^n (C_1 F_n + C_2 F_{n+1})$ where $C_1, C_2 \in \K$ and $F = \langle 0, 1, 1, 2, \ldots\rangle$ is the sequence of Fibonacci numbers. Hence every $y \in \ker L$ is of the form
\[
y_n\ =\ \sum_{k=0}^\infty \binom{n}{k}  (-1)^k (C_1 F_k + C_2 F_{k+1}).
\]
In particular, by setting $y_0 = F_0$ and $y_1 = F_1$, we find the identity
\[
F_n\ =\ \sum_{k=0}^n \binom{n}{k}  (-1)^{k+1} F_k.
\]

\item $L = E - (n+1)$: Here $L' = S - n - n S^{-1}$, and the equation to solve is
\[
c_{n+1} - n c_n - n c_{n-1}\ =\ 0\quad \text{ for } n\ge 0,
\]
which yields $c_1 = 0$ as well as
\[c_n = (n-1)(c_{n-1} + c_{n-2}) \quad \text{ for } n \ge 2.\]
The general solution of the latter equation is of the form 
\[
c_n\ =\ n! \left(C_1 \sum_{k=0}^n (-1)^k/k! + C_2\right)
\]
where $C_1, C_2 \in \K$ (cf.\ \cite[Example 8.6.1]{A=B}). Now $c_1 = 0$ implies $C_2 = 0$, hence every $y \in \ker L$ is of the form
\[
y_n\ =\ C \sum_{k=0}^\infty \binom{n}{k} k!\sum_{j=0}^k\frac{ (-1)^j}{j!}
\]
for some $C \in \K$. In particular, by setting $y_0 = 0! = 1 = C$, we find the identity
\[
n!\ =\ \sum_{k=0}^n \binom{n}{k} k! \sum_{j=0}^k\frac{ (-1)^j}{j!}
\]
or equivalently,
\[
\sum_{k=0}^n \frac{1}{k!} \sum_{j=0}^{n-k}\frac{ (-1)^j}{j!}\ =\ 1.
\]

\item $L =  E^3 - ( n^2 + 6 n + 10)E^2 + (n + 2) (2 n + 5) E - (n + 1) (n + 2)$: Unlike in the preceding four cases, the equation $L y = 0$ has no nonzero Liouvillian solutions 
(i.e., solutions which are interlacings of d'Alembertian sequences: cf.~\cite[Corollary~15.2]{Re12} or \cite[Theorem~12]{PZ13}). 
Here $L' =  S^3 - (n^2 + 6 n + 7)S^2 - (2 n^2 + 8 n + 7) S - (n + 1)^2$, and equation $L' h = 0$ has a hypergeometric solution $h_n = n!^2$. So 
\[
y_n\ =\  \sum_{k=0}^\infty \binom{n}{k} k!^2\ =\  \sum_{k=0}^n \binom{n}{k} k!^2
\]
is a non-Liouvillian definite-sum solution of equation $L y = 0$. 

\end{enumerate}
\end{example}

\begin{remark}
Since $\cRc(n^{\underline{i}})=k^{\underline{i}} \sum_{j=0}^i \binom{i}{j} S^{-j}$ (as can be easily seen by induction on $i$), every negative power $S^{-j}$ in $L'$ is multiplied by $k^{\underline{i}}$ for some $i \ge j$. So all terms of $L'$ containing $ S^{-j}$ vanish for $k = 0, 1, \dots, j-1$ (cf.~the term $k S^{-1}$, renamed as $n S^{-1}$ in $L'$ of item 4 in Example \ref{bigex}).
\end{remark}

Note that \emph{any} sequence $a \in \K^\N$ can be represented in the form
\begin{equation}
\label{bintransf}
a_n\ =\ \sum_{k=0}^\infty \binom{n}{k} b_k
\quad \text{ for all } n \in \N
\end{equation}
where $b \in \K^\N$ satisfies
\[
b_n\ =\ \sum_{k=0}^\infty (-1)^{n-k}\binom{n}{k} a_k
\quad \text{ for all } n \in \N. 
\]
This is because the infinite matrix $B=\left[\binom{n}{k}\right]_{n,k=0}^\infty$ of the system of linear equations (\ref{bintransf}) for the unknown $b_0, b_1, b_2, \dots$ is lower triangular with unit diagonal, so it is invertible, and it is easy to see that its inverse is
$B^{-1}\ =\ \left[(-1)^{n-k}\binom{n}{k}\right]_{n,k=0}^\infty$.
Some authors call the sequence $a$ the \emph{binomial transform} of $b$, and $b$ the \emph{inverse binomial transform} of $a$ (cf.\ \cite[seq.~A007317]{oeis}). Others define the binomial transform as an involution: $a_n\ =\ \sum_{k=0}^\infty (-1)^k \binom{n}{k} b_k$, and $b_n\ =\ \sum_{k=0}^\infty (-1)^k \binom{n}{k} a_k$ (cf.\ \cite[p.~137, Exercise 36]{KnuthACP3}). For our purposes, the actual values of leading coefficients of the basis elements are not important.

\begin{proposition}
Let $L = \sum_{i=0}^r p_i(n) E^i \in \K[n]\langle E\rangle$, and let its associated operator w.r.t.\ the binomial-coefficient basis $\cC$ be $L' = \cRc(L) = \sum_{i=-t}^s q_i(k) S^i \in \K[k]\langle S, S^{-1}\rangle$. Then 
\begin{enumerate}
\item $s = r$, and
\item $q_s(k)S^s = p_r(k)S^r$.
\end{enumerate}
In other words, $\ord L' = \ord L$, and the leading term of $L'$ (after renaming $k\to n$, $S\to E$) agrees with that of $L$.
\end{proposition}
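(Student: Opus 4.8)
The plan is to exploit the fact that $\cRc$ is a $\K$-algebra isomorphism (Theorem~\ref{iso}), which turns the computation of $L'$ into a substitution. Writing $L = \sum_{i=0}^r p_i(n) E^i$ and using $\cRc E = S+1$ together with $\cRc X = k(1+S^{-1})$ from Example~\ref{subst}, I get
\[
L' \;=\; \cRc L \;=\; \sum_{i=0}^r p_i\bigl(k(1+S^{-1})\bigr)\,(S+1)^i .
\]
Everything then reduces to tracking, for each summand, the highest power of $S$ that occurs and its coefficient.

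To organize this I would introduce a filtration on $\cE$ by the top $S$-power. For a nonzero $G = \sum_i g_i(k) S^i \in \cE$ set $\deg_S G = \max\{i : g_i \neq 0\}$ and let $\ell(G) \in \K[k]$ be the coefficient of $S^{\deg_S G}$. The workhorse is the following multiplicativity lemma, which I would prove directly from the commutation rule $S^a g(k) = g(k+a) S^a$: for nonzero $F, G \in \cE$,
\[
\deg_S(FG) = \deg_S F + \deg_S G, \qquad \ell(FG)(k) = \ell(F)(k)\,\ell(G)(k + \deg_S F).
\]
The product on the right is nonzero because $\K[k]$ is an integral domain, so the $S$-degree is genuinely additive and the top power never cancels.

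With the lemma in hand the two claims follow by bookkeeping. From $\deg_S(S+1) = 1$, $\ell(S+1) = 1$ and $\deg_S\bigl(k(1+S^{-1})\bigr) = 0$, $\ell\bigl(k(1+S^{-1})\bigr) = k$, an induction gives $\deg_S\bigl(p_i(k(1+S^{-1}))\bigr) = 0$ with leading coefficient $p_i(k)$ for $p_i \neq 0$, and $\deg_S\bigl((S+1)^i\bigr) = i$ with leading coefficient $1$. Since the left factor $p_i(k(1+S^{-1}))$ has $S$-degree $0$, the lemma introduces no shift in $k$, so each summand $p_i(k(1+S^{-1}))\,(S+1)^i$ has $S$-degree exactly $i$ and leading coefficient $p_i(k)$. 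The summand with $i = r$ is the unique one of maximal $S$-degree $r$, and its $S^r$-coefficient $p_r(k)$ is nonzero since $p_r$ is the (nonzero) leading coefficient of $L$; the remaining summands all have $S$-degree $< r$ and cannot affect the coefficient of $S^r$. Hence $s = \deg_S L' = r = \ord L$ and $q_s(k) = p_r(k)$, i.e.\ $q_s(k)S^s = p_r(k)S^r$.

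The only genuinely delicate point is the non-commutativity $Sk = (k+1)S$, which forces the leading-coefficient rule in the lemma to carry the argument shift $k \mapsto k + \deg_S F$. This is where a naive ``substitute and read off the top coefficient'' argument could go astray, since the wrong order of multiplication would produce $p_r(k+r)$ instead of $p_r(k)$. The saving grace is that the polynomial factor sits on the \emph{left} and contributes $S$-degree $0$, so the shift is by $0$ and the top coefficient is exactly $p_i(k)$. Verifying that this shift vanishes for the relevant products, together with the domain property that rules out top-degree cancellation in the sum over $i$, is the crux of the argument; the rest is routine.
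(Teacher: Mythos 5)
Your proposal is correct and follows essentially the same route as the paper: substitute $\cRc E = S+1$ and $\cRc X = k(1+S^{-1})$, observe that the polynomial part contributes $S$-degree $0$ with top coefficient $p_i(k)$ while $(S+1)^i$ contributes $S^i$ with top coefficient $1$, and conclude that the $i=r$ summand alone determines the leading term. The only difference is presentational: you formalize the paper's $\mathcal{O}(S^{\rho})$ bookkeeping as an explicit leading-term lemma with the shift $k\mapsto k+\deg_S F$, which makes the non-commutativity issue visible but does not change the argument.
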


\begin{proof}
Clearly
\begin{align*}
\cRc(n^j) &= (k + k S^{-1})^j = k^j +\mathcal{O}(S^{-1}), \\
\cRc(E^i) &= (S+1)^i = S^i + \mathcal{O}(S^{i-1}),
\end{align*}
where $\mathcal{O}(S^\rho)$ denotes an operator from $\K[k]\langle E, E^{-1}\rangle$ of order at most $\rho$. Hence for $p_i(n) = \sum_{j=0}^{d_i} c_{i,j}n^j$ we have
\begin{align*}
\cRc(p_i(n))\ &= \sum_{j=0}^{d_i} c_{i,j}\cRc(n^j) = \sum_{j=0}^{d_i} c_{i,j}( k^j + \mathcal{O}(S^{-1})) = p_i(k) + \mathcal{O}(S^{-1}),\\
\cRc(p_i(n) E^i)\ &= \left(p_i(k) + \mathcal{O}(S^{-1})\right)\left(S^i + \mathcal{O}(S^{i-1})\right) = p_i(k)S^i + \mathcal{O}(S^{i-1})
\end{align*}
and
\begin{align*}
\cRc(L)\ &= \cRc\left(\sum_{i=0}^r p_i(n) E^i\right) = \sum_{i=0}^r \cRc\left(p_i(n)E^i\right) = \sum_{i=0}^r \left(p_i(k)S^i + \mathcal{O}(S^{i-1})\right)\\
&= p_r(k)S^r +  \mathcal{O}(S^{r-1}) + \sum_{i=0}^{r-1} \mathcal{O}(S^i) = p_r(k)S^r +  \mathcal{O}(S^{r-1}),
\end{align*}
proving the claim. 
\end{proof}

\smallskip
Beside the binomial-coefficient basis $\cC = \langle \binom{n}{k} \rangle_{k=0}^{\infty}$, there are many other shift-compatible bases $\cB = \langle P_k(n) \rangle_{k=0}^{\infty}$ with the property that any sequence $a \in \K^\N$ can be represented in the form
\begin{align}
\label{qttransf}
a_n\ =\ \sum_{k=0}^\infty b_k P_k(n)
\end{align}
 for some $b \in \K^\N$.

\begin{definition}
\label{qt}
Call a shift-compatible basis $\cB = \langle P_k(n) \rangle_{k=0}^{\infty}$ \emph{quasi-triangular} if there is a strictly increasing function $f\!: \N \to \N$ such that
\begin{enumerate}
\item $\forall k,n \in \N\!: (k > f(n)\ \Longrightarrow\ P_k(n) = 0)$,
\item $\forall n \in \N\!: P_{f(n)}(n) \ne 0$. 
\end{enumerate}
\end{definition}
Clearly, the basis $\cC = \langle \binom{n}{k} \rangle_{k=0}^{\infty}$ is quasi-triangular with $f(n)=n$.

\begin{proposition}\label{prop:quasi}
A basis $\cB = \langle P_k(n) \rangle_{k=0}^{\infty}$ is quasi-triangular if and only if its root sequence $\rho = \langle \rho_1,\rho_2,\rho_3,\dots\rangle$ satisfies
\begin{enumerate}
\item $\langle 0,1,2,3,\dots\rangle$ is a subsequence of $\rho$,
\item for every $n\in\N$, the first appearance of $n$ in $\rho$ precedes the first appearance of $n+1$ in $\rho$.
\end{enumerate}
\end{proposition}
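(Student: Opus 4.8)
The plan is to translate every occurrence of $P_k(n)=0$ into combinatorial data about the root sequence $\rho$, and then to collapse both clauses of Definition~\ref{qt} into a single monotonicity statement about the \emph{first occurrences} of the natural numbers in $\rho$. Since quasi-triangularity is defined only for shift-compatible bases, I treat shift-compatibility as a standing hypothesis and focus on the two vanishing conditions. Writing $P_k(x)=c_k(x-\rho_1)(x-\rho_2)\cdots(x-\rho_k)$ with $c_k\in\K^*$ as in (\ref{roots}), the basic observation is that, for $n\in\N$,
\[
P_k(n)=0\iff n\in\{\rho_1,\rho_2,\ldots,\rho_k\}.
\]
I would therefore introduce, for each $n$ that occurs in $\rho$, its first-occurrence index $g(n)=\min\{j\ge 1:\rho_j=n\}$, so that $P_k(n)=0$ holds precisely when $k\ge g(n)$, whereas $P_k(n)\ne 0$ for \emph{every} $k$ when $n$ does not occur in $\rho$ at all. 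The whole proposition then becomes a statement about the function $g$.

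For the ``only if'' direction, suppose $\cB$ is quasi-triangular with a witnessing strictly increasing $f$. The condition $P_{f(n)}(n)\ne 0$ forces $g(n)>f(n)$ (or $n$ not occurring). Applying the first condition of Definition~\ref{qt} at $k=f(n)+1>f(n)$ yields $P_{f(n)+1}(n)=0$, which both rules out non-occurrence and gives $g(n)\le f(n)+1$. These two inequalities pinch $g(n)=f(n)+1$. As $f$ is strictly increasing and total, so is $g$; hence every $n\in\N$ occurs in $\rho$, the positions $g(0)<g(1)<g(2)<\cdots$ exhibit $\langle 0,1,2,\ldots\rangle$ as a subsequence of $\rho$ (condition~1), and $g(n)<g(n+1)$ says exactly that the first occurrence of $n$ precedes that of $n+1$ (condition~2).

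For the ``if'' direction, I would assume the two root-sequence conditions. Condition~1 guarantees that each $n$ occurs, so $g$ is defined on all of $\N$, and condition~2 states precisely that $g$ is strictly increasing. Setting $f(n):=g(n)-1$, one has $f(n)\ge 0$ (as $g(n)\ge 1$) and $f$ strictly increasing, so $f\colon\N\to\N$ is admissible. Then $k>f(n)$ means $k\ge g(n)$, whence $P_k(n)=0$, while $P_{f(n)}(n)=P_{g(n)-1}(n)\ne 0$ because $g(n)-1<g(n)$; thus $f$ witnesses quasi-triangularity.

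The two verifications are routine; the genuine care is needed in the pinching step, where one must separately exclude the case that $n$ never occurs in $\rho$ (handled by invoking the first condition of Definition~\ref{qt} at $k=f(n)+1$). I would stress that both clauses of the proposition are needed and logically independent: the sequence $\langle 1,0,1,2,3,\ldots\rangle$ satisfies condition~1 but violates condition~2. Finally, shift-compatibility must be retained as a standing hypothesis, since conditions~1 and~2 alone do not entail it --- for instance $\rho=\langle 0,\tfrac12,1,2,3,\ldots\rangle$ satisfies both yet fails the inclusion test of Proposition~\ref{compatroots} --- so it is only among shift-compatible bases that the stated equivalence holds.
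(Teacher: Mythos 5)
Your proof is correct and follows essentially the same route as the paper's: both directions hinge on the observation that $P_k(n)=0$ iff $n\in\{\rho_1,\dots,\rho_k\}$, and your first-occurrence index $g(n)$ is exactly the paper's $f(n)+1$ (the paper sets $f(n):=\min\{k:\rho_k=n\}-1$ in the converse direction and derives $\rho_{f(n)+1}=n$ in the forward direction). The added remarks on the independence of the two conditions and on shift-compatibility as a standing hypothesis are accurate but not needed for the equivalence.
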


\begin{proof}
Assume first that $\cB$ is quasi-triangular with $f\!: \N \to \N$ as in Definition~\ref{qt}, and let $n\in\N$. Then $P_{f(n)}(n) \ne 0$ and $P_{f(n)+1}(n) = 0$, hence $\rho_1,\rho_2,\dots,\rho_{f(n)} \ne n$ and $\rho_{f(n)+1} = n$. Since $f$ is strictly increasing, $\langle \rho_{f(0)+1}, \rho_{f(1)+1}, \rho_{f(2)+1}, \dots\rangle = \langle 0,1,2,\dots\rangle$ is a subsequence of $\rho$, proving item 1. Since $P_{f(n)}(n) \ne 0$ and $\cB$ is factorial, $P_k(n)\ne 0$ for all $k\le f(n)$, so $\rho_1,\rho_2,\dots,\rho_{f(n)} \ne n$, hence the first term of $\rho$ equal to $n$ is $\rho_{f(n)+1}$. As $f$ is strictly increasing, this proves item 2.

Assume now that the root sequence $\rho$ of $\cB$ satisfies items 1 and 2, and let $f(n) := \min \{k\in\N\setminus\{0\};\ \rho_k=n\} - 1$. Then $f\!: \N \to \N$ is strictly increasing, $\rho_1,\rho_2,\dots,\rho_{f(n)} \ne n$, while $\rho_{f(n)+1} = n$, so $P_k(n) = 0$ for all $k>f(n)$, and $P_{f(n)}(n) \ne 0$, hence $\cB$ is quasi-triangular. 
\end{proof}

\begin{theorem}
Let $\cB = \langle P_k(n) \rangle_{k=0}^{\infty}$ be a quasi-triangular basis with $f\!: \N \to \N$ as in Definition~\ref{qt}. Then for every $a \in \K^\N$ there exists $b \in \K^\N$ such that \emph{(\ref{qttransf})} holds.
\end{theorem}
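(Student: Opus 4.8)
The plan is to exploit quasi-triangularity to turn (\ref{qttransf}) into a lower-triangular linear system with invertible diagonal, and then to solve it by forward substitution.

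First I would observe that, by item~1 of Definition~\ref{qt}, for each fixed $n\in\N$ we have $P_k(n)=0$ whenever $k>f(n)$, so the right-hand side of (\ref{qttransf}) is in fact the finite sum $\sum_{k=0}^{f(n)}b_k P_k(n)$. Thus (\ref{qttransf}) is, for each $n$, a well-defined $\K$-linear equation in the unknowns $\langle b_k\rangle_{k\in\N}$, and the task is to show that the whole (infinite) system is solvable.

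The key step is a reindexing that removes the redundancy caused by $f$ not being surjective. Since $f$ is strictly increasing, $m\le n \Longleftrightarrow f(m)\le f(n)$, so I would set $b_k:=0$ for every $k\notin f(\N)$ and retain as genuine unknowns only $\tilde b_m := b_{f(m)}$ for $m\in\N$. Under this choice, equation number $n$ of (\ref{qttransf}) collapses to
\[
a_n\ =\ \sum_{m=0}^{n}\tilde b_m\,P_{f(m)}(n),
\]
because the only indices $k\le f(n)$ lying in $f(\N)$ are $k=f(0),\dots,f(n)$. By item~2 of Definition~\ref{qt}, the coefficient of the top unknown $\tilde b_n$ is $P_{f(n)}(n)\neq 0$, so this is a genuinely lower-triangular system with nonzero diagonal.

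Finally I would solve the triangular system recursively on $n$: given $\tilde b_0,\dots,\tilde b_{n-1}$, set
\[
\tilde b_n\ :=\ \frac{1}{P_{f(n)}(n)}\left(a_n-\sum_{m=0}^{n-1}\tilde b_m\,P_{f(m)}(n)\right),
\]
which is legitimate precisely because $P_{f(n)}(n)\neq 0$. Defining $b_{f(n)}:=\tilde b_n$ and $b_k:=0$ for $k\notin f(\N)$ then yields a sequence $b\in\K^\N$ satisfying (\ref{qttransf}) for every $n$. I do not expect a genuine obstacle here: the only thing to get right is the reindexing by the range of $f$, after which the diagonal invertibility $P_{f(n)}(n)\neq 0$ makes the forward substitution automatic. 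One could equivalently argue in terms of the root sequence via Proposition~\ref{prop:quasi}, but Definition~\ref{qt} suffices directly, and this same construction in fact shows that $b$ is uniquely determined once the off-range coefficients are fixed.
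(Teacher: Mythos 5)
Your proof is correct and follows essentially the same route as the paper: both arguments define the coefficients recursively by forward substitution, using $P_{f(n)}(n)\neq 0$ to divide and the vanishing $P_k(n)=0$ for $k>f(n)$ to truncate the sum. The only cosmetic difference is that you fix $b_k=0$ for $k\notin f(\N)$, whereas the paper leaves those coefficients arbitrary; your reindexing via $\tilde b_m=b_{f(m)}$ just makes the lower-triangular structure explicit.
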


\begin{proof}
Since $f$ is strictly increasing, it is injective, and we can define $b_k\in\K$ for $k=0,1,2,\dots$ recursively as follows:
\begin{enumerate}
\item If $k=f(n)$ for some $n\in\N$ then let $b_k\ =\ \frac{1}{P_k(n)}\left(a_n - \sum_{i=0}^{k-1}b_i P_i(n)\right)$.
\item If $k \notin f(\N)$ then let $b_k \in\K$ be arbitrary.
\end{enumerate}
Then for every $n\in\N$ we have $b_{f(n)}\ =\ \frac{1}{P_{f(n)}(n)}\left(a_n - \sum_{i=0}^{f(n)-1}b_i P_i(n)\right)$, hence
\begin{align*}
a_n\ =\ b_{f(n)}P_{f(n)}(n)\ + \!\sum_{i=0}^{f(n)-1}b_i P_i(n)\ =\ \sum_{k=0}^{f(n)}b_k P_k(n)\ =\ \sum_{k=0}^{\infty}b_k P_k(n),
\end{align*}
proving equality (\ref{qttransf}). 
\end{proof}

\section{Products of compatible bases}
\label{sec:ProdBases}

To be able to use formal polynomial series to find other definite-sum solutions of linear recurrence equations, we need a rich supply of shift-compatible bases. 

\begin{definition}
\label{def:cab} 
For $a \in \N \setminus \{0\}$, $b \in \K$, and for all $k \in \N$, let $P_k^{(a,b)}(n) := \binom{an+b}{k}$. We denote the polynomial basis $\left\langle P_k^{(a,b)}(n) \right\rangle_{k=0}^{\infty}$ by  $\cC_{a,b}$, and call it a \emph{generalized binomial-coefficient basis} of $\K[n]$.
\end{definition}

\begin{proposition}
\label{cab}
Any generalized binomial-coefficient basis $\cC_{a,b}$ is a factorial basis of $\K[n]$, which is $(a,0)$-compatible with the shift operator $E$. If $b\in\N$ then $\cC_{a,b}$ is quasi-triangular.
\end{proposition}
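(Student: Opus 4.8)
The plan is to verify the three assertions in turn, leaning on the root-sequence characterizations already established earlier in the section. Throughout, write $P_k^{(a,b)}(n) = \binom{an+b}{k}$.

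First, for the factorial-basis claim, I would compute the ratio of consecutive basis elements directly from the identity $\binom{an+b}{k+1} = \binom{an+b}{k}\cdot\frac{an+b-k}{k+1}$. Since $a\neq 0$, the factor $\frac{an+b-k}{k+1}$ is a degree-one polynomial in $n$ over $\K$ (here $\mathrm{char}\,\K = 0$ guarantees that $k+1$ is invertible), which simultaneously yields $\deg P_k^{(a,b)} = k$ (property \textbf{P1}, by induction on $k$) and $P_k^{(a,b)}\mid P_{k+1}^{(a,b)}$ in $\K[n]$ (property \textbf{P2}). Equivalently, unwinding the recursion exhibits the root sequence $\rho_j = (j-1-b)/a$ for $j\geq 1$, which is the ingredient needed for the next part.

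Second, for $(a,0)$-compatibility with $E$, I would appeal to Proposition~\ref{compatroots}. Since $E$ preserves degrees, only the multiset inclusion (\ref{rhotest}) has to be checked. The crucial observation is the shift relation $\rho_j + 1 = (j-1-b)/a + 1 = (j+a-1-b)/a = \rho_{j+a}$. Hence $[\rho_1+1,\dots,\rho_k+1] = [\rho_{a+1},\dots,\rho_{a+k}] \subseteq [\rho_1,\dots,\rho_{k+a}]$ for every $k\in\N$, which is exactly (\ref{rhotest}) with $A=a$, and Proposition~\ref{compatroots} then delivers $(a,0)$-compatibility. (Alternatively one can argue directly from Proposition~\ref{compat}, condition \textbf{C2}, that $P_{k-a}^{(a,b)}(n) = \binom{an+b}{k-a}$ divides $P_k^{(a,b)}(n+1) = \binom{an+a+b}{k}$ by matching the roots $(i-b)/a$ against $(j-a-b)/a$ under the injection $i\mapsto i+a$.)

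Third, for quasi-triangularity when $b\in\N$, I would exhibit the function $f(n) := an+b$ directly in the sense of Definition~\ref{qt}. For $n\in\N$ the top entry $an+b$ is a nonnegative integer, so $\binom{an+b}{k} = 0$ whenever $k > an+b$, giving condition~1 of Definition~\ref{qt}, while $P_{f(n)}^{(a,b)}(n) = \binom{an+b}{an+b} = 1 \neq 0$ gives condition~2; strict monotonicity of $f$ follows from $a\geq 1$. The shift-compatibility required by Definition~\ref{qt} is furnished by the second part. None of the steps presents a genuine obstacle; the only point requiring care is the bookkeeping of the root sequence in the second part, where the clean identity $\rho_{j+a} = \rho_j + 1$ does all the work and makes the multiset inclusion immediate rather than requiring a term-by-term multiplicity count.
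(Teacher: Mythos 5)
Your proof is correct, and the factorial-basis and quasi-triangularity parts coincide with the paper's argument (same ratio computation $\binom{an+b}{k+1}=\frac{an+b-k}{k+1}\binom{an+b}{k}$, same witness $f(n)=an+b$). Where you genuinely diverge is the $(a,0)$-compatibility with $E$: the paper proves it by an explicit Chu--Vandermonde expansion
$\binom{an+a+b}{k}=\sum_{i=-a}^{0}\binom{a}{-i}\binom{an+b}{k+i}$,
which establishes compatibility \emph{and} simultaneously produces the compatibility coefficients $\alpha_{k,i}=\binom{a}{-i}$ in closed form; these explicit coefficients are what the later algorithmic machinery (computing $\cR E$ and the matrices of operators) actually consumes. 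You instead invoke the root-sequence criterion of Proposition~\ref{compatroots}, observing that $\rho_j=(j-1-b)/a$ satisfies $\rho_j+1=\rho_{j+a}$, so the multiset inclusion (\ref{rhotest}) is immediate. Your route is arguably cleaner and more conceptual --- the single identity $\rho_{j+a}=\rho_j+1$ replaces a binomial identity --- and it illustrates nicely how the root-sequence characterization is meant to be used; but it is purely existential about the $\alpha_{k,i}$, so one would still have to run the linear-algebra step of Procedure \textsc{AssociatedOp} (or fall back on Chu--Vandermonde after all) to obtain them. Both arguments are complete and correct; the paper's buys the explicit coefficients for free, yours buys brevity and a uniform treatment via the machinery already set up in Section~\ref{sec:FPS}.
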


\begin{proof}
Clearly $\deg_n P_k^{(a,b)}(n) = k$ and 
\[
P_{k+1}^{(a,b)}(n) = \frac{an+b-k}{k+1} P_k^{(a,b)}(n)\ \ { for\ all\ } k \in \N,
\]
so $P_k^{(a,b)}(n) \,\big|\, P_{k+1}^{(a,b)}(n)$, and $\cC_{a,b}$ is factorial. By Chu-Vandermonde's identity, 
\begin{eqnarray*}
E P_k^{(a,b)}(n) &=&  P_k^{(a,b)}(n+1)\ =\ \binom{an+a+b}{k}\ =\ \,\sum_{i=0}^a \binom{a}{i}\binom{an+b}{k-i} \\
&=& \sum_{i=-a}^0 \binom{a}{-i}\binom{an+b}{k+i}\ =\ \sum_{i=-a}^0 \binom{a}{-i}P_{k+i}^{(a,b)}(n),
\end{eqnarray*}
so $\cC_{a,b}$ is $(a,0)$-compatible with $E$
($\alpha_{k,i} = \binom{a}{-i}$ for $i = -a, -a+1, \ldots, 0$). Finally, if $b\in\N$, let $f(n) = a n + b$. Then $f:\N\to\N$ is strictly increasing, $P_k^{(a,b)}(n) = \binom{an+b}{k} = 0$ for $k > an+b = f(n)$, and $P_{f(n)}^{(a,b)}(n) = \binom{an+b}{an+b} = 1$, so $\cC_{a,b}$ is quasi-triangular by Definition \ref{qt}.

\end{proof}

\begin{definition}\label{def:pbasis}

Let $m \in \N \setminus \{0\}$, and for $i = 1, 2, \ldots, m$, let $\cB_i = \langle P_k^{(i)}(n) \rangle_{k=0}^\infty$ be a basis of $\K[n]$. For all $k \in \N$ and $j \in \{0,1,\ldots,m-1\}$, let
\begin{equation}\label{equ:product}
P_{m k + j}^{(\pi)}(n)\ :=\ \prod_{i=1}^j P_{k+1}^{(i)}(n)\cdot \prod_{i=j+1}^m  P_k^{(i)}(n).
\end{equation}
Then the sequence $\prod_{i=1}^m \cB_i := \langle P_n^{(\pi)}(n) \rangle_{n=0}^\infty$ is the \emph{product} of $\cB_1, \cB_2, \ldots, \cB_m$. 
\end{definition}

\begin{theorem}
\label{prod}
Let $\cB_1, \cB_2, \ldots, \cB_m$ be factorial bases of $\K[n]$, and $L \in {\cal L}_{\K[n]}$.
\begin{enumerate}
\item $\prod_{i=1}^m \cB_i$ is a factorial basis of $\K[n]$.
\item Let $L$ be a ring endomorphism of $\K[n]$, and let all $\cB_i$ be $(A_i, B_i)$-compatible with $L$.  
   Write $A = \max_{1 \le i \le m} A_i$ and $B = \min_{1 \le i \le m} B_i$. Then $\prod_{i=1}^m \cB_i$ is 
   $(m A, B)$-compatible with $L$.
\end{enumerate}
\end{theorem}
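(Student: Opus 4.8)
The plan is to treat the two items separately: item~1 reduces to the defining properties \textbf{P1}, \textbf{P2} of a factorial basis, and item~2 to the characterization of compatibility in Proposition~\ref{compat}. Throughout I write a general index of $\prod_{i=1}^m\cB_i$ as $N=mk+j$ with $0\le j\le m-1$. Property \textbf{P1} is then a direct degree count: by \eqref{equ:product}, $\deg P_{mk+j}^{(\pi)}(n)=j(k+1)+(m-j)k=mk+j=N$, using \textbf{P1} for each $\cB_i$. For \textbf{P2} I distinguish how the index increments. If $j<m-1$, passing from $P_{mk+j}^{(\pi)}$ to $P_{mk+j+1}^{(\pi)}$ merely replaces the factor $P_k^{(j+1)}(n)$ by $P_{k+1}^{(j+1)}(n)$, so divisibility follows from \textbf{P2} of $\cB_{j+1}$; if $j=m-1$, then $P_{mk+m-1}^{(\pi)}=\bigl(\prod_{i=1}^{m-1}P_{k+1}^{(i)}\bigr)P_k^{(m)}$ while $P_{m(k+1)}^{(\pi)}=\prod_{i=1}^{m}P_{k+1}^{(i)}$, and the single changed factor satisfies $P_k^{(m)}\mid P_{k+1}^{(m)}$ by \textbf{P2} of $\cB_m$. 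Hence the product is factorial.

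For item~2 the central tool is multiplicativity: since $L$ is a $\K$-linear ring endomorphism, \eqref{equ:product} yields
\[
L P_{mk+j}^{(\pi)}(n)\ =\ \prod_{i=1}^{j}L P_{k+1}^{(i)}(n)\cdot\prod_{i=j+1}^{m}L P_k^{(i)}(n),
\]
and I would verify conditions \textbf{C1} and \textbf{C2} of Proposition~\ref{compat} for the pair $(mA,B)$. Such an $L$ acts by $p\mapsto p(L(n))$, and compatibility with a factorial basis forces $\deg L(n)=1$ (otherwise $\deg L P_k^{(i)}=k\,\deg L(n)$ would violate \textbf{C1} or \textbf{C2} for large $k$); thus $L$ preserves degrees and each factor above is a nonzero polynomial of the same degree as before $L$ was applied. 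Consequently $\deg L P_N^{(\pi)}(n)=N\le N+B$, which is \textbf{C1}. This degree-preservation is exactly what legitimizes the bound $B=\min_i B_i$ (indeed any $B\ge 0$ would do), overriding the naive estimate $\sum_i B_i$ that summing the individual \textbf{C1} bounds would suggest.

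For \textbf{C2}, note that $N\ge mA$ forces $k\ge A$, so $N-mA=m(k-A)+j$ and
\[
P_{N-mA}^{(\pi)}(n)\ =\ \prod_{i=1}^{j}P_{k+1-A}^{(i)}(n)\cdot\prod_{i=j+1}^{m}P_{k-A}^{(i)}(n).
\]
Matching factor by factor against the product for $L P_N^{(\pi)}$, it suffices to prove $P_{k+1-A}^{(i)}\mid L P_{k+1}^{(i)}$ for $i\le j$ and $P_{k-A}^{(i)}\mid L P_k^{(i)}$ for $i>j$. Each follows by composing \textbf{C2} of $\cB_i$ (which gives divisibility by $P_{k+1-A_i}^{(i)}$, resp.\ $P_{k-A_i}^{(i)}$) with the \textbf{P2} divisibility chain: since $A=\max_i A_i\ge A_i$, we have $P_{k+1-A}^{(i)}\mid P_{k+1-A_i}^{(i)}$ and $P_{k-A}^{(i)}\mid P_{k-A_i}^{(i)}$ (note $k\ge A\ge A_i$, so the relevant instances of \textbf{C2} apply). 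As $a_i\mid b_i$ for every $i$ implies $\prod_i a_i\mid\prod_i b_i$ in $\K[n]$, we obtain $P_{N-mA}^{(\pi)}\mid L P_N^{(\pi)}$, which is \textbf{C2}; Proposition~\ref{compat} then yields $(mA,B)$-compatibility.

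The step I expect to be the main obstacle is the \textbf{C2} bookkeeping: one must correctly track the split between the first $j$ factors (indexed by $k+1$) and the remaining $m-j$ factors (indexed by $k$), and check that both groups absorb the single uniform shift by $A$ into their individual $A_i$-shifts via \textbf{P2}. This alignment is precisely what forces the multiplier $m$ in the bound $mA$. A secondary point worth stating carefully is the justification of $B=\min_i B_i$ rather than $\sum_i B_i$, which rests on the degree-preservation of ring endomorphisms noted above rather than on any cancellation among the coefficients $\alpha_{k,i}$.
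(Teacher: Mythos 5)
Your item~1 and the \textbf{C2} half of item~2 coincide with the paper's proof: the same degree count $j(k+1)+(m-j)k=mk+j$, the same two-case analysis of the index increment for \textbf{P2}, and for \textbf{C2} the same factor-by-factor chain $P^{(i)}_{k+1-A}\mid P^{(i)}_{k+1-A_i}\mid LP^{(i)}_{k+1}$ (resp.\ with $k$ in place of $k+1$), assembled via multiplicativity of $L$. Where you diverge is \textbf{C1}, and there your justification has a gap. You claim that compatibility of a ring endomorphism $L$ with a factorial basis forces $\deg L(n)=1$, on the grounds that otherwise \textbf{C1} or \textbf{C2} would fail for large $k$. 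This dichotomy misses the degenerate case $\deg L(n)\le 0$: take $\cB_i=\cC$ and $L\colon p\mapsto p(0)$, a ring endomorphism with $L\binom{n}{k}=\delta_{k,0}$, which is $(0,0)$-compatible with $\cC$ (set $\alpha_{0,0}=1$ and $\alpha_{k,0}=0$ for $k\ge 1$) because $LP_k^{(i)}$ \emph{vanishes} for large $k$, so neither \textbf{C1} nor \textbf{C2} is violated. For such an $L$ your assertion that ``each factor above is a nonzero polynomial of the same degree'' is false. The inequality $\deg LP_N^{(\pi)}\le N+B$ you need still holds trivially there (the left side is $0$ or $-\infty$), so the theorem is not endangered, but as written the step is asserted rather than proved.

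The paper's argument for \textbf{C1} is more robust and does not use the endomorphism property at all: expanding an arbitrary $p\in\K[n]$ in the basis $\cB_i$ and applying \textbf{C1} for $\cB_i$ termwise gives $\deg Lp\le\deg p+B_i$ for \emph{every} polynomial $p$; since this holds for each $i$, one gets $\deg Lp\le\deg p+B$ with $B=\min_i B_i$, and in particular for $p=P_N^{(\pi)}$. I would either adopt that argument or patch yours by splitting into the cases $\deg L(n)=1$ (degrees preserved, so even $B=0$ works) and $\deg L(n)\le 0$ (then $LP_N^{(\pi)}$ is constant or zero and \textbf{C1} is vacuous). Your closing observation that the correct bound is $\min_i B_i$ rather than $\sum_i B_i$ is right, but the clean reason is the paper's uniform degree estimate on all of $\K[n]$, not degree preservation.
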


\begin{proof}
\begin{enumerate}
\item Clearly $\deg P_{m k + j}^{(\pi)}(n) =  j(k+1) + (m-j)k = mk+j$.

If $\ell = mk+j$ with $0 \le j \le m-2$, then $\ell+1 = mk+(j+1)$ and
\[
\frac{P_{\ell+1}^{(\pi)}(n)}{P_\ell^{(\pi)}(n)}\ =\ \frac{\prod_{i=1}^{j+1} P_{k+1}^{(i)}(n)}{\prod_{i=1}^j P_{k+1}^{(i)}(n)} \cdot \frac{\prod_{i=j+2}^m  P_k^{(i)}(n)}{\prod_{i=j+1}^m  P_k^{(i)}(n)}\ =\ \frac{ P_{k+1}^{(j+1)}(n)}{P_k^{(j+1)}(n)} 
\ \in\ \K[n]
\]
as $\cB_{j+1}$ is factorial. If $\ell = mk+(m-1)$, then $\ell+1 = m(k+1) + 0$ and
\[
\frac{P_{\ell+1}^{(\pi)}(n)}{P_\ell^{(\pi)}(n)}\ =\ \frac{1}{\prod_{i=1}^{m-1} P_{k+1}^{(i)}(n)} \cdot \frac{\prod_{i=1}^m  P_{k+1}^{(i)}(n)}{P_k^{(m)}(n)}\ =\ \frac{ P_{k+1}^{(m)}(n)}{P_k^{(m)}(n)} 
\ \in\ \K[n]
\]
because $\cB_{m}$ is factorial. Hence $\prod_{i=1}^m \cB_i$ is factorial as well.

\item Let $p \in \K[n]$ be arbitrary. For $i = 1,2,\ldots,m$, let $p = \sum_{k=0}^{\deg p} c_k^{(i)} P_k^{(i)}$ be the expansion of $p$ w.r.t.\ $\cB_i$. Then  $L p = \sum_{k=0}^{\deg p} c_k^{(i)} L P_k^{(i)}$, and by condition \textbf{C1} of Proposition \ref{compat},
\[
\deg L p \le \max_{0 \le k \le \deg p} \deg L P_k^{(i)} \le \max_{0 \le k \le \deg p} (k + B_i) = \deg p + B_i.
\]
Since this holds for all $i$, we have $\deg L p\ \le\ \deg p + B$ for all $p \in \K[n]$. In particular, $\deg L P_k^{(\pi)}\ \le\ k + B$, so $\prod_{i=1}^m \cB_i$ satisfies \textbf{C1}.\\[2pt]

Condition \textbf{C2} of Proposition \ref{compat} and our definition of $A$ imply that\\ $P_{k+1-A}^{(i)} \,\big|\, L P_{k+1}^{(i)}$ and $P_{k-A}^{(i)} \,\big|\, L P_{k}^{(i)}$ for all $k \ge A$ and $i \in \{1,2\ldots,m\}$,~so
\[
 P_{m (k-A) + j}^{(\pi)}\ =\ \prod_{i=1}^j P_{k+1-A}^{(i)}\cdot \prod_{i=j+1}^m  P_{k-A}^{(i)} \ \ \bigg|\ \ \prod_{i=1}^j L P_{k+1}^{(i)}\cdot \prod_{i=j+1}^m  L P_k^{(i)},
\]
or equivalently, since $L$ is an endomorphism of the ring $\K[n]$,
\[
 P_{m (k-A) + j}^{(\pi)} \ \bigg|\  L \left(\prod_{i=1}^j P_{k+1}^{(i)}\cdot \prod_{i=j+1}^m  P_k^{(i)}\right) \ =\ L  P_{m k + j}^{(\pi)}. 
\]
For $\ell = mk+j \ge m A$, this turns into $P_{\ell - m A}^{(\pi)}\ \big|\,\ L  P_{\ell}^{(\pi)}$, so $\prod_{i=1}^m \cB_i$ satisfies \textbf{C2} as well. By Proposition \ref{compat}, this proves the claim.  

\end{enumerate}
\end{proof}

\begin{definition}
\label{def:caabb} 
Let $m \in \N \setminus \{0\}$, and let $\mathbf{a} = (a_1,a_2,\ldots,a_m)$, $\mathbf{b} = (b_1,b_2,\ldots,b_m)$ where $a_i \in \N \setminus \{0\}$, $b_i \in \K$ for $i = 1,2,\ldots,m$. We denote the product of generalized binomial-coefficient bases $\prod_{i=1}^m \cC_{a_i, b_i}$ by $\cC_{\mathbf{a},\mathbf{b}}$, and call it a \emph{product binomial-coefficient basis} of $\K[x]$ having \emph{length} $m$. 
\end{definition}

\begin{corollary}
\label{cor:caabb} 
Any product binomial-coefficient basis $\cC_{\mathbf{a},\mathbf{b}}$  is a factorial basis of $\K[x]$ which is $(m a, 0)$-compatible with $E$, where $a = \max_{1\le i\le m}  a_i$.
\end{corollary}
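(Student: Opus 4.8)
The plan is to obtain this corollary as a direct instance of the product construction, applied with $L = E$. By Definition~\ref{def:caabb}, the basis in question is $\cC_{\mathbf{a},\mathbf{b}} = \prod_{i=1}^m \cC_{a_i,b_i}$, so everything reduces to feeding the factors $\cC_{a_i,b_i}$ into Theorem~\ref{prod}.

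First I would recall from Proposition~\ref{cab} that each generalized binomial-coefficient basis $\cC_{a_i,b_i}$ is a factorial basis of $\K[n]$ that is $(a_i,0)$-compatible with the shift operator $E$. Part~1 of Theorem~\ref{prod} then yields immediately that the product $\cC_{\mathbf{a},\mathbf{b}}$ is again a factorial basis of $\K[n]$, settling the first assertion.

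For the compatibility assertion I would apply part~2 of Theorem~\ref{prod} with $L = E$ and with $A_i = a_i$, $B_i = 0$ for $i = 1,\ldots,m$. The one hypothesis of that part that is not purely bookkeeping is that $L$ be a ring endomorphism of $\K[n]$; here this holds because $E$ is $\K$-linear and multiplicative, since $E(pq)(n) = p(n+1)\,q(n+1) = (Ep)(n)\,(Eq)(n)$ for all $p,q\in\K[n]$. Granting this, Theorem~\ref{prod} gives that $\cC_{\mathbf{a},\mathbf{b}}$ is $(mA,B)$-compatible with $E$, where $A = \max_{1\le i\le m} A_i = \max_{1\le i\le m} a_i = a$ and $B = \min_{1\le i\le m} B_i = 0$. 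Thus $\cC_{\mathbf{a},\mathbf{b}}$ is $(ma,0)$-compatible with $E$, exactly as claimed.

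Since the result is a straightforward specialization of the earlier machinery, there is no genuine obstacle to overcome; the only point deserving explicit mention is the verification that $E$ is a ring endomorphism of $\K[n]$, which is what legitimizes invoking part~2 of Theorem~\ref{prod} (whose compatibility conclusion is stated only for endomorphisms rather than for arbitrary operators in ${\cal L}_{\K[n]}$).
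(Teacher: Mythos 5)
Your proposal is correct and follows exactly the paper's own route: the paper's proof is the one-line ``Use Theorem~\ref{prod} and Proposition~\ref{cab}'', which is precisely the specialization you carry out. Your explicit check that $E$ is a ring endomorphism of $\K[n]$ is a worthwhile detail the paper leaves implicit, but it does not change the argument.
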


\begin{proof}
\noindent
Use Theorem \ref{prod} and Proposition \ref{cab}. 
\end{proof}

\medskip
By Corollary \ref{cor:caabb}, we now have at our disposal a family of factorial bases  $\cC_{\mathbf{a},\mathbf{b}}$ which, given an operator $L \in \K[n]\langle E\rangle$ and a kernel $K(n,k)$ of the form
\begin{align}
\label{binom}
K(n,k) \ =\ \prod_{i=1}^m \binom{a_i n + b_i}{k}
\end{align}
where $a_i \in \N \setminus \{0\}$ and $b_i \in \K$, 
can be used to find solutions of $L y = 0$ having the form $y_n = \sum_{k=0}^\infty K(n,k)\, c_k$. To this end, we need to compute expansions of $E P_n(x)$ and $X P_n(x)$ in the basis  $\cC_{\mathbf{a},\mathbf{b}}=  \langle P_n(x)\rangle_{n=0}^\infty$.

\begin{example}
\label{xk2E}

Take $K(n,k) = \binom{n}{k}^2$. The polynomial basis to be used here is $\cC_{(1,1),(0,0)} = \langle P_n(x)\rangle_{n=0}^\infty$ where for all $k \in \N$,
\[
P_{2k}(x) =  \binom{x}{k}^2, \quad P_{2k+1}(x) =  \binom{x}{k+1}  \binom{x}{k}.
\]
According to Corollary \ref{cor:caabb}, $\cC_{(1,1),(0,0)}$ is a factorial basis of $\K[x]$ with $m = 2$ and $a = \max\{1,1\} = 1$, so it is $(2,0)$-compatible with $E$. In particular, this means that $P_{2k}(x+1)$ can be expressed as a linear combination of $P_{2k}(x)$, $P_{2k-1}(x)$, $P_{2k-2}(x)$, and $P_{2k+1}(x+1)$ as a linear combination of $P_{2k+1}(x)$, $P_{2k}(x)$, $P_{2k-1}(x)$, with coefficients depending on $k$. In the case of $P_{2k}(x+1)$ this is just an application of Pascal's rule:
\begin{eqnarray}
P_{2k}(x+1) &=&  \binom{x+1}{k}^2\ =\ \left[\binom{x}{k} +  \binom{x}{k-1}\right]^2 \nonumber\\
&=&  \binom{x}{k}^2 +  2 \binom{x}{k}\binom{x}{k-1} + \binom{x}{k-1}^2 \nonumber\\[3pt]
&=& P_{2k}(x) + 2  P_{2k-1}(x) +  P_{2k-2}(x). \label{p2k}
\end{eqnarray}
 In the case of $P_{2k+1}(x+1)$, we can use the method of undetermined coefficients. Dividing both sides of
\begin{eqnarray*}
P_{2k+1}(x+1) &=&  u(k) P_{2k+1}(x) + v(k)  P_{2k}(x) +  w(k) P_{2k-1}(x), {\ \ or} \\
\binom{x+1}{k+1} \binom{x+1}{k} &=& u(k) \binom{x}{k+1}\binom{x}{k} +  v(k) \binom{x}{k}^2 + w(k) \binom{x}{k}\binom{x}{k-1}
\end{eqnarray*}
where $u(k), v(k), w(k)$ are undetermined functions of $k$, by $\binom{x}{k}\binom{x}{k-1}$, yields
\begin{align}
\label{undetcoef}
\frac{(x+1)^2}{k(k+1)}\ =\ u(k) \frac{(x-k+1)(x-k)}{k(k+1)} + v(k) \frac{x-k+1}{k} + w(k),
\end{align}
which is an equality of two quadratic polynomials from $\K(k)[x]$. Plugging in the values $x = -1, k, k-1$, we obtain a triangular system of linear equations
\[
\begin{array}{rcc}
u(k) - v(k) + w(k) &=& 0 \\[2pt]
\frac{1}{k} v(k) + w(k) &=& \frac{k+1}{k} \\[3pt]
w(k) &=& \frac{k}{k+1}
\end{array}
\]
whose solution is $u(k) = 1$ (as expected), $v(k) = \frac{2k+1}{k+1}$,  $w(k) = \frac{k}{k+1}$, and so
\begin{equation}
\label{p2k1}
P_{2k+1}(x+1)\ =\  P_{2k+1}(x) + \frac{2k+1}{k+1}  P_{2k}(x) +  \frac{k}{k+1} P_{2k-1}(x).
\end{equation}
Alternatively, we could obtain a system of linear equations for $u(k), v(k), w(k)$ by equating the coefficients of $x^j$ on both sides of (\ref{undetcoef}) for $j=0,1,2$.

For the expansion of $X P_n(x)$, recall that by Corollary \ref{cor_compat} every factorial basis is $(0,1)$-compatible with $X$. Indeed, as $x \binom{x}{k} = (k+1)\binom{x}{k+1} +  k\binom{x}{k}$, we have
\begin{align*}
x P_{2k}(x)\ =\ x \binom{x}{k}^2\ &=\ (k+1)\binom{x}{k+1}\binom{x}{k} +  k\binom{x}{k}^2\\
&=\ (k+1) P_{2k+1}(x) + k P_{2k}(x), \\
x P_{2k+1}(x)\ =\ x\binom{x}{k+1}\binom{x}{k}\ &=\ (k+1)\binom{x}{k+1}^2 +  k\binom{x}{k+1}\binom{x}{k}\\
&=\ (k+1) P_{2k+2}(x) + k P_{2k+1}(x).
\end{align*}
It is easy to see that $\cC_{(1,1),(0,0)}$ is quasi-triangular with $f(n) = 2n$.

\end{example}

For additional examples of expansions of shifted basis elements in the basis $\cC_{\mathbf{a},\mathbf{b}}$, see \cite{arXiv}.

\medskip
\label{expansion}
If our kernel is as in (\ref{binom}), we can use the product binomial-coefficient basis $\cC_{\mathbf{a},\mathbf{b}} = \langle P_n^{(\pi)}(x)\rangle_{n=0}^\infty$ which, by Corollary \ref{cor:caabb}, is $(m a, 0)$-compatible with $E$ where $a = \max_{1\le i\le m} a_i$. In order to compute $\alpha_{k,j,i} \in \K(k)$ such that
\begin{align}
\label{koefs}
P_{mk+j}^{(\pi)}(x+1)\ =\ \sum_{i=-ma}^0 \alpha_{k,j,i} P_{mk+j+i}^{(\pi)}(x)
\end{align}
for all $k \in \N$ and $j \in \{0,1,\ldots,m-1\}$, we divide both sides of this equation by $P_{mk+j-ma}^{(\pi)}(x)$ which turns it into an equality of two polynomials of degree $ma$ from $\K(k)[x]$. From this equality a system of $ma+1$ linear algebraic equations for the $ma+1$ undetermined coefficients $\alpha_{k,j,i}$, $i = 0, 1, \ldots, ma$, can be obtained by equating the coefficients of like powers of $x$ on both sides, or (as in Example \ref{xk2E}) by substituting $ma+1$ distinct values from $\K(k)$ for $x$ in this equality. Note that for each $j \in \{0,1,\ldots,m-1\}$, this system is uniquely solvable since $\cC_{\mathbf{a},\mathbf{b}}$ is a basis of $\K[x]$, that the $\alpha_{k,j,i}$ will be rational functions of $k$, and that, as the shift operator preserves leading coefficients and degrees of polynomials, $\alpha_{k,j,0} = 1$.

To compute the coefficients of the expansion of $x P_n^{(\pi)}(x)$ w.r.t.\ $\cC_{\mathbf{a},\mathbf{b}}$, we use the fact that by Corollary \ref{cor_compat}, $\cC_{\mathbf{a},\mathbf{b}}$ is $(0,1)$-compatible with $X$:

\begin{proposition}
\label{x}
For $k \in \N$ and $j \in \{0,1,\ldots,m-1\}$, let
\begin{equation}
\label{Pmkj}
P_{m k + j}^{(\pi)}(x)\ :=\ \prod_{i=1}^j \binom{a_i x+b_i}{k+1}\cdot \prod_{i=j+1}^m \binom{a_i x+b_i}{k}.
\end{equation}
Then
\[
x P_{m k + j}^{(\pi)}(x)\ =\ \frac{k+1}{a_{j+1}} P_{m k + j + 1}^{(\pi)}(x) + \frac{k-b_{j+1}}{a_{j+1}} P_{m k + j}^{(\pi)}(x).
\]
\end{proposition}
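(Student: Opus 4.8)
The plan is to verify the claimed identity directly by computing the quotient $P_{mk+j+1}^{(\pi)}(x)/P_{mk+j}^{(\pi)}(x)$ and substituting it into the proposed right-hand side. By Corollary~\ref{cor_compat} the basis $\cC_{\mathbf{a},\mathbf{b}}$ is $(0,1)$-compatible with $X$, so we already know \emph{a priori} that $x P_{mk+j}^{(\pi)}(x)$ is a $\K$-linear combination of $P_{mk+j}^{(\pi)}(x)$ and $P_{mk+j+1}^{(\pi)}(x)$; what remains is only to pin down the two coefficients, and this is exactly what the quotient computation accomplishes. By property \textbf{P2} this quotient is itself a polynomial (in fact linear in $x$), which makes the substitution meaningful.

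First I would compute the ratio, the one point requiring care being the index wrap-around built into~(\ref{equ:product}). When $0 \le j \le m-2$, passing from $P_{mk+j}^{(\pi)}$ to $P_{mk+j+1}^{(\pi)}$ advances the lower index of the $(j+1)$-th factor from $k$ to $k+1$, whereas when $j = m-1$ we have $mk+j+1 = m(k+1)$ and it is the $m$-th factor whose lower index advances from $k$ to $k+1$. In both cases the changing factor is indexed by $j+1$ (with $j+1=m$ in the boundary case), so a single formula covers both:
\[
\frac{P_{mk+j+1}^{(\pi)}(x)}{P_{mk+j}^{(\pi)}(x)}\ =\ \frac{\binom{a_{j+1}x+b_{j+1}}{k+1}}{\binom{a_{j+1}x+b_{j+1}}{k}}\ =\ \frac{a_{j+1}x+b_{j+1}-k}{k+1},
\]
using the elementary identity $\binom{N}{k+1}\big/\binom{N}{k} = (N-k)/(k+1)$ with $N = a_{j+1}x+b_{j+1}$.

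Next I would substitute this into the claimed expression. Writing $P := P_{mk+j}^{(\pi)}(x)$ for brevity, the right-hand side collapses via
\[
\frac{k+1}{a_{j+1}}\cdot\frac{a_{j+1}x+b_{j+1}-k}{k+1}\,P\ +\ \frac{k-b_{j+1}}{a_{j+1}}\,P\ =\ \frac{(a_{j+1}x+b_{j+1}-k)+(k-b_{j+1})}{a_{j+1}}\,P\ =\ x\,P,
\]
which is precisely $xP_{mk+j}^{(\pi)}(x)$, completing the verification. The coefficients $\frac{k+1}{a_{j+1}}$ and $\frac{k-b_{j+1}}{a_{j+1}}$ are well-defined elements of $\K(k)$, since $a_{j+1}\in\N\setminus\{0\}$ is a nonzero scalar in the characteristic-zero field $\K$.

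I do not expect a genuine obstacle here: the whole content is the one-line ratio computation and the telescoping cancellation above. The only step demanding attention is the bookkeeping in the wrap-around case $j=m-1$, where one must confirm that it is still the $(j+1)$-th (namely $m$-th) factor that changes and that its lower index shifts from $k$ to $k+1$ exactly as in the generic case $j<m-1$, so that the unified quotient formula — and hence the single closed form in the statement — applies across all residues $j$.
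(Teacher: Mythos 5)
Your proposal is correct and follows essentially the same route as the paper's own proof: compute the quotient $P_{mk+j+1}^{(\pi)}(x)/P_{mk+j}^{(\pi)}(x) = \bigl(a_{j+1}x+b_{j+1}-k\bigr)/(k+1)$ and substitute it into the right-hand side to recover $x\,P_{mk+j}^{(\pi)}(x)$. Your explicit check of the wrap-around case $j=m-1$ is a detail the paper leaves implicit, but the argument is otherwise identical.
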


\begin{proof}
\[
\frac{k+1}{a_{j+1}} P_{m k + j + 1}^{(\pi)}(x) + \frac{k-b_{j+1}}{a_{j+1}} P_{m k + j}^{(\pi)}(x)\ =\ P_{m k + j}^{(\pi)}(x) \cdot f(x)
\]
where
\begin{eqnarray*}
f(x) &=& \frac{k+1}{a_{j+1}} \cdot \frac{P_{m k + j + 1}^{(\pi)}(x)}{P_{m k + j}^{(\pi)}(x)} + \frac{k-b_{j+1}}{a_{j+1}}\\
&=& \frac{k+1}{a_{j+1}} \cdot \frac{\binom{a_{j+1} x+b_{j+1}}{k+1}}{\binom{a_{j+1} x+b_{j+1}}{k}} + \frac{k-b_{j+1}}{a_{j+1}}\\ 
&=& \frac{k+1}{a_{j+1}} \cdot \frac{a_{j+1} x + b_{j+1} - k}{k+1} + \frac{k-b_{j+1}}{a_{j+1}}\ =\ x.
\end{eqnarray*}
\end{proof}

Examples of factorial shift-compatible quasi-triangular polynomial bases that are \emph{not} of the type $\cC_{\mathbf{a},\mathbf{b}}$ are given in Section \ref{sec:ex}.

\section{Sieved polynomial bases}
\label{sec:Sieved}

Now we can use 
Procedure \textsc{AssociatedOp} on p.~\pageref{alg:AssocOp} to find the associated operator $\cR L$ where $\cB = \cC_{\mathbf{a},\mathbf{b}}$. 
Notice however that for $m > 1$, the coefficients $\alpha_{k,i}$ expressing the actions of $E$ resp.\ $X$ on $\cB$ are not rational functions of $k$ anymore, but conditional expressions evaluating to $m$ generally distinct rational functions, depending on the residue class of $k \bmod m$ (cf.\ Example \ref{xk2E} with $m=2$, and Proposition \ref{x}). So the coefficients of $\cR L$, obtained by composing and adding the operators $\cR E$ and $\cR X$ repeatedly, will contain quite complicated conditional expressions. In addition, $\ord \cR L$ may exceed $\ord L$ by a factor of $ma$ which can be exponential in input size. 

To overcome these inconveniences, we note that product bases represent a special case of \textbf{\emph{sieved polynomial bases}} where the definition of the $k$-th basis element $P_k(x)$ depends on the residue of $k$ modulo some $m \in \N$, $m \ge 1$ (for similar phenomena in the theory of orthogonal polynomials satisfying three-term recurrences, cf.\ \cite{sieved} and the series of papers  \cite{Ism8}--\cite{Ism10}, \cite{Ism1}--\cite{Ism9}). For a sieved basis $\cB$ with modulus $m$ (an \emph{$m$-sieved basis}, for short) we do not attempt to compute $\cR L$ directly but represent it by a \textbf{\emph{matrix $\left[\cR L\right] = \left[L_{r,j}\right]_{r,j=0}^{m-1}$  of operators}} where \emph{$L_{r,j} \in \cE$ expresses the contribution of the $j$-th $m$-section  $s_j^m \sigma_\cB y$ of the coefficient sequence of $y$ to the $r$-th $m$-section $s_r^m \sigma_\cB (L y)$ of the coefficient sequence of $Ly$} (see Definition~\ref{def1} and Notation \ref{notL_B} for the definitions of $s_j^m$ and $\sigma_\cB$, $\cE$, respectively). Note that being $m$-sieved is not an intrinsic property of a polynomial basis, but rather describes its presentation.

\begin{proposition}
\label{mainprop}
Let $L \in {\cal L}_{\K[x]}$,  $\cB = \langle P_n(x)\rangle_{n=0}^\infty$ (a factorial basis of $\K[x]$), $m \in \N \setminus \{0\}$, $A, B \in \N$ and $\alpha_{k,j,i} \in \K$ be such that for all $k \in \N$ and $j \in \{0,1,\ldots,m-1\}$,
\begin{equation}
\label{LPskj}
L P_{mk+j}(x) = \sum_{i=-A}^B \alpha_{k,j,i} P_{mk+j+i}(x).
\end{equation}
Furthermore, for all $r,j \in \{0,1,\ldots,m-1\}$ define
\begin{eqnarray}
\label{Lrjdef}
\ L_{r,j}\ := \!\!\!\!\!\sum_{\,\myatop{A \le i \le B}{i+j \equiv r \pmod{\!m}}} \!\!\!\!\!\!\!\!\!\!\alpha_{k+\frac{r-i-j}{m},j,i} S^{\frac{r-i-j}{m}} \in\ \cE 
\end{eqnarray}
(to keep notation simple, we do not make the dependence of $ L_{r,j}$ on $m$ explicit).
Then for every $y \in \K[[\cB]]$  and $r \in \{0,1,\ldots,m-1\}$,
\begin{equation}
\label{Lrj}
 s_r^m \sigma_\cB (L y)\ =\ \sum_{j=0}^{m-1} L_{r,j}\,  s_j^m \sigma_\cB y.
\end{equation}
\end{proposition}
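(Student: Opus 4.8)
The plan is to expand $y$ in the basis and chase indices, reorganizing a single coefficient sequence into its $m$ sections. Write $y(x)=\sum_{n=0}^\infty c_n P_n(x)$, so that $\sigma_\cB y = c$, and extend $L$ to $\KK$ by linearity as in Definition~\ref{Lext}. First I would split the summation index $n$ according to its residue modulo $m$, writing $n=mk+j$ with $j\in\{0,1,\dots,m-1\}$ and $k\in\N$, and then substitute the compatibility relation~\eqref{LPskj} to obtain
\[
L y\ =\ \sum_{j=0}^{m-1}\sum_{k=0}^\infty c_{mk+j}\sum_{i=-A}^{B}\alpha_{k,j,i}\,P_{mk+j+i}(x).
\]
Throughout I would keep in mind the boundary conventions $P_\ell(x)=0$ and $c_\ell=0$ for $\ell<0$, so that no spurious low-index terms appear.

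Next I would read off the coefficient of a fixed basis element $P_\ell(x)$ in $Ly$, i.e.\ the entry $\bigl(\sigma_\cB(Ly)\bigr)_\ell$, by collecting exactly those triples $(k,j,i)$ with $mk+j+i=\ell$. To extract the $r$-th $m$-section I then set $\ell=mq+r$; the constraint $mk+j+i=mq+r$ forces $i+j\equiv r\pmod{m}$, and, writing $p:=(r-i-j)/m\in\Z$, gives $k=q+p$. Under this change of variables the contributing coefficient becomes $c_{m(q+p)+j}=\bigl(s_j^m\,\sigma_\cB y\bigr)_{q+p}$ and the structure constant becomes $\alpha_{q+p,j,i}$, so that
\[
\bigl(s_r^m\,\sigma_\cB(Ly)\bigr)_q\ =\ \sum_{j=0}^{m-1}\ \sum_{\myatop{-A\le i\le B}{i+j\equiv r\pmod{m}}}\alpha_{q+p,j,i}\,\bigl(s_j^m\,\sigma_\cB y\bigr)_{q+p}.
\]

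Finally I would recognize the inner sum as the action of $L_{r,j}$ on the sequence $s_j^m\,\sigma_\cB y$: by the definition~\eqref{Lrjdef} of $L_{r,j}$ and the action $S^p d(q)=d(q+p)$ in $\cE$, the summand $\alpha_{k+p,j,i}S^p$ evaluated at $q$ produces precisely $\alpha_{q+p,j,i}\,d(q+p)$ with $d=s_j^m\,\sigma_\cB y$, whence summing over $j$ yields~\eqref{Lrj}. I expect the only real obstacle to be the bookkeeping in this change of variables: one must verify that the divisibility condition $i+j\equiv r\pmod{m}$ is exactly what makes $k=q+p$ an integer, that the shift amount $p$ recorded in $L_{r,j}$ matches the shift induced on the $j$-th section, and that the coefficient index $k+p$ appearing in~\eqref{Lrjdef} lines up with $q+p$ after the substitution $k=q+p$. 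No convergence issue arises, since for each fixed $q$ only finitely many values of $i$, and hence finitely many terms, occur.
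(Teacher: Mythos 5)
Your proposal is correct and follows essentially the same route as the paper's proof: expand $y$ in the basis, split the index by residue mod $m$, apply the compatibility relation, and reindex via $k\mapsto k+\frac{r-i-j}{m}$ under the congruence $i+j\equiv r\pmod m$, with the same boundary conventions $P_\ell=c_\ell=0$ for $\ell<0$. The only cosmetic difference is that you extract the coefficient of a fixed $P_{mq+r}(x)$ directly, whereas the paper reindexes inside the multiple sum before reading off the sections.
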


\begin{proof}
Write $y(x) = \sum_{n=0}^\infty c_n P_n(x)$ as the sum of its $m$-sections
\begin{equation}
\label{y}
y(x) = \sum_{j=0}^{m-1} \sum_{k=0}^\infty c_{mk+j} P_{mk+j}(x) = \sum_{j=0}^{m-1} \sum_{k=0}^\infty (s_j^mc)_k P_{mk+j}(x).
\end{equation}
Then
\begin{align}
L y(x) &=\  \sum_{j=0}^{m-1} \sum_{k=0}^\infty (s_j^mc)_k L P_{mk+j}(x) = \sum_{j=0}^{m-1} \sum_{k=0}^\infty  (s_j^mc)_k \sum_{i=-A}^B \alpha_{k,j,i} P_{mk+j+i}(x) \label{Ly(x)} \\
&=\ \sum_{j=0}^{m-1}\sum_{r=0}^{m-1} \sum_{\myatop{\,-A \le i \le B}{\, i+j \equiv r \pmod{\!m}} } \sum_{k=0}^\infty \alpha_{k,j,i}\, \left(s_j^mc\right)_k\,  P_{mk+i+j}(x) \label{residue} \\
&= \sum_{r,j=0}^{m-1} \sum_{\myatop{\,-A \le i \le B }{\, i+j \equiv r \pmod{\!m}}} \sum_{k=\frac{i+j-r}{m}}^\infty \alpha_{k+\frac{r-i-j}{m},j,i}\, \left(s_j^mc\right)_{k+\frac{r-i-j}{m}}\,  P_{mk+r}(x)\label{subsk} \\
&= \sum_{r,j=0}^{m-1} \sum_{\myatop{\,-A \le i \le B }{\, i+j \equiv r \pmod{\!m}}} \sum_{k=0}^\infty \alpha_{k+\frac{r-i-j}{m},j,i}\left(S^{\frac{r-i-j}{m}}s_j^mc\right)_k  P_{mk+r}(x) \label{zero}\\
&=\ \sum_{r=0}^{m-1} \sum_{k=0}^\infty \bigg(\sum_{j=0}^{m-1} L_{r,j}\, s_j^mc\bigg)_k\,  P_{mk+r}(x) \label{final}
\end{align}
where in (\ref{Ly(x)}) we used  (\ref{LPskj}), in (\ref{residue}) we reordered summation on $i$ with respect to the residue class of $i+j \bmod m$, (\ref{subsk}) was obtained by replacing $k$ with $k+\frac{r-i-j}{m}$, (\ref{zero}) by noting that 
\begin{align*}
k < 0\ \ &\Longrightarrow\ \ mk+r < 0\ \ \Longrightarrow\ \ P_{mk+r}=0, \\
k < \frac{i+j-r}{m}\ \ &\Longrightarrow\ \ \left(s_j^mc\right)_{k+\frac{r-i-j}{m}} = \left(s_j^mc\right)_{k - \frac{i+j-r}{m}} = 0,
\end{align*}
and (\ref{final}) by using (\ref{Lrjdef}). Now the equality of $Ly(x)$ and the series in (\ref{final}) can be restated as (\ref{Lrj}). 
\end{proof}

\begin{corollary}
\label{cor}
Let $L$, $\cB$, $m$, and $L_{r,j}$ for $r \in \{0,1,\ldots,m-1\}$ be as in Proposition \ref{mainprop}, and let $p \in \KK$. Then 
\[
L y = p\quad \Longleftrightarrow\quad  \forall r \in \{0,1,\ldots,m-1\}\!: \sum_{j=0}^{m-1} L_{r,j}\,  s_j^m \sigma_\cB y\ =\ s_r^m \sigma_\cB p.
\]
\end{corollary}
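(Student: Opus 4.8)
The plan is to derive the corollary directly from Proposition~\ref{mainprop}, which already establishes the key identity \eqref{Lrj} relating the $m$-sections of $\sigma_\cB(Ly)$ to those of $\sigma_\cB y$. The essential observation is that $p$ and $Ly$ are both elements of $\KK$, so their coefficient sequences $\sigma_\cB p$ and $\sigma_\cB(Ly)$ are determined, and two formal polynomial series are equal precisely when all their coefficients (equivalently, all $m$-sections of their coefficient sequences) coincide.

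First I would note that since $\sigma_\cB$ is injective (a formal polynomial series is determined by its coefficient sequence, as $\cB$ is a basis), we have
\[
L y = p \quad\Longleftrightarrow\quad \sigma_\cB(Ly) = \sigma_\cB p.
\]
Next I would observe that an equality of two sequences in $\K^\Z$ holds if and only if it holds on each of the $m$ multisections, i.e.
\[
\sigma_\cB(Ly) = \sigma_\cB p \quad\Longleftrightarrow\quad \forall r \in \{0,1,\ldots,m-1\}\!: s_r^m \sigma_\cB(Ly) = s_r^m \sigma_\cB p.
\]
This is immediate since every index $k\in\Z$ is uniquely of the form $mk'+r$ with $r \in \{0,1,\ldots,m-1\}$, so the maps $s_r^m$ jointly recover the whole sequence.

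Finally I would substitute the formula \eqref{Lrj} from Proposition~\ref{mainprop}, namely $s_r^m \sigma_\cB(Ly) = \sum_{j=0}^{m-1} L_{r,j}\, s_j^m \sigma_\cB y$, into the left-hand side of each component equation. Chaining the three equivalences gives exactly
\[
L y = p \quad\Longleftrightarrow\quad \forall r \in \{0,1,\ldots,m-1\}\!: \sum_{j=0}^{m-1} L_{r,j}\, s_j^m \sigma_\cB y = s_r^m \sigma_\cB p,
\]
which is the claimed statement.

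I do not expect any genuine obstacle here: the corollary is essentially a restatement of Proposition~\ref{mainprop} combined with the injectivity of $\sigma_\cB$ and the trivial fact that sequence equality is equivalent to termwise (hence sectionwise) equality. The only point requiring a moment of care is to confirm that $\sigma_\cB$ is injective on $\KK$ so that $Ly=p$ can be tested on coefficient sequences; this follows at once from $\cB$ being a basis of $\K[x]$ and the definition of formal polynomial series, where each element has a unique expansion $\sum_k c_k P_k(x)$.
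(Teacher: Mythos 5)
Your proposal is correct and follows exactly the same three-step chain of equivalences as the paper's own proof: reduce $Ly=p$ to equality of coefficient sequences via $\sigma_\cB$, split that equality into its $m$ multisections, and substitute identity (\ref{Lrj}) from Proposition~\ref{mainprop}. The only difference is that you spell out the injectivity of $\sigma_\cB$, which the paper leaves implicit.
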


\begin{proof}
By Proposition \ref{mainprop},
\begin{align*}
L y = p\ &\Longleftrightarrow\ \sigma_\cB(L y) = \sigma_\cB p \\
&\Longleftrightarrow\ \forall r \in \{0,1,\ldots,m-1\}\!: s_r^m \sigma_\cB(L y) = s_r^m \sigma_\cB p  \\
&\Longleftrightarrow\ \forall r \in \{0,1,\ldots,m-1\}\!: \sum_{j=0}^{m-1} L_{r,j} s_j^m \sigma_\cB y = s_r^m \sigma_\cB p. \ \quad\qquad\qquad\quad \Box  
\end{align*}
\end{proof}

\smallskip
Note that for $m=1$, Proposition \ref{mainprop} and Corollary \ref{cor} turn into Theorem \ref{RBL}.1 and Theorem \ref{RBL}.2, respectively (with $\alpha_{k,i} = \alpha_{k,0,i}$ and $\cR L = L_{0,0}$).

\begin{notation} $\left[\cR L\right] := \left[L_{r,j}\right]_{r,j=0}^{m-1} \in M_m(\cE)$ where $L_{r,j}$ is as given in (\ref{Lrjdef}).
\end{notation}

\begin{proposition}
\label{matrixhom}
Let $L^{(1)}, L^{(2)} \in \cL$. Then
\[
\left[\cR \left(L^{(1)} L^{(2)}\right)\right]\ =\ \left[\cR L^{(1)}\right]\!\!\left[\cR L^{(2)}\right].
\]
\end{proposition}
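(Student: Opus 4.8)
The plan is to realize the matrix assignment $[\cR\,\cdot\,]$ as a composite of two algebra homomorphisms and then invoke the single-operator case. Writing $\mu\colon\cE\to M_m(\cE)$ for the map sending a recurrence operator to its $m\times m$ block-matrix of $m$-sections, I would establish two facts: first, that $\mu$ is a homomorphism of $\K$-algebras (composition of operators corresponding to matrix multiplication); and second, that $[\cR L]=\mu(\cR L)$ for every $L\in\cL$, i.e.\ that the sieved matrix of $L$ is exactly the multisection image of the ordinary associated operator $\cR L$. Granting these, and using that $\cR\colon\cL\to\cE$ is an algebra isomorphism by Theorem~\ref{iso} (so $\cR(L^{(1)}L^{(2)})=(\cR L^{(1)})(\cR L^{(2)})$), the claim follows from the chain
\[
\begin{aligned}
[\cR(L^{(1)}L^{(2)})]&=\mu\bigl(\cR(L^{(1)}L^{(2)})\bigr)=\mu\bigl((\cR L^{(1)})(\cR L^{(2)})\bigr)\\
&=\mu(\cR L^{(1)})\,\mu(\cR L^{(2)})=[\cR L^{(1)}][\cR L^{(2)}].
\end{aligned}
\]

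For the first fact, I would define $\mu(M)=[M_{r,j}]_{r,j=0}^{m-1}$ by the requirement $s_r^m(Mc)=\sum_{j=0}^{m-1}M_{r,j}\,s_j^m c$ for all two-sided $c\in\K^{\Z}$; this determines each block uniquely because the $m$-section map $c\mapsto(s_0^mc,\dots,s_{m-1}^mc)$ is a bijection $\K^{\Z}\to(\K^{\Z})^m$ and because an operator in $\cE$ is pinned down by its action on all of $\K^{\Z}$. That $\mu$ respects composition is then immediate: applying $s_r^m$ to $(MN)c=M(Nc)$ and decomposing twice yields $(MN)_{r,j}=\sum_t M_{r,t}N_{t,j}$. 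For the second fact, I would take the explicit operator $\cR L=\sum_{p}\alpha_{k+p,-p}S^p$ from~\eqref{RL}, compute its block decomposition directly, and match it term-by-term with~\eqref{Lrjdef}: the residue condition $i+j\equiv r\pmod m$ and the shift exponent $(r-i-j)/m$ appearing in~\eqref{Lrjdef} arise precisely from rewriting $mk+j+i\equiv r$ and reindexing, under the identification $\alpha_{k,j,i}=\alpha_{mk+j,i}$ of the sieved and ordinary compatibility coefficients.

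The main obstacle is the well-definedness and injectivity underlying $\mu$, and it is exactly here that one must work with \emph{two-sided} sequences rather than with the one-sided coefficient sequences coming from $\KK$. Indeed, the most direct argument would instead apply Proposition~\ref{mainprop} to the composite and to each factor: since the extension of operators to $\KK$ respects composition, giving $L^{(1)}(L^{(2)}y)=(L^{(1)}L^{(2)})y$ (a routine check, valid because every coefficient of the result is a finite sum), one obtains $[\cR(L^{(1)}L^{(2)})]\,\Phi(y)=[\cR L^{(1)}][\cR L^{(2)}]\,\Phi(y)$ for all $y\in\KK$, where $\Phi(y)$ is the tuple $(s_0^m\sigma_\cB y,\dots,s_{m-1}^m\sigma_\cB y)$. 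However, $\Phi(\KK)$ consists only of tuples of sequences vanishing at negative indices, and such one-sided test sequences leave the ``corner'' coefficients $c_p(k)$ with $k+p<0$ of an operator in $\cE$ undetermined, so two matrices agreeing on all of $\Phi(\KK)$ need not be literally equal in $M_m(\cE)$. Passing to the two-sided multisection isomorphism $\mu$ closes this gap, since there the delta sequences $e^{(N)}$ for all $N\in\Z$ recover every coefficient and equality of actions forces equality of operators. This is why I would base the proof on $\mu$ and the factorization $[\cR\,\cdot\,]=\mu\circ\cR$ rather than on the action on $\KK$ alone.
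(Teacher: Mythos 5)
Your argument is correct, but it is organized quite differently from the paper's. The paper proves the identity by applying Proposition~\ref{mainprop} three times --- to $L^{(1)}L^{(2)}$ directly, and to $L^{(1)}$ and $L^{(2)}$ in succession --- obtaining $\sum_{j}[\cR L]_{t,j}\,s_j^m\sigma y=\sum_{j}\bigl([\cR L^{(1)}][\cR L^{(2)}]\bigr)_{t,j}\,s_j^m\sigma y$ for every $y\in\KK$ and concluding that the matrices coincide; this is exactly the ``most direct argument'' you describe and then deliberately set aside. You instead factor the matrix assignment as $[\cR\,\cdot\,]=\mu\circ\cR$, where $\mu$ is the block decomposition of an operator in $\cE$ with respect to two-sided multisection, prove that $\mu$ is multiplicative, and import multiplicativity of $\cR$ from Theorem~\ref{iso}. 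What your route buys is a clean resolution of the identification issue you raise: one-sided test tuples $\Phi(y)$ with $y\in\KK$ pin down the coefficients $c_p(k)$ of an operator in $\cE$ only for $k+p\ge 0$, so the paper's final step literally establishes equality of the two matrices only up to ``corner'' coefficients that never meet a nonzero entry (harmless in practice, and fixed by the paper's vanishing conventions, but worth noticing); your two-sided $\mu$, tested on delta sequences at all integers, removes that caveat entirely. The price is twofold: you must verify that $[\cR L]=\mu(\cR L)$, i.e.\ that \eqref{Lrjdef} is precisely the block decomposition of \eqref{RL} under the identification $\alpha_{k,j,i}=\alpha_{mk+j,i}$ --- you only sketch this, but the reindexing $p=-i$, $mk+r+p=m\bigl(k+\tfrac{r-i-j}{m}\bigr)+j$ does make it work out --- and your proof leans on Theorem~\ref{iso}, which the paper cites from \cite{APR} without reproving, whereas the paper's argument is self-contained modulo Proposition~\ref{mainprop}. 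Both proofs are sound for every use the proposition is put to later in the paper.
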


\begin{proof}
Write $L = L^{(1)} L^{(2)}$ and $\sigma = \sigma_\cB$. By (\ref{Lrj}),
\begin{eqnarray*}
s_t^m \sigma (L y) \ =\ \sum_{j=0}^{m-1} L_{t,j}\,  s_j^m \sigma y\ =\  \sum_{j=0}^{m-1}\left[\cR L\right]_{t,j} s_j^m \sigma y.
\end{eqnarray*}
On the other hand, by (\ref{Lrj}) applied to $L^{(1)}$ and $L^{(2)}$,
\begin{eqnarray*}
s_t^m \sigma (L y) &=&s_t^m \sigma (L^{(1)}L^{(2)} y)
\ =\ \sum_{r=0}^{m-1} L_{t,r}^{(1)}\, s_r^m \sigma (L^{(2)}y) \\
 &=& \sum_{r=0}^{m-1} L_{t,r}^{(1)}\sum_{j=0}^{m-1} L_{r,j}^{(2)}\,  s_j^m \sigma y 
\ = \sum_{j=0}^{m-1} \left(\sum_{r=0}^{m-1} L_{t,r}^{(1)} L_{r,j}^{(2)}\right) s_j^m \sigma y \\
&=& \sum_{j=0}^{m-1} \left(\left[\cR L^{(1)}\right]\!\!\left[\cR L^{(2)}\right]\right)_{t,j} s_j^m \sigma y.
\end{eqnarray*}
Hence
\begin{eqnarray*}
\sum_{j=0}^{m-1}\left[\cR L\right]_{t,j} s_j^m \sigma y\ = \sum_{j=0}^{m-1} \left(\left[\cR L^{(1)}\right]\!\!\left[\cR L^{(2)}\right]\right)_{t,j} s_j^m \sigma y
\end{eqnarray*}
for any $y \in \K[[\cB]]$, which implies the claim. 
\end{proof} 
\medskip

It follows that to compute $[\cR L]$ for an arbitrary operator $L \in \K[x]\langle E\rangle$, it suffices to apply the substitution
\begin{equation}
\label{substitute}
\begin{array}{llc}
E & \mapsto & [\cR E], \\
x & \mapsto & [\cR X], \\
1 & \mapsto & I_m
\end{array}
\end{equation}
where $I_m$ is the $m \times m$ identity matrix, to all terms of $L$. We adapt procedure \textsc{AssociatedOp} from p.~\pageref{AssOp} to compute the associated matrix of operators $[\cR L]$ for an $m$-sieved basis $\cB$ in the following way:

\medskip
\begin{center}
\large
\textbf{Procedure} \textsc{AssociatedOpSieved}
\end{center}
\label{AssocOpSieved}

\smallskip
\textsc{Input:} \verb+   + $L \in \K[x]\langle E\rangle$;\ $A \in \N$;

\verb+          + an $m$-sieved factorial basis $\cB = \langle P_{mk+j}(x)\rangle_{m\in \N \setminus\{0\},\ j \in \{0,1,\dots,m-1\}}$,\\[-10pt]

\verb+                                   + $\,(A,0)$-compatible with $E$

\medskip
\textsc{Output:} \verb+ + $\left[\cR L\right]\, =\, \left[L_{r,j}\right]_{r,j=0}^{m-1}\ \in\  M_m(\cE)$

\vskip 1pc
\begin{enumerate}
\item Using linear algebra, compute $\alpha_{k,j,i} \in \K$ for $j \in \{0,1,\dots,m-1\}$ and $i \in \{-A,-A+1, \dots,0\}\,$ such that\\[5pt]
\verb+         +$P_{mk+j}(x+1)\ \ = \ \displaystyle\sum_{i=-A}^0 \alpha_{k,j,i} P_{mk+j+i}(x)$\ \ \ for all $k\in\N$.\\[-1pt]

For $r, j \in \{0,1,\dots,m-1\}$\, let $\ E_{r,j}\ \, = \!\!\!\!\!\!\displaystyle\sum_{\myatop{\,-A \le i \le 0 }{\, i+j \equiv r \pmod{\!m}}}\!\!\!\!\!\alpha_{k+\frac{r-i-j}{m},j,i} S^{\frac{r-i-j}{m}}\, \in\, \cE.$\\

Let $[\cR E] =\left[E_{r,j}\right]_{r,j=0}^{m-1}$.

\item Using linear algebra, compute $\beta_{k,j,0},\, \beta_{k,j,1} \in \K$ for $j \in \{0,1,\dots,m-1\}$ such that\\[5pt]
\verb+         +$x P_{mk+j}(x)\, =\, \beta_{k,j,0}P_{mk+j}(x) + \beta_{k,j,1}P_{mk+j+1}(x)$\ \ for all $k\in\N$.\\[-3pt]

For $r, j \in \{0,1,\dots,m-1\}$ let\, $\ X_{r,j}\ \, = \!\!\!\!\!\!\displaystyle\sum_{\myatop{\,0 \le i \le 1 }{\, i+j \equiv r \!\!\pmod{\!m}}}\!\!\!\!\!\beta_{k+\frac{r-i-j}{m},j,i} S^{\frac{r-i-j}{m}}\, \in\, \cE.$\\

Let  $[\cR X] =\left[X_{r,j}\right]_{r,j=0}^{m-1}$.

\item Return the matrix of operators $[\cR L] =\left[L_{r,j}\right]_{r,j=0}^{m-1}$, obtained by applying substitution (\ref{substitute}) to $L$.
\end{enumerate}

\medskip
For a product binomial-coefficient basis $\cC_{\mathbf{a},\mathbf{b}}$, we can make the above procedure 
more specific, as already explained in part on p.~\pageref{expansion} ff.

\begin{proposition}
\label{xrjcab}
Let $\cB = \cC_{\mathbf{a},\mathbf{b}}$. Then for all $r,j \in \{0,1,\ldots,m-1\}$,
\[
X_{r,j}\ =\ [r = j]\,\frac{k-b_{j+1}}{a_{j+1}} +
[r = 0 \land j = m - 1]\,\frac{k}{a_{j+1}}\,S^{-1} +
[r = j+1]\,\frac{k+1}{a_{j+1}}
\]
where 
\[
[\varphi]\ =\ \left\{
\begin{array}{ll}
1, & {\ if\ } \varphi {\ is\ true}, \\
0, & {\ otherwise}
\end{array}
\right.
\]
is the Iverson bracket.
\end{proposition}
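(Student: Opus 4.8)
The plan is to read off the coefficients $\beta_{k,j,0}$ and $\beta_{k,j,1}$ directly from Proposition~\ref{x}, and then substitute them into the definition of $X_{r,j}$ supplied by step 2 of Procedure \textsc{AssociatedOpSieved}, which is the instance $A=0$, $B=1$ of formula (\ref{Lrjdef}). Proposition~\ref{x} gives the expansion
\[
x P_{mk+j}^{(\pi)}(x)\ =\ \frac{k+1}{a_{j+1}}\,P_{mk+j+1}^{(\pi)}(x) + \frac{k-b_{j+1}}{a_{j+1}}\,P_{mk+j}^{(\pi)}(x),
\]
so comparing with $x P_{mk+j}(x) = \beta_{k,j,0}P_{mk+j}(x) + \beta_{k,j,1}P_{mk+j+1}(x)$ yields $\beta_{k,j,0} = \frac{k-b_{j+1}}{a_{j+1}}$ and $\beta_{k,j,1} = \frac{k+1}{a_{j+1}}$. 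So there is nothing new to compute about the basis itself; the entire content of the proposition is the translation into the sieved matrix presentation.

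The remaining work is a case analysis of the sum defining $X_{r,j}$, whose index $i$ ranges over $\{0,1\}$ subject to $i+j \equiv r \pmod{m}$. For $i=0$ the congruence forces $r=j$, the shift exponent $\frac{r-i-j}{m}$ equals $0$, and the surviving term is $\beta_{k,j,0} = \frac{k-b_{j+1}}{a_{j+1}}$, accounting for the first Iverson term. For $i=1$ the congruence reads $r \equiv j+1 \pmod{m}$, and here one must distinguish two subcases according to whether the index wraps around modulo $m$.

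When $j < m-1$ we have $j+1 \in \{0,\ldots,m-1\}$, so $r = j+1$, the shift exponent is again $0$, and the term is $\beta_{k,j,1} = \frac{k+1}{a_{j+1}}$; this yields the $[r=j+1]$ contribution. When $j = m-1$ the congruence forces $r=0$ (since $j+1=m\equiv 0$), the shift exponent becomes $\frac{0-1-(m-1)}{m} = -1$, and the coefficient index is shifted to $k-1$, giving $\beta_{k-1,m-1,1} = \frac{k}{a_m} = \frac{k}{a_{j+1}}$ multiplied by $S^{-1}$; this is exactly the $[r=0 \land j=m-1]$ term. Collecting the three contributions reproduces the claimed formula.

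The main obstacle, such as it is, is purely a matter of bookkeeping the wraparound $i=1$, $j=m-1$: one must track correctly that the shift exponent $\frac{r-i-j}{m}$ evaluates to $-1$ and that the accompanying index substitution $k \mapsto k-1$ in the coefficient turns $\frac{k+1}{a_{j+1}}$ into $\frac{k}{a_{j+1}}$, thereby producing the $S^{-1}$ operator rather than a plain multiplication. Everything else is a direct substitution, and no further properties of $\cC_{\mathbf{a},\mathbf{b}}$ beyond Proposition~\ref{x} are needed.
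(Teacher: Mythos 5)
Your proposal is correct and follows essentially the same route as the paper: read off $\beta_{k,j,0}=\frac{k-b_{j+1}}{a_{j+1}}$ and $\beta_{k,j,1}=\frac{k+1}{a_{j+1}}$ from Proposition~\ref{x}, substitute into the instance $A=0$, $B=1$ of (\ref{Lrjdef}), and resolve the congruence $i+j\equiv r\pmod m$ case by case, with the wraparound $j=m-1$, $r=0$ producing the $S^{-1}$ term with coefficient $\beta_{k-1,m-1,1}=\frac{k}{a_{j+1}}$. The paper states this more compactly via its intermediate formula (\ref{xrjm}), but the content is identical.
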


\begin{proof}
From Proposition \ref{x} we read off that in this case
\begin{align}
\beta_{k,j,0} &\ =\ \frac{k-b_{j+1}}{a_{j+1}}, \label{kj0} \\
\beta_{k,j,1} &\ =\ \frac{k+1}{a_{j+1}}. \label{kj1}
\end{align}
From (\ref{Lrjdef}) with $A = 0$, $B = 1$ it follows that
\begin{align}
X_{r,j} &\ =\ [j\equiv r \!\!\!\!\!\pmod{m}]\,\beta_{k+\frac{r-j}{m},j,0} S^{\frac{r-j}{m}}\nonumber  \\
&\ +\ [j\equiv r-1 \!\!\!\!\!\pmod{m}]\,\beta_{k+\frac{r-j-1}{m},j,1} S^{\frac{r-j-1}{m}}. \label{xrjm}
\end{align}
Combining (\ref{kj0}) -- (\ref{xrjm}) with $0 \le r,j \le m-1$ yields the assertion. 
\end{proof}

\smallskip
\begin{center}
\large
\textbf{Algorithm} \textsc{AssociatedOpBC}
\end{center}
\label{AssocOpBC}

\smallskip
\textsc{Input:} \verb+   + $L \in \K[x]\langle E\rangle$, $m \in \N\setminus\{0\}$,

\verb+          + $\mathbf{a} = (a_1,a_2,\ldots,a_m) \in  (\N\setminus\{0\})^m$,\  $\mathbf{b} = (b_1,b_2,\ldots,b_m) \in \Z^m$

\medskip
\textsc{Output:} \verb+ + $\left[\cR L\right]\, =\, \left[L_{r,j}\right]_{r,j=0}^{m-1}$\ \ where $\cB = \cC_{\mathbf{a},\mathbf{b}}$

\vskip 1pc
\begin{enumerate}
\item $A := m \max_{1 \le i \le m} a_i$.
\item For $j = 0,1, \ldots,m-1$ let
\[
P^{(\pi)}_{mk+j}(x) \ :=\ \prod_{i=1}^j \binom{a_i x + b_i}{k+1} \cdot \prod_{i=j+1}^m \binom{a_i x + b_i}{k}.
\]
\item For $j = 0,1,\ldots,m-1$ do \\
\phantom{quad} for $i = -A, -A+1, \ldots, 0$ compute by simplification
\begin{eqnarray*}
Q_{mk+j+i}(x) &:=& \frac{P^{(\pi)}_{mk+j+i}(x)}{P^{(\pi)}_{mk+j-A}(x)}\ \in\ \K(k)[x], \\
R_{mk+j}(x) &:=& \frac{P^{(\pi)}_{mk+j}(x+1)}{P^{(\pi)}_{mk+j-A}(x)}\ \in\ \K(k)[x].
\end{eqnarray*}
\phantom{quad} Equate coefficients of $1, x, x^2, \ldots, x^A$ on both sides of
\[
R_{mk+j}(x)\ =\ \sum_{i = -A}^0 \alpha_{k,j,i} Q_{mk+j+i}(x)
\]
\phantom{quad} and solve the resulting system of $A+1$ linear algebraic equations\\
\phantom{quad} for the $A+1$ unknowns $\alpha_{k,j,i} \in \K(k)$, $i=-A, -A+1, \ldots, 0$.
\item For $r,j = 0,1,\ldots,m-1$ let
\begin{align*}
E_{r,j} &:= \!\!\!\!\!\sum_{\myatop{\,-A \le i \le 0 }{\, i+j \equiv r \!\!\!\!\!\pmod{\!m}}}\!\!\!\!\!\alpha_{k+\frac{r-i-j}{m},j,i} S^{\frac{r-i-j}{m}},\\
X_{r,j} &:= [r = j]\frac{k-b_{j+1}}{a_{j+1}} +
[r = 0 \land j = m - 1]\frac{k}{a_{j+1}}\,S^{-1} +
[r = j+1]\frac{k+1}{a_{j+1}}.
\end{align*}
Let $[\cR E] =\left[E_{r,j}\right]_{r,j=0}^{m-1}$,\ \ $[\cR X] =\left[X_{r,j}\right]_{r,j=0}^{m-1}$.

\item Return the matrix of operators $[\cR L] =\left[L_{r,j}\right]_{r,j=0}^{m-1}$, obtained by applying substitution (\ref{substitute}) to $L$.

\end{enumerate}

\section{Main examples}
\label{sec:ex}

To find definite-sum solutions $y\in\K^\N$ of a recurrence equation of the form $L y = p$ where $L \in \K[n]\langle E\rangle$ and $p \in \K[n]$, we select a quasi-triangular, shift-compatible, $m$-sieved factorial basis $\cB$, and use Procedure \textsc{DefiniteSumSols} on p.~\pageref{defsumsols}. First, we follow Procedure \textsc{AssociatedOpSieved} on p.~\pageref{AssocOpSieved} (or, if $\cB$ is a product-binomial coefficient basis, Algorithm \textsc{AssociatedOpBC} on p.~\pageref{AssocOpBC}) to compute the matrix of operators $[\cR L] =\left[L_{r,j}\right]_{r,j=0}^{m-1}$. Then we set up the system of linear recurrence equations
\begin{align}
L_{0,0}  s_0^m c + L_{0,1}  s_1^m c + \dots + L_{0,m-1} s_{m-1}^m c \ &=\ s_0^m d\nonumber\\
L_{1,0}  s_0^m c + L_{1,1}  s_1^m c + \dots + L_{1,m-1} s_{m-1}^m c \ &=\ s_1^m d\label{system}\\
&\hspace*{5pt}\vdots\nonumber\\
L_{m-1,0}  s_0^m c + L_{m-1,1}  s_1^m c + \dots + L_{m-1,m-1} s_{m-1}^m c \ &=\ s_{m-1}^m d \nonumber
\end{align}
for the unknown sequence $c$ where $d = \langle d_0, d_1, \dots, d_{\deg p},0,0,\dots\rangle$ is the sequence of coefficients of polynomial $p(n)$. From Corollary \ref{cor} it follows that $y\in\K^\N$ with $y_n = \sum_{k=0}^{f(n)} c_k P_k(n)$ and $f(n)$ as in Definition~\ref{qt} satisfies $Ly = p$ if and only if the $m$-sections $s_j^m c$ of the coefficient sequence $c$ of $y$ satisfy (\ref{system}).  So finally we solve this system for the unknown sequence $c$ in any way we can. 

\begin{example}
\label{ex:main}
We illustrate the process just described on the linear recurrence equation $Ly = 0$ where $L\in\Q[n]\langle E\rangle$ is the $7^\textrm{th}$-order operator
\begin{align*}
L\ &=\ (n+8) (27034107689\, n+247037440535)\, E^7\\
&-2 (n+7) (27034107689\, n^2+707256640479\, n+3519513987204)\, E^6\\
&+(27034107689\, n^4+1763504948043\, n^3+29534526868562\, n^2\\
&\qquad\qquad\qquad +187161930754966\, n+404930820118700)\, E^5\\
&-4 (121973169216\, n^4+3928755304511\, n^3+43197821249228\, n^2\\
&\qquad\qquad\qquad +198945697078905\, n+329021406797184)\, E^4\\
&+(2167208392754\, n^4+45326791213914\, n^3+347739537911929\, n^2\\
&\qquad\qquad\qquad +1165212776491303\, n+1439937061155596)\, E^3\\
&-2 (613023852648\, n^4+8954947813901\, n^3+52565810509778\, n^2\\
&\qquad\qquad\qquad +141274453841469\, n+142893654078876)\, E^2\\
&-(n+2)^2 (1109455476579\, n^2+3624719391913\, n-357803625948)\, E\\
&+24 (n+1)^2 (n+2) (8996538731\, n+29816968829),
\end{align*}
using the basis $\cB = \cC_{(1,1),(0,0)}$ from Example \ref{xk2E} with
\[
P_{2k}(n) =  \binom{n}{k}^2, \quad P_{2k+1}(n) =  \binom{n}{k+1}  \binom{n}{k}.
\]
To compute $[\cR E]$, comparing (\ref{LPskj}) with (\ref{p2k}) and (\ref{p2k1}) yields
\[
\begin{array}{ccccccc}
\alpha_{k,0,0} & = & 1 & \qquad & \alpha_{k,1,0} & = & 1 \\[2pt]
\alpha_{k,0,-1} & = & 2 & \qquad & \alpha_{k,1,-1} & = & \frac{2k+1}{k+1} \\[3pt]
\alpha_{k,0,-2} & = & 1 & \qquad & \alpha_{k,1,-2} & = & \frac{k}{k+1},
\end{array}
\]
hence by (\ref{Lrjdef})
\begin{equation}
\label{rbe}
\left[\cR E \right]\ =\ \left[
\begin{array}{cc}
E_{0,0} & E_{0,1} \\
E_{1,0} & E_{1,1}
\end{array}
\right]\ =\ \left[
\begin{array}{cc}
S +1 & \frac{2k+1}{k+1} \\
2 S & \frac{k+1}{k+2} S + 1
\end{array}
\right],
\end{equation}
while from Proposition \ref{xrjcab} it follows that
\begin{equation}
\label{rbx}
\left[\cR X \right]\ =\ \left[
\begin{array}{cc}
X_{0,0} & X_{0,1} \\
X_{1,0} & X_{1,1}
\end{array}
\right]\ =\ \left[
\begin{array}{cc}
k & k S^{-1} \\
k+1 & k
\end{array}
\right].
\end{equation}
To obtain $[\cR L]=[L_{r,j}]_{r,j=0}^1\in M_2(\cE)$, we take $L$ and substitute $[\cR E]$ for $E$, $[\cR X]$ for $n$ and the $2\times 2$ identity matrix, multiplied by $z$, for any constant $z\in\K$. This yields the operators $L_{0,0},L_{0,1},L_{1,0},L_{1,1}$ given in~\ref{app:ex:main}.

The next step is to find a non-zero solution of the system of recurrences (\ref{system}), which in the case $m=2$ turns into
\begin{align}
L_{0,0}  s_0^2 c + L_{0,1}  s_1^2 c \ &=\ 0, \nonumber\\
L_{1,0}  s_0^2 c + L_{1,1}  s_1^2 c \ &=\ 0. \label{syssol}
\end{align}
By means of any algorithm for finding hypergeometric solutions of recurrence equations, we discover that the sequence $\langle k!\rangle_{k=0}^\infty$ is annihilated both by $L_{0,0}$ and $L_{1,0}$, while the sequence $\langle 2^k\rangle_{k=0}^\infty$ is annihilated both by $L_{0,1}$ and $L_{1,1}$. Hence the pairs $\langle k!, 0\rangle$ and $\langle 0,2^k\rangle$ are two linearly independent
solutions of (\ref{syssol}), and the interlacings\footnote{see Definition~\ref{def1}}
\[
c^{(1)}\ =\ \Lambda(k!, 0), \qquad c^{(2)}\ =\ \Lambda(0, 2^k)
\]
give rise to \textbf{\emph{two linearly independent definite-sum solutions}} of $Ly = 0$:
\begin{align*}
y^{(1)}_n\ &=\ \sum_{k=0}^{2n} c^{(1)}_k P_k(n)\ =\ \sum_{j=0}^{n} j! P_{2j}(n)\ =\ \sum_{j=0}^{n} j! \binom{n}{j}^2,\\
y^{(2)}_n\ &=\ \sum_{k=1}^{2n-1} c^{(2)}_k P_k(n)\ =\ \sum_{j=0}^{n-1} 2^j P_{2j+1}(n)\ =\ \sum_{j=0}^{n-1} 2^j \binom{n}{j+1}\binom{n}{j}. 
\end{align*}
These solutions can be used to factor the operator $L$ in $\Q(n)\langle E\rangle$. 
Zeilberger's algorithm \cite{Zeil90, Zeil91} computes operators 
\begin{align*}
L_1\ &=\ E^2 - 2 (n+2) E + (n+1)^2 \\
L_2\ &=\  (n+1) (n+3) E^2  - 3 (n+2) (2 n+3) E + (n+1) (n+2)
\end{align*}
such that $L_1 y^{(1)} = 0$ and $L_2 y^{(2)} = 0$. Algorithm Hyper \cite{hyper} shows that $L_1$ and $L_2$ are minimal annihilators of $y^{(1)}$ resp.\ $y^{(2)}$, hence they divide $L$ from the right, and so does their least common left multiple
\begin{align*}
L_4\ &=\ (5 + n) (-2 - 37 n - 138 n^2 - 123 n^3 - 33 n^4 + n^5 + n^6) E^4\\
&- (4 + n) (-276 - 1434 n - 2946 n^2 - 2342 n^3 - 718 n^4 - 35 n^5 + 18 n^6 + 2 n^7) E^3\\
&+ (-6896 - 32704 n - 60998 n^2 - 55528 n^3 - 26184 n^4 - 5888 n^5 - 239 n^6\\
&\qquad\qquad\qquad\qquad + 144 n^7 + 24 n^8 + n^9) E^2\\
&-(2+n)^2 (-1686 - 6102 n - 8388 n^2 - 5286 n^3 - 1430 n^4 - 44 n^5 + 47 n^6\\
&\qquad\qquad\qquad\qquad + 6 n^7) E \\ 
&+(1 + n)^2 (2 + n) (-331 - 803 n - 680 n^2 - 225 n^3 - 13 n^4 + 7 n^5 + n^6).
\end{align*}
Indeed, in $\Q(n)\langle E\rangle$, $L = L_3 L_4$ where 
\begin{align*}
L_3\ &=\ \frac{1}{p(n)}\cdot 24 (29816968829 + 8996538731\, n)\\
&+ \frac{1}{p(n)q(n)}\cdot (-441648942295148 - 1125518881632823\, n\\
& - 1053315627513055 n^2 - 421766464222932 n^3 - 59182885147037 n^4\\
& + 6062805507491 n^5 + 2492693169923 n^6 + 186046100685 n^7) E\\
&- \frac{1}{q(n) r(n)}\cdot (-9439737938111061 - 12584340359048430\, n\\
&- 5882475183305081 n^2 - 948888850906974 n^3 + 102412128495705 n^4\\
&+ 57283235096898 n^5 + 7386765251173 n^6 + 325688030730 n^7) E^2\\
&+ \frac{1}{r(n)}\cdot (247037440535 + 27034107689\, n) E^3
\end{align*}
with
\begin{align}
p(n)\ &=\ -331 - 803 n - 680 n^2 - 225 n^3 - 13 n^4 + 7 n^5 + n^6,\label{p(n)}\\
q(n)\ &=\ -2044 - 2849 n - 1348 n^2 - 187 n^3 + 37 n^4 + 13 n^5 + n^6,\nonumber\\
r(n)\ &=\ -6377 - 5887 n - 1542 n^2 + 111 n^3 + 117 n^4 + 19 n^5 + n^6.\nonumber
\end{align}
This means that any solution $y$ of $L_1 y = 0$ or $L_2 y = 0$ also satisfies our original equation $Ly=0$. By using the well-known method of variation of constants (a.k.a.\ reduction of order) we find out that whenever $c_2(n) a_{n+2} + c_1(n) a_{n+1} + c_0(n) a_{n} = 0$ for some $c_0, c_1, c_2 \in \Q(n)$ and for all $n\in \N$, then also $c_2(n) b_{n+2} + c_1(n) b_{n+1} + c_0(n) b_{n} = 0$ where
\begin{align}
\label{reduce}
b_n\ =\ a_n \sum_{k=0}^{n-1}(-1)^k \prod_{i=1}^k \left(1+\frac{c_1(i-1)}{c_2(i-1)}\frac{a_i}{a_{i+1}}\right),
\end{align}
provided that it is well defined. Thus from $L_1 y^{(1)} = 0$ we obtain another solution
\[
y_n^{(3)}\ =\ y_n^{(1)} \sum_{k=0}^{n-1} (-1)^k \prod_{i=1}^k \left(1 - 2 (1 + i) \frac{y_i^{(1)}}{y_{i+1}^{(1)}}\right)
\]
satisfying $L_1 y^{(3)} = 0$ and linearly independent from $y^{(1)}$. If we now use formula (\ref{reduce}) to construct a sequence $y^{(4)}$, linearly independent from $y^{(2)}$ and satisfying  $L_2 y^{(4)} = 0$, we unfortunately obtain the same solution $y^{(2)}$ again (but, in compensation, discover a summation identity). However, increasing the lower bound in the product within (\ref{reduce}) by 1, we do obtain a solution
\[
y_n^{(4)}\ =\ y_n^{(2)} \sum_{k=0}^{n-1} (-1)^k \prod_{i=2}^k \left(1 - 3 \frac{(1 + i) (1 + 2 i)}{i (2 + i)} \frac{y_i^{(2)}}{y_{i+1}^{(2)}}\right)
\]
satisfying $L_2 y_n^{(4)} = 0$ at all $n \in \N$ except at $n=0$. It is easy to check that $y^{(1)}$, $y^{(2)}$, $y^{(3)}$, $y^{(4)}$ are four linearly independent sequences, represented explicitly and satisfying $Ly_n = 0$ for all $n\ge 1$.

Since $L_4 = \tilde L_1 L_1 = \tilde L_2 L_2$ where
\begin{align*}
\tilde L_1\ &=\ (5 + n) (-2 - 37 n - 138 n^2 - 123 n^3 - 33 n^4 + n^5 + n^6) E^2\\
& - (4 + n) (-256 - 1060 n - 1492 n^2 - 836 n^3 - 142 n^4 + 21 n^5 + 6 n^6) E\\ 
& + (2 + n)\, p(n)\\
\tilde L_2\ &=\ \frac{1}{3 + n}(-2 - 37 n - 138 n^2 - 123 n^3 - 33 n^4 + n^5 + n^6) E^2\\
& - \frac{1}{3 + n}(-393 - 1698 n - 2727 n^2 - 1917 n^3 - 574 n^4 - 36 n^5 + 14 n^6 + 2 n^7) E\\
& + (1 + n)\,p(n)
\end{align*}
with $p(n)$ as given in (\ref{p(n)}), it follows that $L$ factorizes as
\begin{align}
\label{factorization}
L\ =\ L_3 \tilde L_1 L_1\ =\ L_3 \tilde L_2 L_2.
\end{align}
Since $\ord L_3 = 3$, $\ord L_1 = \ord L_2 = \ord \tilde L_1 = \ord \tilde L_2 = 2$, and the operators $L_3$, $L_1$, $L_2$, $\tilde L_1$, $\tilde L_2$ have neither right nor left first-order factors in $\Q(n)\langle E\rangle$, equation (\ref{factorization}) gives two distinct factorizations of $L$ into \emph{irreducible factors} in $\Q(n)\langle E\rangle$. 
\end{example}

When we are interested in finding $y \in \ker L$ of the form 
\[
y(x)\ =\ 
\sum_{k=0}^\infty c_k P_{mk}(x)
\]
for some $m$-sieved basis $\cB = \langle P_k(x)\rangle_{k=0}^\infty$,
we have $s_0^m \sigma_\cB y = c$ and $s_j^m \sigma_\cB y = 0$ for all $j \in \{1,2,\dots,m-1\}$, hence Corollary \ref{cor} implies
\begin{eqnarray}
L y = 0 &\Longleftrightarrow& \forall r \in \{0,1,\ldots,m-1\}\!: L_{r,0}\,  s_0^m \sigma_\cB y\ =\ 0 \nonumber \\
&\Longleftrightarrow&  \forall r \in \{0,1,\ldots,m-1\}\!: L_{r,0}\, c\ =\ 0 \nonumber \\
&\Longleftrightarrow&  { \gcrd}(L_{0,0}, L_{1,0},\ldots, L_{m-1,0})\, c\ =\ 0. \nonumber
\end{eqnarray}
This means that any nonzero element of the first column of $[\cR L] = \left[L_{r,j}\right]_{r,j=0}^{m-1}$ may serve as a nontrivial annihilator $L'$ of $h$, and taking their greatest common right divisor might yield $L'$ of lower order. The fact that we only need the first column $[\cR L]e^{(1)}$ of $[\cR L]$ -- where $e^{(1)} = (1,0,\ldots,0)^{ T} \in \K^m$ -- can be used to advantage in the following way:

In step 3 of Procedure \textsc{AssociatedOpSieved} from p.~\pageref{AssocOpSieved}, as well as in step 5 of Algorithm \textsc{AssociatedOpBC} from p.~\pageref{AssocOpBC}, instead of computing the entire matrix $[\cR L]$ we only compute the vector $[\cR L]e^{(1)}$. To do so, we start with $e^{(1)}$, and proceed from right to left through the expression obtained from $L$ by substitution (\ref{substitute}), multiplying a matrix by a vector at each point. Finally, we return $L' = { \gcrd}(L_{0,0}, L_{1,0},\ldots, L_{m-1,0})$.


\begin{example}[Ap\'ery's $\zeta(2)$-recurrence \cite{Apery, Carsten07}]\label{exm:apery2}

Let 
\begin{equation}\label{equ:Apery2}
L  := (n + 2)^2 E^2 - (11 n^2 + 33 n + 25) E - (n + 1)^2
\end{equation}
and 
\[
K(n,k) = \binom{n}{k} \binom{n+k}{2k}. 
\]
Take
\begin{eqnarray*}
P_{3k}(x) &=& K(x,k) \ =\ \binom{x}{k} \binom{x+k}{2k}, \\
P_{3k+1}(x) &=& \binom{x}{k} \binom{x+k}{2k+1}, \\
P_{3k+2}(x) &=& \binom{x}{k+1} \binom{x+k}{2k+1},
\end{eqnarray*}
or, more concisely,
\[
P_k(x) \ =\ \binom{x}{\lfloor\frac{k+1}{3}\rfloor} \binom{x+\lfloor\frac{k}{3}\rfloor}{\lfloor\frac{2k+1}{3}\rfloor}.
\]
Clearly $\cB = \langle P_k(x)\rangle_{k=0}^{\infty}$ is factorial, so $X$ is $(0,1)$-compatible with $\cB$: 
\begin{eqnarray*}
x P_{3 k}(x) &=& (2k+1) P_{3 k+1}(x) + k P_{3 k}(x),\\
x P_{3 k+1}(x) &=& (k+1) P_{3 k+2}(x) + k P_{3 k + 1}(x),\\ 
x P_{3 k+2}(x) &=& 2(k+1) P_{3 k+3}(x) - (k + 1) P_{3 k+2}(x).
\end{eqnarray*}
It is not hard to see that $\cB$ is quasi-triangular with $f(n) = 3n$.
Furthermore, the root sequence of $\cB$ 
\[
\rho =  \langle 0, 0, -1, 1, 1, -2, 2, 2, -3, 3, 3, -4, 4, 4, -5, 5, 5, -6, \dots \rangle
\]
satisfies
\[
[\rho_1 + 1, \rho_2 + 1,  \dots, \rho_k+ 1]\ \subseteq\ [\rho_1, \rho_2, \dots,\rho_k, \rho_{k+1}, \dots, \rho_{k+3}]
\]
for all $k \ge 0$, hence by Proposition \ref{compatroots}, $E$ is $(3,0)$-compatible with $\cB$:
\begin{align*}
P_{3 k}(x + 1)\ &=\ P_{3 k}(x)
+\ \frac{3}{2} P_{3 k - 1}(x) + \frac{8 k - 3}{2 k} P_{3 k - 2}(x) +  P_{3 k - 3}(x), \\[7pt]
P_{3 k+1}(x + 1)\ &=\ P_{3 k+1}(x)
+\ \frac{3 k + 1}{2 k + 1} P_{3 k}(x) + \frac{k}{2 k + 1} P_{3 k - 1}(x) + \frac{2 k - 1}{2 k + 1} P_{3 k - 2}(x), \\[7pt]
P_{3 k+2}(x + 1) \ &=\ P_{3 k+2}(x) + \frac{3 k + 2}{k + 1} P_{3 k+1}(x) + P_{3 k}(x).
\end{align*}
The associated operator matrices are:
\begin{eqnarray*}
\left[\cR X \right]&=& \left[
\begin{array}{ccc}
k & 0 & 2k\, S^{-1} \\
2k+1 & k & 0 \\
0 & k+1 & -(k+1)
\end{array}
\right],\\
\left[\cR E \right]&=& \left[
\begin{array}{ccc}
S +1 & \frac{3k+1}{2k+1} & 1 \\
\frac{8k+5}{2(k+1)} S & \frac{2k+1}{2k+3} S + 1 & \frac{3k+2}{k+1} \\
\frac{3}{2} S & \frac{k+1}{2k+3} S & 1
\end{array}
\right].
\end{eqnarray*}
\medskip
For $L$ as defined in~\eqref{equ:Apery2} we obtain:
\begin{align*}
L_{0,0} \ =\ &\ (k+2)^2 S^2  + \frac{29 k^3 + 46 k^2  + 14 k - 1}{2k+1} S \\
&\ \hspace*{-10pt} - 2(37 k^2 + 41 k + 11), \\[5pt]
L_{1,0} \ =\ &\ \frac{(k + 2) (4 k + 5) (12 k^2 + 26 k + 11)}{2 (k + 1) (2 k + 3)} S^2 \\
&\ \hspace*{-10pt} - \frac{79 + 237 k + 199 k^2 + 47 k^3}{2 (1 + k)} S - (2 k + 1) (49 k + 31), \\[5pt]
L_{2,0} \ =\ &\ \frac{(k+2) (22 k^2 + 62 k + 43)}{2 (2 k + 3)} S^2 \\
&\ \hspace*{-10pt} - \frac{3}{2} (11 k^2  + 34 k + 25) S - 11(k+1)(2 k + 1),
\end{align*}
and
\begin{equation}\label{eq:apery2:gcrd}
\gcrd(L_{0,0}, L_{1,0}, L_{2,0}) = S - 2\,\frac{2k+1}{k+1}.
\end{equation}
So $c_k = \binom{2k}{k}$ satisfies $L_{0,0} c_k = L_{1,0} c_k = L_{2,0} c_k =0$. 
Since
\begin{eqnarray*}
c_k P_{3k}(n)\ =\ \binom{2k}{k} \binom{n}{k}\binom{n+k}{2k}\ = \ \binom{n}{k}^2\binom{n+k}{k},
\end{eqnarray*}
we have found that Ap\'ery's $\zeta(2)$-sequence
\begin{eqnarray*}
y^{(1)}_n &=& \sum_{k=0}^\infty c_k P_{3k}(n)\ =\ \sum_{k=0}^n \binom{n}{k}^2\binom{n+k}{k}
\end{eqnarray*}
is a solution of $L y = 0$ with $L$ as in~\eqref{equ:Apery2}. 
Using formula (\ref{reduce}) as in Example~\ref{ex:main}, we obtain another, linearly independent solution $y^{(2)}$ of $Ly = 0$
\begin{eqnarray*}
y_n^{(2)}\ =\ y_n^{(1)} \sum_{k=0}^{n-1} (-1)^k \prod_{i=1}^k \left(1 - \frac{11 i^2 + 11 i + 3}{(i + 1)^2} \frac{y_i^{(1)}}{y_{i+1}^{(1)}}\right).
\end{eqnarray*}

\end{example}

\begin{example}[Ap\'ery's $\zeta(3)$-recurrence \cite{Apery, Carsten07}]\label{exm:apery3}

Let 
\begin{equation}\label{equ:Apery3}
L := (n + 2)^3 E^2 - (2n+3)(17 n^2 + 51 n + 39) E + (n + 1)^3
\end{equation}
and 
\[
K(n,k) = \binom{n+k}{2k}^2. 
\]
Take
\begin{eqnarray*}
P_{4k}(x) &=& K(x,k) \ =\ \binom{x+k}{2k}^2, \\
P_{4k+1}(x) &=& \binom{x+k}{2k} \binom{x+k}{2k+1}, \\
P_{4k+2}(x) &=& \binom{x+k}{2k+1}^2, \\
P_{4k+3}(x) &=& \binom{x+k}{2k+1} \binom{x+k+1}{2k+2}.
\end{eqnarray*}
Clearly $\cB = \langle P_k(x)\rangle_{k=0}^{\infty}$ is factorial, so $X$ is $(0,1)$-compatible with $\cB$: 
\begin{eqnarray*}
x P_{4 k}(x) &=& (2k+1) P_{4 k+1}(x) + k P_{4 k}(x),\\
x P_{4 k+1}(x) &=& (2k+1) P_{4 k+2}(x) + k P_{4 k + 1}(x),\\ 
x P_{4 k+2}(x) &=& 2(k+1) P_{4 k+3}(x) - (k + 1) P_{4 k+2}(x),\\
x P_{4 k+3}(x) &=& 2(k+1) P_{4 k+4}(x) - (k + 1) P_{4 k+3}(x).
\end{eqnarray*}
It is not hard to see that $\cB$ is quasi-triangular with $f(n) = 4n$.
Furthermore, the root sequence of $\cB$ 
\[
\rho =  \langle 0, 0, -1, -1, 1, 1, -2, -2, 2, 2, -3, -3, 3, 3, -4, -4, 4, 4, \dots \rangle
\]
satisfies
\[
[\rho_1 + 1, \rho_2 + 1,  \dots, \rho_k+ 1]\ \subseteq\ [\rho_1, \rho_2, \dots,\rho_k, \rho_{k+1}, \dots, \rho_{k+4}]
\]
for all $k \ge 0$, hence by Proposition \ref{compatroots}, $E$ is $(4,0)$-compatible with $\cB$:
\begin{align*}
P_{4 k}(x + 1)\ =&\ P_{4 k}(x) + 2 P_{4 k - 1}(x) + \frac{3k-1}{k} P_{4 k - 2}(x) +  \frac{4 k - 1}{k}P_{4 k - 3}(x)\\
&+\ P_{4 k - 4}(x), \\
P_{4 k+1}(x + 1)\ =&\ P_{4 k+1}(x)+ \frac{4 k + 1}{2 k + 1} P_{4 k}(x) + \frac{2k}{2 k + 1} P_{4 k - 1}(x)\\
&+\ \frac{2 k - 1}{2 k + 1}\left(P_{4 k - 2}(x) + P_{4 k - 3}(x)\right), \\
P_{4 k+2}(x + 1) \ =&\ P_{4 k+2}(x) + 2 P_{4 k+1}(x) + P_{4 k}(x),\\
P_{4 k+3}(x + 1) \ =&\ P_{4 k+3}(x) + \frac{4 k + 3}{2(k + 1)} P_{4 k+2}(x) + \frac{6 k + 5}{2(k + 1)}P_{4 k+1}(x) + P_{4k}(x).
\end{align*}
The associated operator matrices are:
\begin{eqnarray*}
\left[\cR X \right]&=& \left[
\begin{array}{cccc}
k & 0 & 0 & 2k\, S^{-1} \\
2k+1 & k & 0 & 0 \\
0 & 2k+1 & -(k+1) & 0 \\
0 & 0 & 2(k+1) & -(k+1)
\end{array}
\right],\\
\left[\cR E \right]&=& \left[
\begin{array}{cccc}
S +1 & \frac{4k+1}{2k+1} & 1 & 1 \\
\frac{4k+3}{k+1} S & \frac{2k+1}{2k+3} S + 1 & 2 & \frac{6k+5}{2(k+1)} \\
\frac{3k+2}{k+1} S & \frac{2k+1}{2k+3} S & 1 & \frac{4k+3}{2(k+1)} \\
2 S & \frac{2(k+1)}{2k+3} S & 0 & 1 
\end{array}
\right].
\end{eqnarray*}
\medskip
For $L$ as defined in \eqref{equ:Apery3} we obtain:
\begin{align*}
L_{0,0} \ =\ &\ (k+2)^3 S^2 + \frac{\left(58 k^4+105 k^3-25 k^2-121 k-45\right) S}{2
   k+1}\\
   &-4 (2 k+1) \left(90 k^2+101 k+27\right), \\[5pt]
L_{1,0} \ =\ &\ \frac{(k+2)^2 \left(28 k^3+96 k^2+103 k+34\right)
   S^2}{(k+1) (2 k+3)}\\
   &-\frac{2 \left(75 k^4+414 k^3+796
   k^2+636 k+177\right) S}{k+1}-8 (37 k+27) (2 k+1)^2, \\[5pt]
L_{2,0} \ =\ &\ \frac{(k+2)^2 \left(26 k^3+87 k^2+90 k+28\right) S^2}{(k+1)
   (2 k+3)}\\[5pt]
   &-\frac{4 \left(42 k^4+215 k^3+390 k^2+295
   k+77\right) S}{k+1}-16 (10 k+7) (2 k+1)^2,\\[5pt]
L_{3,0} \ =\ &\ \frac{2 (k+2)^2 \left(12 k^2+33 k+22\right) S^2}{2 k+3}-8
   \left(22 k^3+96 k^2+137 k+64\right) S\\[5pt]
   &-64 (k+1) (2 k+1)^2,
\end{align*}
and
\begin{equation}\label{eq:apery3:gcrd}
\gcrd(L_{0,0}, L_{1,0}, L_{2,0}, L_{3,0}) = S -\frac{4 (2 k+1)^2}{(k+1)^2}.
\end{equation}
So $c_k = \binom{2k}{k}^2$ satisfies $L_{0,0} c_k = L_{1,0} c_k = L_{2,0} c_k = L_{3,0} c_k = 0$. Since
\begin{eqnarray*}
c_k P_{4k}(n)\ =\ \binom{2k}{k}^2 \binom{n+k}{2k}^2\ = \ \binom{n}{k}^2\binom{n+k}{k}^2,
\end{eqnarray*}
we have found that Ap\'ery's $\zeta(3)$-sequence
\begin{eqnarray*}
y^{(1)}_n &=& \sum_{k=0}^{\infty} c_k P_{4k}(n)\ =\ \sum_{k=0}^{\infty} \binom{n}{k}^2\binom{n+k}{k}^2
\end{eqnarray*}
is a solution of $L y = 0$ with $L$ as in~\eqref{equ:Apery3}.
Using formula (\ref{reduce}) as in Example~\ref{ex:main}, we obtain another, linearly independent solution $y^{(2)}$ of $Ly = 0$
\begin{eqnarray*}
y_n^{(2)}\ =\ y_n^{(1)} \sum_{k=0}^{n-1} (-1)^k \prod_{i=1}^k \left(1 - \frac{(2i + 1) (17 i^2 + 17 i + 5)}{(i + 1)^3} \frac{y_i^{(1)}}{y_{i+1}^{(1)}}\right).
\end{eqnarray*}

\end{example}

\section{Shuffled polynomial bases}
\label{sec:gen}

In Examples~\ref{exm:apery2} and~\ref{exm:apery3}, we built two quasi-triangular bases that involved
some binomial coefficients. These bases, however, are sieved polynomial bases that cannot be 
written as a product basis of those binomial coefficients.

In the case of product bases (see Definition~\ref{def:pbasis}) the root sequences 
of the factors are interlaced in a balanced way. However, as illustrated in the previous examples, sometimes we need to 
interlace the root sequences in an unbalanced way. The root sequence of the basis
for Example~\ref{exm:apery2} was:
\[\rho = \langle 0, 0, -1, 1, 1, -2, 2, 2, -3, 3, 3, \ldots\rangle,\]
while the root sequences of the two factors $\binom{x}{k}$ and $\left\{\binom{x+k}{2k}, \binom{x+k}{2k+1}\right\}$ 
are, respectively:
\[\rho^{(1)} = \langle 0, 1, 2, 3, \ldots\rangle,\quad \rho^{(2)} = \langle 0, -1, 1, -2, 2, -3, 3, \ldots\rangle.\]

The usual interlacing of these two root sequences is different from the sequence $\rho$. However, it is possible to
build up the sequence $\rho$ from $\rho_1$ and $\rho_2$:
\[\begin{array}{rcl}
   \rho_{3k+1} &=& \rho^{(2)}_{2k+1},\\
   \rho_{3k+2} &=& \rho^{(1)}_{k+1},\\
   \rho_{3k+3} &=& \rho^{(2)}_{2k+2}.
\end{array}\]

It is this idea on which we base the concept of a \emph{shuffled polynomial basis}.
\begin{definition}\label{def:shuffle}
   For $F, m \in \N \setminus \{0\}$, let $\cB_i = \langle P_k^{(i)}\rangle_{k=0}^\infty$ for $i = 1, 2, \ldots, F$
   be polynomial bases, and let $\bc = (c_0, ..., c_{m-1}) \in \{1, ..., F\}^m$. For all
   $k \in \N$ and index $j \in \{0,\ldots, m-1\}$, let
   \begin{equation}\label{equ:shuffle}
      Q_{mk+j}(x) = \prod_{i=1}^{F} P_{e_i(mk+j)}^{(i)}(x),
   \end{equation}
   where $e_i(mk+j) = ks_i(m) + s_i(j)$, and $s_i(t) = |\{r \in \{0,\ldots,t-1\};\ c_r = i\}|$ for all $t \in \{0,\ldots,m\}$.
   Then the sequence $\langle Q_n(x)\rangle_{n=0}^\infty$ is the \emph{\textbf{c}-shuffled basis of $\cB_1,\ldots,\cB_F$}.
\end{definition}

Intuitively, $F$ is the number of bases we are shuffling, $m$ is the length of the period of this shuffling,
and the vector $\bc$ indicates which basis is used in each step for each period. Next, the function $e_i(n)$
gives the index of the element of the $i$-th basis that is used in the $n$-th element of the shuffled basis. Finally, the 
function $s_i: \{0,\ldots,m\} \rightarrow \N$ counts how many times we have used the $i$-th basis
in a particular point of a period.

\begin{example}
   Let $\cB_1,\ldots, \cB_m$ be factorial bases. Then $\prod_{i=1}^m \cB_i$ is the $(1,\ldots, m)$-shuffled basis
   of $\cB_1,\ldots,\cB_m$. Namely, the number of factors coincides with the length of the period (i.e., $F = m$),
   and the vector $\bc = (1,2,\ldots, m)$ indicates in what order we interlace the bases.

   In this particular example, the functions $s_i(j)$ are the step functions:
   \[s_i(j) = \left\{\begin{array}{ll}
      0 & \text{if } j < i,\\
      1 & \text{otherwise} 
   \end{array}\right.\]
   which leads to the following formulas for $e_i(n)$:
   \[e_i(mk+j) = k \cdot 1 + s_i(j) = \left\{\begin{array}{ll}
      k & \text{if } j < i,\\
      k+1 & \text{otherwise}
   \end{array}\right.\]
   turning~\eqref{equ:shuffle} into~\eqref{equ:product}.
\end{example}

\begin{example}
   Let us show that the basis of Example~\ref{exm:apery2} is a shuffled basis using the following two bases:
   \[\cB_1 = \left\langle\binom{x}{n}\right\rangle_n^\infty,\quad
   \cB_2 = \left\langle\binom{x+n}{2n},\binom{x+n}{2n+1}\right\rangle_n^\infty.\]

   Take $F = 2$, $m = 3$, and $\bc = (2,1,2)$. Then we have that 
   \[s_1(0) = 0, s_1(1) = 0,\qquad s_1(2) = 1,\qquad s_1(3) = 1,\]
   \[s_2(0) = 0, s_2(1) = 1,\qquad s_2(2) = 1,\qquad s_2(3) = 2.\]

   This implies that the index functions $e_1(n)$ and $e_2(n)$ have the following values:
   \[e_1(3k + j) = k\cdot 1 + s_1(j) = \left\{\begin{array}{ll}
      k & \text{if }j = 0,\\
      k & \text{if }j = 1,\\
      k+1 & \text{if }j = 2,
   \end{array}\right.\]
   \[e_2(3k + j) = k\cdot 2 + s_2(j) = \left\{\begin{array}{ll}
      2k & \text{if }j = 0,\\
      2k+1 & \text{if }j = 1,\\
      2k+1 & \text{if }j = 2,
   \end{array}\right.\]
   Hence, we have that the $(2,1,2)$-shuffled basis of $\cB_1$ and $\cB_2$ is:
   \begin{eqnarray*}
      Q_{3k}(x) &=& P_{k}^{(1)}(x)P_{2k}^{(2)}(x) = \binom{x}{k}\binom{x+k}{2k}, \\
      Q_{3k+1}(x) &=& P_{k}^{(1)}(x)P_{2k+1}^{(2)}(x) = \binom{x}{k} \binom{x+k}{2k+1}, \\
      Q_{3k+2}(x) &=& P_{k+1}^{(1)}(x)P_{2k+1}^{(2)}(x) = \binom{x}{k+1} \binom{x+k}{2k+1},
   \end{eqnarray*}
   which are exactly the formulas displayed in Example~\ref{exm:apery2}.
\end{example}

Shuffled bases are a particular case of sieved polynomial bases, since the elements are defined modulo $m$.

\begin{definition}\label{def:sieved_compatible}
   Let $L \in {\cal L}_{\K[x]}$, $m \in \N\setminus\{0\}$, and let $\cB = \langle P_n(x)\rangle_{n=0}^{\infty}$ be a factorial basis of $\K[x]$. 
   We say that $L$ is \emph{$(A,B)$-compatible in $m$ sections with $\cB$} if there are $\alpha_{k,j,i} \in \K(k)$ such 
   that for all $k \in \N$ and $j \in \{0,\ldots,m-1\}$,
      \[L P_{mk+j}(x) = \sum_{i=-A}^B \alpha_{k,j,i} P_{mk+j+i}(x).\]
\end{definition}

Observe that in this definition we have used formula \eqref{LPskj}, but restricting ourselves to $\alpha_{k,j,i}$ 
being rational expressions in $k$. This means that being $(A,B)$-compatible with $\cB$ in the usual sense (see 
Definition~\ref{def:compat}) does not imply that the operator is $(A,B)$-compatible with $\cB$ in one section. 
We include this restriction for computational reasons. In general,
the coefficients $\alpha_{k,j,i}$ could belong to any computable ring of sequences closed under 
shift and dilation.

\begin{example}
   Let $\cB = \langle P_n(x)\rangle_n^{\infty}$ be the factorial basis from Example~\ref{xk2E}, defined
   by the formulas:
   \[P_{2n}(x) = \binom{x}{n}^2,\qquad P_{2n+1}(x) = \binom{x}{n+1}\binom{x}{n}.\]

   Equations~\eqref{p2k} and~\eqref{p2k1} show that $E$ is $(2,0)$-compatible in 
   two sections with $\cB$. However, $E$ is not $(2,0)$-compatible in one section with $\cB$. To 
   see that, it is enough to write a unified compatibility formula that holds for all $n \in \N$:
   \begin{equation*}
      P_n(x+1) = P_n(x) + 
         \frac{n}{\lceil n/2\rceil}P_{n-1}(x) + 
         \frac{\lfloor n/2 \rfloor}{\lceil n/2\rceil}P_{n-2}(x).
   \end{equation*}

   This formula fits Definition~\ref{def:compat}, but the coefficients of $P_{n-1}(x)$ and
   $P_{n-2}(x)$ are not rational in $n$, so $E$ is not $(2,0)$-compatible in one section with $\cB$.
\end{example}

Let $\cB_1, \cB_2, \ldots, \cB_F$ be factorial bases, compatible in sections with an 
endomorphism $L$ or with $X$, and let $\cB$ be their $\bc$-shuffled basis. The following 
lemmas show that then $\cB$ is also compatible in sections with $L$ (resp. with $X$). Moreover, the 
proofs of these lemmas show how to construct the coefficients $\alpha_{k,j,i} \in \K(k)$ for $cB$, 
given the corresponding coefficients for each of $\cB_1, \cB_2, \ldots, \cB_F$.

First, we show in Lemma~\ref{lem:sections} how to expand the number 
of sections of compatibility. This result holds for any polynomial basis. In Proposition~\ref{prop:fac_tri}, we show that 
a shuffled basis of quasi-triangular bases is again quasi-triangular. Then we proceed to extend the desired 
compatibilities to shuffled bases (Theorems~\ref{shuffle_X} and~\ref{shuffle_E}). These extensions make use
of Proposition~\ref{compat}, so they hold for factorial bases.

\begin{lemma}\label{lem:sections}
   Let $L \in {\cal L}_{\K[x]}$ and $\cB = \langle P_n(x)\rangle_{n=0}^{\infty}$ be a polynomial basis. If 
   $L$ is $(A,B)$-compatible in $m$ sections with $\cB$ then it is $(A,B)$-compatible in $tm$ sections 
   with $\cB$ for all $t \in \N\setminus\{0\}$.
\end{lemma}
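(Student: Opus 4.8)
The plan is to reduce $(A,B)$-compatibility in $tm$ sections to the assumed $(A,B)$-compatibility in $m$ sections by a change of indexing, keeping the same bounds $A$ and $B$. The key observation is that any index $n = tmk' + j'$ with $k' \in \N$ and $j' \in \{0,1,\ldots,tm-1\}$ can be rewritten in the form $n = mk + j$ with $j \in \{0,1,\ldots,m-1\}$. Concretely, I would first decompose $j'$ uniquely as $j' = mq + r$ with $q \in \{0,\ldots,t-1\}$ and $r \in \{0,\ldots,m-1\}$, so that $n = m(tk'+q) + r$; hence the corresponding ``coarse'' indices are $k = tk'+q$ and $j = r$.

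Next I would apply the hypothesis (Definition~\ref{def:sieved_compatible}) with these values of $k$ and $j$, obtaining
\[
L P_{tmk'+j'}(x) = L P_{m(tk'+q)+r}(x) = \sum_{i=-A}^{B} \alpha_{tk'+q,\,r,\,i}\, P_{m(tk'+q)+r+i}(x).
\]
Since $m(tk'+q)+r+i = tmk' + (mq+r) + i = tmk' + j' + i$, the right-hand side is precisely an expansion of $L P_{tmk'+j'}(x)$ as a combination of the $P_{tmk'+j'+i}(x)$ over $-A \le i \le B$, with the \emph{same} bounds $A$ and $B$. I would therefore define the new coefficients by
\[
\tilde\alpha_{k',\,j',\,i} := \alpha_{tk'+q,\,r,\,i}, \qquad \text{where } j' = mq+r .
\]

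The only point requiring slight care—and the step I expect to be the main (though still routine) obstacle—is verifying that these $\tilde\alpha_{k',j',i}$ again lie in $\K(k')$, as demanded by Definition~\ref{def:sieved_compatible}. This holds because each $\alpha_{k,r,i}$ is by assumption a rational function of $k$, while the substitution $k \mapsto tk'+q$ is an affine, hence polynomial, map in $k'$; composing a rational function with a non-constant polynomial (here $t \ge 1$ guarantees non-constancy) yields again a rational function, so $\tilde\alpha_{k',j',i} \in \K(k')$. Moreover $q \ge 0$ and $k' \ge 0$ force $tk'+q \ge 0$, so no evaluation at a problematic negative integer is introduced. Everything else is bookkeeping with the residue decomposition $j' = mq+r$, and since the reindexing is a bijection between the two parametrizations of $n$, the resulting expansion holds for all $k' \in \N$ and all $j' \in \{0,\ldots,tm-1\}$, establishing $(A,B)$-compatibility in $tm$ sections.
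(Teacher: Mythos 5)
Your proof is correct and follows essentially the same route as the paper's: both decompose $j' = mq + r$ (the paper writes $j = j_0 m + j_1$), rewrite $tmk'+j'$ as $m(tk'+q)+r$, and pull back the given compatibility coefficients via $\tilde\alpha_{k',j',i} = \alpha_{tk'+q,\,r,\,i}$. Your extra remark justifying that the substitution $k \mapsto tk'+q$ preserves rationality is a detail the paper states without comment, but it is the same argument.
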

\begin{proof}
   Let $k \in \N$ and $j \in \{0,\ldots, tm-1\}$. Write $j = j_0m + j_1$ with $j_1 \in \{0,\ldots m-1\}$. Then we have: 
   \begin{align*}
      L P_{(tm)k + j} = &\ L P_{(tk+j_0)m + j_1} = \sum_{i=-A}^B \alpha_{tk+j_0, j_1, i} P_{m(tk + j_0) + j_1 + i}(x) = \\
      = & \sum_{i=-A}^B \tilde{\alpha}_{k, j, i} P_{(tm)k + j + i}(x),
   \end{align*}
   proving the compatibility in $tm$ sections, with $\tilde{\alpha}_{k,j,i} = \alpha_{tk+j_0, j_1, i} \in \K(k)$. 
\end{proof}

\begin{proposition}\label{prop:fac_tri}
   Let $\cB = \langle Q_n(x)\rangle_{n=0}^{\infty}$ be the $\bc$-shuffled basis of 
   the factorial bases $\cB_i = \langle P_n^{(i)}(x)\rangle_{n=0}^{\infty}$ for $i=1,\ldots, F$. 
   Then $\cB$ is also factorial. Moreover, if all $\cB_i$ are 
   quasi-triangular, so is $\cB$.
\end{proposition}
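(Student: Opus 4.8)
The plan is to derive both assertions from a single bookkeeping identity describing how the factor-indices $e_i$ from \eqref{equ:shuffle} change when the overall index is increased by one. First I would record that for every $i \in \{1,\dots,F\}$ and every $N \in \N$,
\[
e_i(N+1) - e_i(N)\ =\ [\, i = c_{N \bmod m}\,],
\]
which one checks by splitting $N = mk+j$ into the case $j < m-1$ (where only $s_i(j)$ advances to $s_i(j+1)$) and the case $j = m-1$ (where $k$ advances to $k+1$ and $s_i(m-1)$ to $s_i(m)$), using $s_i(0)=0$ and $s_i(t+1)-s_i(t)=[c_t=i]$. Thus, as $N$ grows by one, exactly one of the $F$ factor-indices, namely the one indexed by $c_{N \bmod m}$, increases by $1$, and all others stay fixed.

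From this identity factoriality is immediate. For \textbf{P1}, since $\deg Q_N = \sum_{i=1}^F e_i(N)$ (each $\cB_i$ being factorial) and $\sum_i e_i(0) = 0$, the identity gives $\deg Q_{N+1} = \deg Q_N + 1$, whence $\deg Q_N = N$ by induction. For \textbf{P2}, writing $c^* = c_{N \bmod m}$, the identity yields
\[
\frac{Q_{N+1}(x)}{Q_N(x)}\ =\ \frac{P^{(c^*)}_{e_{c^*}(N)+1}(x)}{P^{(c^*)}_{e_{c^*}(N)}(x)}\ \in\ \K[x],
\]
the membership holding because $\cB_{c^*}$ satisfies \textbf{P2}. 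Hence $Q_N(x) \mid Q_{N+1}(x)$, so $\cB$ is factorial.

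For the ``moreover'' part, suppose each $\cB_i$ is quasi-triangular with strictly increasing $f_i \colon \N \to \N$ as in Definition~\ref{qt}. I would first establish the equivalence
\[
Q_N(n) \ne 0\ \Longleftrightarrow\ e_i(N) \le f_i(n)\ \text{ for all } i \in \{1,\dots,F\},
\]
using that, by factoriality of $\cB_i$ together with $P^{(i)}_{f_i(n)}(n) \ne 0$, one has $P^{(i)}_{\ell}(n) \ne 0$ exactly when $\ell \le f_i(n)$. Since $\sum_i s_i(m) = m \ge 1$, some $e_i(N) \to \infty$ as $N \to \infty$, so $\{N : Q_N(n) \ne 0\}$ is finite; it is nonempty because $Q_0 = \prod_i P_0^{(i)}$ is a nonzero constant. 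Define $f(n) := \max\{N : Q_N(n) \ne 0\}$. Property \textbf{P2} gives $Q_N(n)=0 \Rightarrow Q_{N+1}(n)=0$, so this set is a finite initial segment $\{0,1,\dots,f(n)\}$; Condition~1 of Definition~\ref{qt} follows, and Condition~2 holds by the choice of $f(n)$.

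It remains to show $f$ is strictly increasing, which I expect to be the only nonroutine step. From $Q_{f(n)}(n) \ne 0$ we get $e_i(f(n)) \le f_i(n)$ for every $i$; combining the increment bound $e_i(f(n)+1) \le e_i(f(n)) + 1$ with the strict monotonicity $f_i(n)+1 \le f_i(n+1)$ yields $e_i(f(n)+1) \le f_i(n+1)$ for all $i$, i.e.\ $Q_{f(n)+1}(n+1) \ne 0$, so $f(n+1) \ge f(n)+1$. This establishes that $\cB$ satisfies both conditions of Definition~\ref{qt} with a strictly increasing $f$; the remaining shift-compatibility clause of that definition is supplied by the compatibility results for shuffled bases proved below.
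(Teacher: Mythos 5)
Your proposal is correct. The factoriality half is essentially the paper's argument in compressed form: your increment identity $e_i(N+1)-e_i(N)=[\,i=c_{N\bmod m}\,]$ is exactly the content of the paper's two-case analysis ($j<m-1$ versus $j=m-1$), and both proofs then identify $Q_{N+1}/Q_N$ with a quotient of consecutive elements of the single factor basis $\cB_{c_{N\bmod m}}$. For quasi-triangularity, however, you take a genuinely different route. The paper invokes its root-sequence characterization (Proposition~\ref{prop:quasi}): it observes that the root sequence of $\cB$ interlaces the $\rho^{(i)}$ while preserving their internal order, and then checks the two conditions of that proposition. You instead verify Definition~\ref{qt} directly, constructing $f(n)=\max\{N: Q_N(n)\ne 0\}$ explicitly from the $f_i$, showing the support is a finite initial segment via \textbf{P2}, and deriving strict monotonicity from $e_i(f(n)+1)\le e_i(f(n))+1\le f_i(n)+1\le f_i(n+1)$. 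Your version buys an explicit, quantitative description of $f$ (namely $e_i(f(n))\le f_i(n)$ for all $i$, with $f(n)$ maximal) and avoids reasoning about subsequences of interlaced root sequences, which in the paper is done somewhat informally ("it is also clear that $\rho$ is a kind of interlacing\ldots"); the paper's version is shorter because it reuses Proposition~\ref{prop:quasi}. You are also right to flag that the shift-compatibility clause built into Definition~\ref{qt} is not addressed here but is supplied by the later compatibility theorems; the paper's proof is silent on this point.
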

\begin{proof}
   Let $n \in \N$ be written as $n = km + j$ with $j \in \{0,\ldots, m-1\}$. Using
   Definition~\ref{def:shuffle}, we have that 
   \[Q_n(x) = \prod_{i=1}^{F} P_{ks_i(m) + s_i(j)}^{(i)}(x)\]
   where 
   \[s_i(t) = |\{r \in \{0,1,\dots,t-1\};\ c_r = i\}|\]
   for all $t \in \{0,1,\dots,m\}$. We wish to show that the quotient $Q_{n+1}(x)/Q_n(x)$ is a 
   polynomial of degree 1. To this end, we distinguish two cases.
   
   \smallskip
   \textbf{a)} $j \in \{0,1,\dots,m-2\}$
   
   \smallskip
   \noindent
   Here $n+1 = k m + (j + 1)$ where $j+1 \in \{1,2,\dots,m-1\}$. Obviously
   \begin{align*}
   s_i(j) &= |\{r \in \{0,1,\dots,j-1\};\ c_r = i\}|, \text{ and}\\
   s_i(j+1) &= |\{r \in \{0,1,\dots,j-1,j\};\ c_r = i\}|\\
            &= \begin{cases}
                           s_i(j), & c_j \ne i\\
                           s_i(j)+1, & c_j = i,
                  \end{cases}
   \end{align*}                  
   hence the quotient
   \begin{align*}
   \frac{P_{k s_i(m) + s_i(j+1)}^{(i)}(x)}{P_{k s_i(m) + s_i(j)}^{(i)}(x)} &=
   \begin{cases}
            1, & c_j \ne i\\
            \frac{P_{k s_i(m) + s_i(j)+1}^{(i)}(x)}{P_{k s_i(m) + s_i(j)}^{(i)}(x)}, & c_j = i
   \end{cases}            
   \end{align*}
   is a polynomial in $x$ of degree at most 1, and so, since the basis $\cB_i$ is factorial, the quotient
   \begin{align}
   \label{Q_{bn+1}/Q_n}
   \frac{Q_{n+1}(x)}{Q_n(x)} &= \prod_{i=1}^{F} \frac{P_{k s_i(m) + s_i(j+1)}^{(i)}(x)}{P_{k s_i(m) + s_i(j)}^{(i)}(x)} = \frac{P_{k s_{c_j}(m) + s_{c_j}(j)+1}^{(c_j)}(x)}{P_{k s_{c_j}(m) + s_{c_j}(j)}^{(c_j)}(x)}
   \end{align}
   is a polynomial in $x$ of degree 1.
   
  \smallskip
  \textbf{b)} $j = m-1$ (hence $m = j+1$)

   \smallskip
   \noindent
   Here $n = km + (m-1)$ and $n+1 = (k+1)m + 0$. Obviously
   \begin{eqnarray}
      s_i(0) &=& |\{r \in \emptyset;\ c_r = i\}| = 0, \nonumber\\               
      s_i(m-1) &=& |\{r \in \{0,1,\dots,m-2\};\ c_r = i\}|, \text{ and} \nonumber\\ 
      s_i(m) &=& |\{r \in \{0,1,\dots,m-2, m-1\};\ c_r = i\}|\nonumber\\
               &=& \begin{cases}
                              s_i(m-1), & c_{m-1} \ne i \\
                              s_i(m-1)+1, & c_{m-1} = i, 
                     \end{cases}\label{s_i(m)_s_i(m-1)}
   \end{eqnarray}
   hence the quotient
   \begin{align*}
      \frac{P_{(k+1) s_i(m) + s_i(0)}^{(i)}(x)}{P_{k s_i(m) + s_i(m-1)}^{(i)}(x)} &=
      \begin{cases}
         \frac{P_{(k+1) s_i(m-1)}^{(i)}(x)}{P_{(k+1) s_i(m-1)}^{(i)}(x)}, & c_{m-1} \ne i\\
         \frac{P_{(k+1) (s_i(m-1)+1)}^{(i)}(x)}{P_{k (s_i(m-1)+1)+s_i(m-1)}^{(i)}(x)}, & c_{m-1} = i 
      \end{cases}\\
      &=
      \begin{cases}
         1, & c_{m-1} \ne i\\
         \frac{P_{(k+1) s_i(m-1)+k+1}^{(i)}(x)}{P_{(k+1) s_i(m-1)+k}^{(i)}(x)}, & c_{m-1} = i
      \end{cases}            
   \end{align*}
   is a polynomial in $x$ of degree at most 1, and the quotient
   \begin{align}
      \frac{Q_{n+1}(x)}{Q_n(x)} &= \prod_{i=1}^F \frac{P_{(k+1) s_i(m) + s_i(0)}^{(i)}(x)}{P_{k s_i(m) + s_i(m-1)}^{(i)}(x)} 
      = \frac{P_{(k+1) s_{c_{m-1}}(m-1)+k+1}^{({c_{m-1}})}(x)}{P_{(k+1) s_{c_{m-1}}(m-1)+k}^{({c_{m-1}})}(x)}
   \end{align}
   is again a polynomial in $x$ of degree 1, so we conclude that the $\bc$-shuffled basis $\cB$ is indeed factorial as claimed. 
   
   Note that in the case \textbf{b)} it follows from \eqref{s_i(m)_s_i(m-1)} that $s_{c_{m-1}}(m-1) + 1 = s_{c_{m-1}}(m)$, 
   and since $m-1 = j$, we have
   \begin{align*}
   \frac{Q_{n+1}(x)}{Q_n(x)} &=   \frac{P_{(k+1) s_{c_{m-1}}(m-1)+k+1}^{({c_{m-1}})}(x)}{P_{(k+1) s_{c_{m-1}}(m-1)+k}^{({c_{m-1}})}(x)} 
   = \frac{P_{k(s_{c_{m-1}}(m-1)+1)+s_{c_{m-1}}(m-1)+1}^{({c_{m-1}})}(x)}{P_{k(s_{c_{m-1}}(m-1)+1)+s_{c_{m-1}}(m-1)}^{({c_{m-1}})}(x)}\\
   &= \frac{P_{k s_{c_{m-1}}(m)+s_{c_{m-1}}(m-1)+1}^{({c_{m-1}})}(x)}{P_{k s_{c_{m-1}}(m)+s_{c_{m-1}}(m-1)}^{({c_{m-1}})}(x)}
   = \frac{P_{k s_{c_j}(m)+s_{c_j}(j)+1}^{({c_j})}(x)}{P_{k s_{c_j}(m)+s_{c_j}(j)}^{({c_j})}(x)},
   \end{align*}
   so we conclude that equation \eqref{Q_{bn+1}/Q_n} in fact holds for all $j \in \{0,1,\dots, m-1\}$.
   
   Now, if all the bases $\cB_i$ are quasi-triangular, we can use Proposition~\ref{prop:quasi} to prove that $\cB$ is 
   also quasi-triangular. First, it is clear from Definition~\ref{def:shuffle} that the root sequence $\rho$ of 
   $\cB$ contains the root sequences $\rho^{(i)}$ of $\cB_i$ for all $i \in \{1,2,\dots,F\}$, so 
   $\langle 0,1,2,\ldots\rangle$ is certainly a subsequence of $\rho$. 

  Next, it is also clear that $\rho$ is a kind of interlacing of $\rho^{(1)}, \rho^{(2)}, \dots, \rho^{(F)}$, in which the 
  relative order of terms originating from $\rho^{(i)}$ is preserved within $\rho$ for each $i \in \{1,2,\dots,F\}$. 
  Let $n\in \N$ be arbitrary, and let $\mu$ be the minimal index such that $\rho_\mu = n+1$. Then 
  $n+1 = \rho_\mu = \rho_\nu^{(i)}$ for some $i \in \{1,2,\dots,F\}$ and $\nu \in \N \setminus \{0\}$. 
  Since $\cB_i$ is quasi-triangular, there is $\nu' < \nu$ such that $\rho_{\nu'}^{(i)} = n$. Hence there is also 
  $\mu' < \mu$ such that $\rho_{\mu'} = n$, so the first appearance of $n$ in $\rho$ (which occurs at some 
  $\mu''\le\mu'$) precedes the first appearance of $n+1$ in $\rho$. As $n\in \N$ was arbitrary, $\cB$ is quasi-triangular.
\end{proof}

\begin{theorem}\label{shuffle_X}
   Let $m \in \N \setminus \{0\}$ and $\cB = \langle Q_n(x)\rangle_{n=0}^{\infty}$ be the $\bc$-shuffled basis of 
   $\cB_i = \langle P_n^{(i)}(x)\rangle_{n=0}^{\infty}$ for $i=1,\ldots, F$. If each 
   $\cB_i$ is $(0,1)$-compatible in $t_i$ sections with $X$ (the multiplication-by-$x$ operator),
   then $\cB$ is $(0,1)$-compatible in $mt$ sections with $X$ for any $t \in \N\setminus\{0\}$ such that 
   $t_i$ divides $ts_i(m)$ for each $i = 1,\ldots, F$.
\end{theorem}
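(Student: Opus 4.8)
The plan is to reduce everything to the single factor of $\cB$ that changes when the index is incremented, and then to track how that factor's position interacts with the refinement from $m$ to $mt$ sections. First I would recall from Proposition~\ref{prop:fac_tri} that $\cB$ is factorial and, more importantly, that its proof shows (for \emph{every} $j \in \{0,\dots,m-1\}$) that passing from $Q_{mk+j}$ to $Q_{mk+j+1}$ changes only the factor belonging to basis $\ell := c_j$: its index jumps from $N := k\,s_\ell(m) + s_\ell(j)$ to $N+1$, while all remaining factors are untouched. Writing $R(x) := \prod_{i \ne \ell} P^{(i)}_{e_i(mk+j)}(x)$ for the product of those untouched factors, I then have the clean factorization $Q_{mk+j} = P^{(\ell)}_N \cdot R$ and $Q_{mk+j+1} = P^{(\ell)}_{N+1} \cdot R$, with $R$ common to both.

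Next I would apply the hypothesis that $\cB_\ell$ is $(0,1)$-compatible in $t_\ell$ sections with $X$. Writing $N = t_\ell k' + j'$ with $j' \in \{0,\dots,t_\ell-1\}$, this gives $x P^{(\ell)}_N = \beta^{(\ell)}_{k',j',0} P^{(\ell)}_N + \beta^{(\ell)}_{k',j',1} P^{(\ell)}_{N+1}$ with $\beta^{(\ell)}_{k',j',0}, \beta^{(\ell)}_{k',j',1} \in \K(k')$. Multiplying through by $R$ and reusing the factorization above yields
\[
x\,Q_{mk+j}(x) = \beta^{(\ell)}_{k',j',0}\,Q_{mk+j}(x) + \beta^{(\ell)}_{k',j',1}\,Q_{mk+j+1}(x),
\]
so $\cB$ already satisfies a $(0,1)$ relation for $X$; the only thing left to verify is that these coefficients are \emph{rational in the period variable} once we pass to $mt$ sections.

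For that refinement I would write $k = tK + q$ with $q \in \{0,\dots,t-1\}$ and set the new residue $J := mq + j \in \{0,\dots,mt-1\}$, so that $mk+j = (mt)K + J$. Substituting gives $N = (t\,s_\ell(m))\,K + (q\,s_\ell(m) + s_\ell(j))$. This is where the divisibility hypothesis enters: since $t_\ell \mid t\,s_\ell(m)$ (which is automatic when basis $\ell$ does not occur in $\bc$, as then $s_\ell(m)=0$), I can write $q\,s_\ell(m)+s_\ell(j) = t_\ell a + j'$ with $a, j'$ constants depending only on the fixed section $J$, and conclude that $j'$ is constant in $K$ while $k' = \frac{t\,s_\ell(m)}{t_\ell}\,K + a$ is an \emph{affine} function of $K$ with integer coefficients. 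Composing the rational functions $\beta^{(\ell)}_{\cdot,j',0}, \beta^{(\ell)}_{\cdot,j',1}$ with this affine map then produces coefficients $\tilde\beta_{K,J,0}, \tilde\beta_{K,J,1} \in \K(K)$, which is exactly what Definition~\ref{def:sieved_compatible} requires for $X$ in $mt$ sections.

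The main obstacle is precisely this rationality check. Without the condition $t_\ell \mid t\,s_\ell(m)$, the sub-residue $j'$ selected inside $\cB_\ell$ would itself depend on $K$, forcing us to switch between different sections of $\cB_\ell$ as $K$ varies, and no single rational function of $K$ could capture the coefficient in a given section $J$. The divisibility hypothesis is exactly the condition that synchronizes the $t_\ell$-periodicity of $\cB_\ell$ with the chosen $mt$-section bookkeeping; once it is in force, everything else is the routine factor-tracking already prepared by Proposition~\ref{prop:fac_tri}.
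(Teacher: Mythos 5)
Your proposal is correct and follows essentially the same route as the paper's proof: isolate the single factor (from basis $c_j$) that changes between consecutive elements of the shuffled basis, apply that factor's sectioned $(0,1)$-compatibility with $X$, and use the divisibility condition $t_\ell \mid t\,s_\ell(m)$ to show that after refining to $mt$ sections the inner residue is constant and the inner index is affine in the outer variable, so the coefficients remain rational. The only difference is notational (you refine $k = tK + q$ after fixing $j \bmod m$, while the paper indexes directly by $kmt+j$ and splits $j = j_0 m + j_1$), which changes nothing of substance.
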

\begin{proof}
   By hypothesis, for each basis $\cB_i$, $k\in \N$ and $j \in \{0,\ldots, t_i-1\}$ we have: 
   \begin{equation}\label{equ:lem:shuffle_X}
      xP_{kt_i+j}^{(i)}(x) = \alpha_{k,j,0}^{(i)}P_{kt_i+j}^{(i)}(x) + \alpha_{k,j,1}^{(i)}P_{kt_i+j+1}^{(i)}(x).
   \end{equation}

   Let $t \in \N$ be such that $t_i$ divides $ts_i(m)$ for each $i = 1,\ldots, F$. Let $k \in \N$ and $j \in \{0,\ldots,mt-1\}$. 
   Let us see how we can express $x Q_{kmt + j}(x)$ in terms of $Q_{kmt+j}(x)$ and $Q_{kmt+j+1}(x)$.

   Let $j = j_0m + j_1$ with $j_1 \in \{0,\ldots, m-1\}$. We then have $kmt + j = (kt + j_0)m + j_1$, meaning that the only 
   difference between the element $Q_{kmt+j}(x)$ and $Q_{kmt+j+1}(x)$ is in the $c_{j_1}$-th factor of the shuffle basis. 
   Let $c = c_{j_1}$. In the polynomial $Q_{kmt+j}(x)$, the index of the factor from $\cB_c$ is
   \[(kt + j_0)s_c(m) + s_c(j_1).\]
   Let $a_c = ts_c(m)/t_c$. We know that $a_c \in \N$ by definition of $t$. Hence:
   \[(kt + j_0)s_c(m) + s_c(j_1) = (a_ck + j_2)t_c + j_3,\]
   where $j_0s_c(m) + s_c(j_1) = j_2t_c + j_3$ with $j_3 \in \{0,\ldots,t_c-1\}$. 
   
   In other terms, the element from the basis $\cB_c$ that appears in $Q_{kmt + j}(x)$ is the element 
   $P_{(a_ck + j_2)t_c + j_3}^{(c)}(x)$ and we can use formula~\eqref{equ:lem:shuffle_X}:
   \begin{align*} 
      x\frac{Q_{kmt + j}(x)}{Q_{kmt+j+1}(x)} & = x \frac{P_{(a_ck + j_2)t_c + j_3}^{(c)}(x)}{P_{(a_ck + j_2)t_c + j_3 + 1}^{(c)}(x)} 
                                             = \frac{xP_{(a_ck + j_2)t_c + j_3}^{(c)}(x)}{P_{(a_ck + j_2)t_c + j_3 + 1}^{(c)}(x)} \\
                                             & = \frac{\alpha_{a_ck+j_2,j_3,0}^{(c)}P_{(a_ck + j_2)t_c + j_3}^{(c)}(x) + \alpha_{a_ck+j_2,j_3,1}^{(c)}P_{(a_ck + j_2)t_c + j_3 + 1}^{(c)}(x)}{P_{(a_ck + j_2)t_c + j_3 + 1}^{(c)}(x)} \\
                                             & = \alpha_{a_ck+j_2,j_3,0}^{(c)}\frac{P_{(a_ck + j_2)t_c + j_3}^{(c)}(x)}{P_{(a_ck + j_2)t_c + j_3 + 1}^{(c)}(x)} + 
                                             \alpha_{a_ck+j_2,j_3,1}^{(c)} \\
                                             & = \alpha_{a_ck+j_2,j_3,0}^{(c)}\frac{Q_{kmt + j}(x)}{Q_{kmt+j+1}(x)} + \alpha_{a_ck+j_2,j_3,1}^{(c)}.
   \end{align*}
   Multiplying the last equation by $Q_{kmt+j+1}(x)$ we get that
   \[xQ_{kmt + j}(x) = \alpha_{a_ck + j_2,j_3,0}^{(c)}Q_{kmt + j}(x) + \alpha_{a_ck+j_2,j_3,1}^{(c)}Q_{kmt+j+1}(x),\]
   which proves that $\cB$ is $(0,1)$-compatible in $tm$ sections with $X$ and provides a direct formula 
   for the compatibility coefficients in each section.
\end{proof}

It is interesting to remark that the condition on $t$ guarantees that the new compatibility
coefficients are rational functions of $k$. If we pick $t$ to be minimal with such property, then we have a minimal number 
of sections for the compatibility of $X$. It could be that $X$ is compatible with $\cB$ in fewer sections, but
that is not the general case and should be taken care of individually.

\begin{theorem}\label{shuffle_E}
   Let $L \in {\cal L}_{\K[x]}$ be an endomorphism, $m\in \N\setminus\{0\}$, and $\cB = \langle Q_n(x)\rangle_{n=0}^{\infty}$ 
   be the $\bc$-shuffled basis of $\cB_i = \langle P_n^{(i)}(x)\rangle_{n=0}^{\infty}$ with $i=1,\ldots, F$. If each 
   $\cB_i$ is $(A_i,B_i)$-compatible in $t_i$ sections with $L$ then $\cB$ is $(mA,B)$-compatible in $mt$ sections with $L$
   where: 
   \begin{itemize}
      \item $B = \min\{B_i;\ i \in \{1,\ldots,F\}\}$,
      \item $A = \max\{\lceil A_i/s_i(m)\rceil;\ i \in \{1,\ldots,F\}\}$,
      \item $t$ is a natural number such that $t_i$ divides $ts_i(m)$ for each $i = 1,\ldots, F$.
   \end{itemize}
\end{theorem}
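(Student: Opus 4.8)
The plan is to first establish ordinary $(mA,B)$-compatibility of the shuffled basis $\cB$ with $L$ by checking the two intrinsic conditions of Proposition~\ref{compat}, and then to re-organize the resulting expansion into $mt$ sections and verify that the coefficients become rational functions of the section variable. Throughout, write $n=mk+j$ with $j\in\{0,\ldots,m-1\}$; recall from Definition~\ref{def:shuffle} that $Q_n(x)=\prod_{i=1}^{F}P^{(i)}_{e_i(n)}(x)$ with $e_i(n)=k\,s_i(m)+s_i(j)$, and that $\cB$ is factorial by Proposition~\ref{prop:fac_tri}. The decisive structural fact is that, $L$ being a ring endomorphism,
\[
L\,Q_n(x)\ =\ \prod_{i=1}^{F} L\,P^{(i)}_{e_i(n)}(x),
\]
so the compatibilities of the factors can be transported to the product. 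We may assume $s_i(m)\ge 1$ for every $i$, as a factor with $s_i(m)=0$ never varies and contributes only the constant $P^{(i)}_0$.

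For condition \textbf{C1}, I argue as in Theorem~\ref{prod}.2: expanding an arbitrary $p\in\K[x]$ in $\cB_i$ and using \textbf{C1} for $\cB_i$ gives $\deg Lp\le\deg p+B_i$; since this holds for every $i$, it holds with $B=\min_i B_i$, and in particular $\deg L Q_n\le n+B$. For condition \textbf{C2}, the key identity is $e_i(n-mA)=e_i(n)-A\,s_i(m)$, obtained by replacing $k$ with $k-A$ in $e_i(mk+j)=k\,s_i(m)+s_i(j)$. Now \textbf{C2} for $\cB_i$ yields $P^{(i)}_{e_i(n)-A_i}(x)\mid L P^{(i)}_{e_i(n)}(x)$; since $A\ge\lceil A_i/s_i(m)\rceil$ forces $A\,s_i(m)\ge A_i$, and since $\cB_i$ is factorial (lower-index elements divide higher-index ones), we obtain
\[
P^{(i)}_{e_i(n)-A\,s_i(m)}(x)\ \Big|\ P^{(i)}_{e_i(n)-A_i}(x)\ \Big|\ L P^{(i)}_{e_i(n)}(x).
\]
Multiplying over $i$ and using multiplicativity of $L$ gives $Q_{n-mA}(x)=\prod_{i}P^{(i)}_{e_i(n)-A\,s_i(m)}(x)\ \big|\ \prod_i L P^{(i)}_{e_i(n)}(x)=L Q_n(x)$, which is \textbf{C2} with lower bound $mA$. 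By Proposition~\ref{compat}, $\cB$ is $(mA,B)$-compatible with $L$, and because $\cB$ is a basis, for each $n$ there is a unique expansion $L Q_n(x)=\sum_{d=-mA}^{B}\beta_{n,d}\,Q_{n+d}(x)$ with $\beta_{n,d}\in\K$.

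It remains to show that, once the index is written in $mt$ sections, the $\beta_{n,d}$ are rational in the block variable; this is where the divisibility hypothesis on $t$ enters and is the main obstacle. Put $N=mt$ and write the index as $Nk+j$ with $j=j_0m+j_1$, $j_1\in\{0,\ldots,m-1\}$; then $e_i(Nk+j)=(tk+j_0)s_i(m)+s_i(j_1)$. Setting $a_i:=t\,s_i(m)/t_i$, which is a nonnegative integer precisely because $t_i\mid t\,s_i(m)$, and writing $j_0 s_i(m)+s_i(j_1)=t_i\mu_i+\lambda_i$ with $\lambda_i\in\{0,\ldots,t_i-1\}$, one gets $e_i(Nk+j)=t_i(a_ik+\mu_i)+\lambda_i$. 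Hence the $t_i$-section block index of the $i$-th factor is the \emph{affine} function $\kappa_i=a_ik+\mu_i$ of integer slope $a_i$, with constant residue $\lambda_i$, so the factor coefficients $\alpha^{(i)}_{a_ik+\mu_i,\lambda_i,d}$ lie in $\K(k)$. Substituting these into $L Q_{Nk+j}(x)=\prod_i L P^{(i)}_{e_i(Nk+j)}(x)$ and dividing by $Q_{Nk+j-mA}(x)$ exhibits both $L Q_{Nk+j}(x)/Q_{Nk+j-mA}(x)$ and each ratio $Q_{Nk+j+d}(x)/Q_{Nk+j-mA}(x)$ as polynomials in $x$ whose coefficients lie in $\K(k)$; equating coefficients of the powers of $x$ then determines the $\beta_{Nk+j,d}$ through a triangular linear system with entries in $\K(k)$ (the method of undetermined coefficients used in Theorem~\ref{shuffle_X} and Algorithm \textsc{AssociatedOpBC}), whose unique solution therefore lies in $\K(k)$. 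This yields $(mA,B)$-compatibility of $\cB$ with $L$ in $mt$ sections.

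The delicate point — and the reason the statement is phrased in $mt$ rather than $m$ sections — is exactly the transfer of rationality from the factors to the shuffle. Its mechanism is that the slopes $a_i=t\,s_i(m)/t_i$ must all be integers simultaneously, which is precisely the stated condition on $t$; this integrality keeps the $i$-th factor inside a single one of its $t_i$-sections as $k$ varies, so the per-factor coefficients stay rational in $k$. One also uses here that the factor bases are presented with data rational within sections (equivalently, $(0,1)$-compatible with $X$ in sections, cf.\ Theorem~\ref{shuffle_X}), which is what makes the ratios $Q_{Nk+j+d}/Q_{Nk+j-mA}$ have coefficients in $\K(k)$; without the integrality of the slopes the block indices would acquire non-integer slopes and the coefficients would fail to be rational in $k$.
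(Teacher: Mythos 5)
Your proof is correct and follows essentially the same route as the paper's: conditions \textbf{C1} and \textbf{C2} of Proposition~\ref{compat} are verified exactly as you do (\textbf{C1} as in Theorem~\ref{prod}, \textbf{C2} via the endomorphism property together with $As_i(m)\ge A_i$ and the factorial property of each $\cB_i$), and the rationality of the coefficients in $mt$ sections is obtained from the integrality of $a_i = t\,s_i(m)/t_i$ followed by an undetermined-coefficients linear system. Your observation that the block index $a_ik+\mu_i$ is an integer-slope affine function of $k$ is the same mechanism the paper encodes via its polynomials $D^{(i)}_{j,s,k}$ and the auxiliary basis $I_{k,j,l}$.
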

\begin{proof}
   By the equivalence of Proposition~\ref{compat}, it is enough to check conditions \textbf{C1} and \textbf{C2}
   in each section. For \textbf{C1}, we can repeat the same proof as in Theorem~\ref{prod} to show that 
   \[\deg L Q_n(x) \leq n + \min\{B_i;\ i \in\{1,\ldots,F\}\},\]
   showing that $B$ was chosen correctly.

   Consider $t \in \N$ as defined in this theorem. Let us see that, for any $k \in \N$ and $j\in\{0,\ldots, mt-1\}$,
   the polynomial $Q_{kmt + j - mA}(x)$ divides $Q_{kmt+j}(x)$. Since $L$ is an endomorphism, we have:
   \begin{equation}\label{equ:lem:shuffle_E:1}
      L Q_{kmt + j}(x) = L \left(\prod_{i=1}^F P_{kts_i(m) +s_i(j)}^{(i)}(x)\right) = \prod_{i=1}^F \left(L P_{kts_i(m) +s_i(j)}^{(i)}(x)\right).
   \end{equation}

   On the other hand,
   \begin{equation}\label{equ:lem:shuffle_E:2}
      Q_{kmt + j - mA}(x) = Q_{(kt-A)m + j}(x) = \prod_{i=1}^F P_{(kt-A)s_i(m) + s_i(j)}^{(i)}(x).
   \end{equation}
   At this point, we only need to show that, for all $i=1,\ldots,F$, the polynomial $P_{(kt-A)s_i(m) + s_i(j)}^{(i)}(x)$ divides $L P_{kts_i(m) +s_i(j)}^{(i)}(x)$.
   Let $a_i = (ts_i(m))/t_i$, which is a natural number by the construction of $t$. We can 
   also write $s_i(j) = j_{i,0}t_i + j_{i,1}$ with $j_{i,1} \in \{0,\ldots, t_i-1\}$. Then
   \[kts_i(m) +s_i(j) = (ka_i + j_{i,0})t_i + j_{i,1}.\]
   Using the $(A_i,B_i)$-compatibility of $\cB_i$ with $L$, we have
   \[L P_{kts_i(m) +s_i(j)}^{(i)}(x) = L P_{(ka_i + j_{i,0})t_i + j_{i,1}}^{(i)}(x) = \sum_{l=-A_i}^{B_i} \alpha_{(ka_i + j_{i,0}), j_{i,1},l}^{(i)} P_{kts_i(m) +s_i(j) + l}^{(i)}(x).\]

   If we now show that $(kt-A)s_i(m) + s_i(j) \leq kts_i(m)+s_i(j) - A_i$ for all $i$, then using the fact that $\cB_i$ is a factorial basis, we get that
   \[P_{(kt-A)s_i(m) + s_i(j)}^{(i)}(x)\text{ divides }P_{kts_i(m) +s_i(j) + l}^{(i)}(x)\text{ for all }l=-A_i,\ldots, B_i,\]
   and, in particular, that $P_{(kt-A)s_i(m) + s_i(j)}^{(i)}(x)$ divides $L P_{kts_i(m) +s_i(j)}^{(i)}(x)$.

   But this is simple to prove using the construction of $A$. Since $A \geq A_i/s_i(m)$ for all $i=1,\ldots, F$, then we have $As_i(m) \geq A_i$. Hence,
   \[(kt-A)s_i(m) +s_i(j) = kts_i(m) + s_i(j) - As_i(m) \leq kts_i(m) + s_i(j) - A_i.\]

   In the rest of the proof, we analyze the quotient between~\eqref{equ:lem:shuffle_E:1} and~\eqref{equ:lem:shuffle_E:2} to show that $L$ is $(mA,B)$-compatible 
   with $\cB$ in $mt$ sections, and that all compatibility coefficients for $\cB$ are rational functions in $k$ for all the sections.

   We first need to define a set of polynomials for each of the bases $\cB_i$. Let $i=1,\ldots, F$, $j \in \{0,\ldots, t_i-1\}$, $s \in \N$ and $k \in \N$.
   Consider the following quotient:
   \[D_{j,s,k}^{(i)}(x) = \frac{L P_{km +j}^{(i)}(x)}{P_{km+j-(A_i+s)}^{(i)}(x)}.\]
   This is always a polynomial by Proposition~\ref{compat}. Moreover, the coefficients of these polynomials
   are rational functions in $k$, since the compatibility conditions for each basis $\cB_i$ are given by rational functions in $k$.

   Also, for $i \in \{1,\ldots, F\}$, consider the integers $b_i = As_i(m) - A_i$. These are always non-negative integers since we have taken $A$
   such that $As_i(m) \geq A_i$.

   Now, we can analyze the quotient between~\eqref{equ:lem:shuffle_E:1} and~\eqref{equ:lem:shuffle_E:2}:
   \begin{equation}\label{equ:lem:shuffle_E:3}
      \begin{array}{rl}
      \displaystyle\frac{L Q_{kmt + j}(x)}{Q_{kmt + j - mA}(x)} 
                                                   & = \displaystyle\prod_{i=1}^F \frac{L P_{kts_i(m) +s_i(j)}^{(i)}(x)}{P_{(kt-A)s_i(m) + s_i(j)}^{(i)}(x)} \\ & \\
                                                   & = \displaystyle\prod_{i=1}^F \frac{L P_{(ka_i + j_{i,0})t_i + j_{i,1}}^{(i)}(x)}{P_{(ka_i + j_{i,0})t_i + j_{i,1} - (A_i + b_i)}^{(i)}(x)}\\ & \\
                                                   & = \displaystyle\prod_{i=1}^F D_{j_{i,1}, b_i, ka_i + j_{i,0}}^{(i)}(x)
      \end{array}
   \end{equation}
   We can now follow the proof of Proposition~\ref{compat} to convert the coefficients of the quotient~\eqref{equ:lem:shuffle_E:3} to the compatibility 
   coefficients for $\cB$. Let $k \in \N$ and $j \in \{0,\ldots,mt-1\}$. Consider, for $l \in \N$, the following family of polynomials
   \[I_{k,j,l} = \frac{Q_{kmt+j-mA + l}(x)}{Q_{kmt+j-mA}(x)} \in \K[x].\]
   Since $\cB$ is a factorial basis, it is clear that, for any fixed values of $k$ and $j$, the set $\{I_{k,j,l}; l\in\N\}$ is a factorial 
   basis of $\K[x]$. Let $\beta_{k,j,l}$ be the coefficients of the quotient~\eqref{equ:lem:shuffle_E:3} in terms of this basis. Then we get
   \begin{equation}\label{equ:lem:shuffle_E:4}
      \frac{L Q_{kmt + j}(x)}{Q_{kmt + j - mA}(x)} = \sum_{l=0}^{mA+B} \beta_{k,j,l}I_{k,j,l} = \frac{\sum_{l=0}^{mA+B}\beta_{k,j,l}Q_{kmt + j - mA + l}(x)}{Q_{kmt + j - mA}(x)}.
   \end{equation}

   The coefficients $\beta_{k,j,l}$ can be computed using linear algebra, performing a change of basis from the standard power basis $\{x^n; n\in \N\}$ to the
   basis created by $\{I_{k,j,l}; l\in \N\}$. Moreover, the upper bound for the sum in~\eqref{equ:lem:shuffle_E:4} is at most $mA+B$ since we already know that
   $L$ is $(mA,B)$-compatible with $\cB$.

   If we multiply both sides of equation~\eqref{equ:lem:shuffle_E:4} by $Q_{kmt + j - mA}(x)$ we obtain:
   \[L Q_{kmt + j}(x) = \sum_{l=0}^{mA+B} \beta_{k,j,l} Q_{kmt + j - mA + l}(x) = \sum_{i=-mA}^B \beta_{k,j,i+mA}Q_{kmt + j + i}(x).\]
   This formula provides the compatibility coefficients for the $(mA,B)$-compatibility of $L$ with $\cB$ in $mt$ sections, by taking 
   $\alpha_{k,j,i} = \beta_{k,j,i+mA}$. These coefficients are always rational functions in $k$ since $\cB$ is a shuffled basis of length $m$.
\end{proof}

This theorem is a direct generalization of Theorem~\ref{prod}, since for a product basis we have $s_i(m) = 1$ for 
each factor. Hence, in this case the definition of $A$ in Theorem~\ref{prod} coincides with the definition of $A$ in this theorem.

\section{Concluding remarks}

The first author has implemented the results of this paper in a SageMath package which allows for a fully automatic computation of the examples throughout this document.

The software, still under active development at the time of writing, is distributed under the GNU 
General Public License\footnote{See \url{https://www.gnu.org/licenses/gpl-3.0.txt}}, and is available at:
\vspace{-0.25em}\begin{center}\packurl\end{center}\vspace{-0.25em}

We conclude by listing some possible extensions of the results of this paper.

\begin{itemize}
   \item \textbf{Better analysis of sections:} in Examples~\ref{exm:apery2} and~\ref{exm:apery3} we obtained solutions analyzing only the first column of the matrix of operators $\cR L$. This is helpful to study solutions with a fixed kernel $K(n,k)$. However, it may happen that this approach yields no nonzero solutions. As shown in Example~\ref{ex:main}, we can extract more information if we analyze the solutions of all the columns of $\cR L$. It could be interesting to study how we can solve these systems in an automatic fashion, and what kind of information 
they can provide.

   \item \textbf{Compatibilities of derivations:} although compatibility with the derivation operator $D$ seems limited by Proposition~\ref{compat_der_roots}, there are factorial bases that are compatible with derivation operators. All the results extending compatibilities for product bases (Definition~\ref{def:pbasis}), sieved bases, and shuffled bases (Definition~\ref{def:shuffle}) can be extended for arbitrary derivation operators in a similar way as for endomorphisms (Theorem~\ref{prod} and Theorem~\ref{shuffle_E}).

   These results are already implemented in the package~\packname.
   
   \item \textbf{Other polynomial bases:} this paper has focused on factorial bases. However, we can prove Theorem~\ref{RBL} for any polynomial basis $\langle P_n(x)\rangle_{n=0}^\infty$ with $\deg P_n(x) = n$, hence we can study similar compatibility problems for orthogonal polynomial bases. 
   
   A basic implementation of these properties and bases is included in our package (see documentation for the class \texttt{OrthoBasis}).
   
   \item \textbf{Other series bases:} similarly, we can study other types of bases for $\K[[x]]$. Instead of having a basis consisting of polynomials, we can consider a formal power series basis $\langle f_n(x)\rangle_{n=0}^{\infty}$ where $f_n(x) = \sum_{k=n}^\infty c_k x^k$ with $c_n \ne 0$. In this setting, the same definition of compatibility carries over, and a corresponding version of Theorem~\ref{RBL} can be proven. 

   In particular, let $f(x) \in \K[[x]]$ have order $1$. Then if an operator $L$ is compatible with the basis $\langle f(x)^n\rangle_{n=0}^\infty$, all the solutions for $L$ can be written as a composition of a holonomic function with $f(x)$.

   A simple implementation for this type of bases is included in our software (check documentation for \texttt{OrderBasis} and \texttt{FunctionalBasis}).
\end{itemize}

\appendix

\section{Implementation}
\label{app:imp}

All the results in this paper are included in the SageMath package named \packname, which allows a fully
automated computation of all the examples throughout this document.

At the time of writing, this software is still under development (the current version is \emph{v0.3}) and has not been added to the official Sage distribution.
Readers are invited to test the functionalities included in the package and report any desired features, errors or 
comments.

The software is distributed under the GNU General Public License\footnote{See \url{https://www.gnu.org/licenses/gpl-3.0.txt}}
on the GitHub repository: 
\begin{center}\packurl\end{center}

Any Sage user can install it locally using the PyPi system included in Sage by running the command
\begin{center}\small\texttt{sage --pip install git+\packurl}\end{center}
or by cloning the repository and running \texttt{make install} in the repository folder. This process will 
install all required dependencies for a proper functionality. Once installed, the package 
is available in Sage and can be imported with the code:

\begin{lstlisting}[numbers=none]
   sage: from pseries_basis import *
\end{lstlisting}

In case the user does not want or could not install Sage locally, we offer the possibility of using it via Binder. A 
complete demo of the package with explanations of its implementation can be found at:
\begin{center}\packbinder\end{center}

All the documentation of the code can be also found at:
\begin{center}\packdoc\end{center}

\subsection{Data structures}

The package \packname~provides a class \code{FactorialBasis} to represent the factorial bases $\cB$ described 
throughout this paper (see Definition~\ref{def:factorial}). These bases have a main method \code{element} that, 
given an index $n \in \N$, returns their $n$th element.

Then, several general functionalities are included to manage the compatibilities with linear operators:
\begin{itemize}
   \item \code{set\_compatibility}: given an operator $L$ and some coefficients $\alpha_{k,j,i}$, it sets
   the compatibility of the operator $L$ with the provided coefficients (see Definitions~\ref{def:compat}
   and~\ref{def:sieved_compatible}).
   \item \code{compatibility}: given an operator $L$, it returns the compatibility coefficients 
   associated with it if it is compatible with $\cB$. 
   \item \code{recurrence}: returns the recurrence equation (or system in case of sieved bases) 
   associated with an operator $L$. This method is based on Theorem~\ref{iso}.
\end{itemize}

\subsection{Building factorial bases}

The package includes several built-in bases that can be easily obtained in the code:

\begin{itemize}
   \item \emph{Power basis}: let $\cP_{a,b} = \langle (ax+b)^n\rangle$. This basis can be built using \code{PowerBasis($a$,$b$)}
   and includes automatically the compatibility with $X$ and $D$. In particular, we can build the power basis $\cP = \langle x^n\rangle$
   that has been used in the previous sections.
   \item \emph{Falling factorial basis:} let $\cB = \langle \prod_{k=0}^{n-1} (ax + b - kc)\rangle$. This basis can be built 
   using \code{FallingBasis($a$,$b$,$c$)} and includes automatically the compatibility with $X$ and 
   $E_{c/a}: x \mapsto x + (c/a)$. This includes the falling factorial basis (when $a=1$, $b=0$ and $c=1$) and the 
   rising factorial basis (when $a=1$, $b=0$ and $c=-1$).
   \item \emph{Binomial basis:} recall $\cC_{a,b} = \langle \binom{ax+b}{n}\rangle$. These bases can be obtained
   using \code{BinomialBasis($a$,$b$)} and include automatically their compatibility with $X$ and $E$.
\end{itemize}

From these basic pieces, the user can build even further bases with the following functionality:

\begin{itemize}
   \item \emph{Scalar product:} given a basis $\cB = \langle P_n(x)\rangle$ and a hypergeometric sequence $(a_n)_n$, 
   the new basis $a\cB = \langle a_nP_n(x)\rangle$ can be computed with usual multiplication in Sage. Moreover, the 
   compatibilities of $\cB$ are automatically extended to $a\cB$.
   \item \emph{Product basis:} the product basis of $\cB_1,\ldots,\cB_m$ can be built in the code with the command
   \code{ProductBasis([$\cB_1$,\dots,$\cB_m$], ends=[$\cdot$])}, where the content of \code{ends} is a list of the 
   names of endomorphisms that the resulting product basis will be compatible with.
   \item \emph{Shuffled basis:} more generally, the user can build a shuffled basis with the command 
   \code{SievedBasis([$\cB_1$,\dots,$\cB_m$],$\mathbf{c}$,ends=[$\cdot$])}, where $\mathbf{c}$ is the cycle 
   determining how the roots of the factor bases are shuffled (see Definition~\ref{def:shuffle}).
\end{itemize}

\subsection{Building generalized binomial bases}

In Examples~\ref{exm:apery2} and~\ref{exm:apery3}, we have used a basis whose even elements were the binomial
coefficients $\binom{x+n}{2n}$. In this subsection we illustrate how we can use our package to build this 
basis automatically.

By shifting $n\mapsto (n+1)$ in $\binom{x+n}{2n}$ and taking the quotient, we obtain:
\[\frac{\binom{x+n+1}{2n+2}}{\binom{x+n}{2n}} = \frac{(x+n+1)(x-n)}{(2n+ 1)(2n+2)},\]
so at every two steps we added the roots $n$ and $(-n-1)$ to the root sequence of the basis and 
a factor of $\frac{1}{(2n+1)(2n+2)}$ to the leading coefficients.

The falling and rising factorial basis have root sequences
\[\rho_f = \langle 0,1,2,\ldots\rangle,\quad \rho_r = \langle -1,-2,-3,\ldots\rangle,\]
respectively. Hence, we can build the product of these two bases to obtain a basis with the 
desired root sequence:

\begin{lstlisting}[numbers=none]
   sage: pos_roots = FallingBasis(1,0,1)
   sage: neg_roots = FallingBasis(1,1,-1)
   sage: almost = ProductBasis([pos_roots,neg_roots])
\end{lstlisting}

In this piece of code, the object \code{almost} contains a product basis that guarantees
the desired root sequence. This is not yet the basis we want because the leading coefficient sequence is all 1:
\begin{lstlisting}[numbers=none]
   sage: [almost.cn()(i) for i in range(10)]
   [1, 1, 1, 1, 1, 1, 1, 1, 1, 1]
\end{lstlisting}

And if we look at the bases used in Examples~\ref{exm:apery2} and~\ref{exm:apery3}, the leading coefficients are 
precisely $1/n!$. We can build this also in the code since $1/n!$ is a hypergeometric sequence:
\begin{lstlisting}[numbers=none]
   sage: basis = (1/factorial(n))*almost
\end{lstlisting}

This process can be generalized to any desired binomial coefficient of the shape
\[\binom{ax+bn+c}{mn+r}\]

This is automatized in the method \code{GeneralizedBinomial} that receives the constant parameters $a$, $b$, $c$, $m$ and 
$r$ and returns a basis whose $(nm)$-th elements are precisely the binomial coefficients shown above. Moreover, the 
compatibilities with $X$ and $E$ are automatically computed whenever they are possible (i.e, whenever $r=0$).

\subsection{Revisiting Example~\ref{exm:apery2}}

Now, we are going to show how to use the package to reproduce Example~\ref{exm:apery2}. We are interested in 
studying compatibility of the linear operator
\[L = (n+2)^2E^2 - (11n^2 + 33n + 25)E - (n+1)^2,\]
with respect to the kernel:
\[K(n,k) = \binom{n}{k}\binom{n+k}{2k}.\]

This kernel is built as a product of two simpler binomial coefficients that can be created with the code:
\begin{lstlisting}[numbers=none]
  sage: b1 = BinomialBasis(1,0)
  sage: b2 = GeneralizedBinomial(1,1,0,2,0)
\end{lstlisting}

As we saw in Example~\ref{exm:apery2}, we are interested in a basis that has $K(n,k)$ as some of its elements.
This can be achieved with a shuffled basis:

\begin{lstlisting}[numbers=none]
  sage: A2 = SievedBasis([b1,b2],[1,0,1], ends=['E'])
\end{lstlisting}

Now that we have built the basis of $\K[x]$, we need to build the linear operator associated to $L$. For doing so, we 
are going to use the package \code{ore\_algebra}\footnote{\url{https://github.com/mkauers/ore_algebra}} developed
by M.~Kauers and M.~Mezzarobba that provides functionality to represent such linear operators:

\begin{lstlisting}[numbers=none]
  sage: OE.<E> = OreAlgebra(QQ[x], ('E', 
                              lambda p : p(x=x+1), 
                              lambda p : 0))
  sage: L = (x+2)^2*E^2 - 
            (11*x^2 + 33*x + 25)*E - 
            (x+1)^2
\end{lstlisting}

And now we follow the process described in Example~\ref{exm:apery2}: we build the recurrence matrix, 
take the first column and compute a greatest common right divisor of its elements:

\begin{lstlisting}[numbers=none]
  sage: recurrence_matrix = A2.recurrence(L)
  sage: first_column = 
    [A2.remove_Sni(recurrence_matrix[j,0]) 
    for j in range(recurrence_matrix.nrows())]
  sage: gcrd = first_column[0].gcrd(*first_column[1:])
  sage: gcrd
  (n+1)*Sn - 4n - 2
\end{lstlisting}
Which is exactly the recurrence we obtained in~\eqref{eq:apery2:gcrd}.

\subsection{Revisiting Example~\ref{exm:apery3}}

In Example~\ref{exm:apery3} we studied compatibility of the linear operator
\[L = (n + 2)^3 E^2 - (2n+3)(17 n^2 + 51 n + 39) E + (n + 1)^3,\]
with respect to the kernel:
\[K(n,k) = \binom{n+k}{2k}^2.\]

This kernel is the square of a simple binomial coefficient that can be created with the code:
\begin{lstlisting}[numbers=none]
  sage: b2 = GeneralizedBinomial(1,1,0,2,0)
\end{lstlisting}

 This basis contains $\binom{x+n}{2n}$ as its even positions. Hence, to obtain the basis
 described in Example~\ref{exm:apery3}, we only need to build the \texttt{ProductBasis} of \texttt{b2}
 with itself:

\begin{lstlisting}[numbers=none]
  sage: A3 = ProductBasis([b2,b2],ends=['E'])
\end{lstlisting}

In a similar way as we did in Example~\ref{exm:apery2}, we now build the linear operator using 
\texttt{ore\_algebra}, then obtain the recurrence matrix and consider the greatest common right divisor
of the elements of its first column:

\begin{lstlisting}[numbers=none]
  sage: OE.<E> = OreAlgebra(QQ[x], ('E', 
   lambda p : p(x=x+1), 
   lambda p : 0))
  sage: L = (x+2)^3*E^2 - 
   (2*x + 3)*(17*x^2 + 51*x + 39)*E + 
   (x+1)^3
  sage: recurrence_matrix = A3.recurrence(L)
  sage: first_column = 
    [A3.remove_Sni(recurrence_matrix[j,0]) 
    for j in range(recurrence_matrix.nrows())]
  sage: gcrd = first_column[0].gcrd(*first_column[1:])
  sage: gcrd
  (n^2 + 2*n + 1)*Sn - 16*n^2 - 16*n - 4
\end{lstlisting}
Which is exactly the recurrence we obtained in~\eqref{eq:apery3:gcrd}.

\section{More examples}
\label{app:examples}
In this section we include additional examples that are not in the original paper. These examples showcase how to use the results on this paper, and illustrate (see Examples~\ref{exm:double1} and \ref{exm:double2})
how we can obtain definite-sum solutions containing not just one, but several nested definite sums.

All these examples (and more) can be found and tested in the repository\footnote{\url{https://github.com/Antonio-JP/pseries_basis/blob/master/notebooks/paper_examples_appB.ipynb}}. We also offer a 
binder notebook to try these examples out without installing Sage or the package:
\begin{center}\packappb\end{center}

\begin{example}[Binomial transform of the Catalan numbers]\label{exm:bincat}
   Let us start with the sequence $(e_n)_n$ defined by $e_0 = 1$, $e_1 = 2$ and $L e = 0$ where:
   \[L = \left(x + 3\right) E^{2} -2 \left(3 x + 5\right) E + 5 (x + 1).\]
   According to the OEIS database, this sequence\footnote{\url{https://oeis.org/A007317}} is the binomial transform of the Catalan numbers, i.e., 
   \[e_n = \sum_{k=0}^n c_k \binom{n}{k},\]
   where $(c_n)_n$ is the sequence of Catalan numbers\footnote{\url{https://oeis.org/A000108}}. The methods of this paper are a great tool for proving automatically this type of 
   identities. For doing so, we compute a new sequence $(b_n)_n$ such that we know $e_n = \sum_{k=0}^n b_k\binom{n}{k}$. This sequence will be annihilated by $\cR[\cC](L)$ and will have
   as initial conditions
   \[b_0 = 1, b_1 = 1, b_2 = 2, b_3 = 5.\]

   Computing $\cR[\cC](L)$ yields the operator:
   \[\cR[\cC](L) = \left(n + 3\right) \mathit{S_n}^{2} - \left(3 n + 4\right) \mathit{S_n} - 2(2 n +1).\]
   By closure properties of P-recursive sequences it is easy to show that $b_n = c_n$ for all $n\in \N$. 
\end{example}

\begin{example}[Franel numbers]\label{exm:franel}
   Franel numbers\footnote{See \url{https://oeis.org/A000172}} $(f_n)_n$ satisfy $f_0 = 1$, $f_1 = 2$ and $L f = 0$ where $L = (n+2) ^2 E^2 - (7n^2 +21n+16)E - 8(n+1)^2$.

   It is known that Franel numbers are the sum of the cubes of binomial coefficients: $f(n) = \sum_{k=0}^n \binom{n}{k}^3$. We can check this identity using our methods
   with the product basis $\cC_{(1,1,1),(0,0,0)}$. In this case, the compatibility of $\cC_{(1,1,1),(0,0,0)}$ with $E$ and $X$ can be written
   with $3\times 3$ matrices:
   \[\cR[](X) = \begin{pmatrix}n & 0 & nS_n^{-1}\\ (n+1) & n & 0\\ 0 & (n+1) & n\end{pmatrix},\]
   \[\cR[](E) = \begin{pmatrix}S_n +1 & \frac{3n+1}{n+1} & \frac{3n^2+3n+1}{(n+1)^2}\\ 3S_n & \frac{n+1}{n+2}S_n + 1 & \frac{3n+2}{n+1}\\3S_n & \frac{3n+3}{n+2}S_n & \frac{(n+1)^2}{(n+2)^2} S_n +1\end{pmatrix}.\]

   The associated matrix for the operator $L$ that defines the Franel numbers is then a $3\times 3$ matrix. We have included in the GitHub repository a folder with the description of 
   each of its elements\footnote{\url{https://github.com/Antonio-JP/pseries_basis/tree/master/notebooks/example57}}. But since we want to see solutions of the shape 
   \[f(x) = \sum_{n} c_n \binom{x}{n}^3,\]
   we have to do as we did for Examples~\ref{exm:apery2} and~\ref{exm:apery3}, and consider the greatest common right divisor of the elements of the first column of the matrix which are:
   \begin{align*}
      L_{0,0} ={}& (n+3)^2 S_n^3 + \frac{158n^4 + 686n^3 + 1088n^2 + 756n + 199}{(n+1)^2} S_n^2 + \\ 
                & \frac{62n^4 - 8n^3 - 334n^2 - 402n - 141}{(n+1)^2}S_n - (221n^2 + 244n + 67)\\
      L_{1,0} ={}& \frac{(n+3)(11n^2 + 42n + 37)}{n+2} S_n^3 + \\
                & \frac{274n^4 + 1598n^3 + 3414n^2 + 3166n + 1073}{(n+1)(n+2)} S_n^2 - \\ 
                & \frac{200n^3 + 883n^2 + 1217n + 531}{(n+1)}S_n - (85n + 61)(n+1)\\
      L_{2,0} ={}& \frac{55n^4 + 420n^3 + 1168n^2 + 1398n + 607}{(n+2)^2} S_n^3 + \\
                & \frac{266n^4 + 1806n^3 + 4568n^2 + 5106n + 2129}{(n+2)^2} S_n^2 - \\ 
                & (307n^2 + 914n +670)S_n - 14(n+1)^2
   \end{align*}

   Computing the \emph{greatest common right divisor} for these operators yields the operator $S_n - 1$, which is only satisfied by constant sequences. A simple
   computation guarantees that if $f(x) = \sum_{n} c_n \binom{x}{n}^3$ for $f(x)$ the solution yielding the Franel numbers, then $c_0 = 1$ meaning that $c_n = 1$
   for all $n$ proving the desired identity:
   \[f_n = \sum_{k=0}^n\binom{n}{k}^3\quad\text{for all $n\in \N$}.\]
\end{example}

\begin{example}[First double binomial sum]\label{exm:double1}
   Let us consider now the sequence $(d_n)_n$ where $d_n$ is the \emph{sum over all Dyck paths of semilength $n$ of the arithmetic 
   mean of the $x$ and $y$ coordinates}\footnote{\url{https://oeis.org/A258431}}. It is known that this sequence satisfies the following
   recurrence
   \[(n-1)d_n = (8n-10)d_{n-1}-(16n-24)d_{n-2}\text{ for $n > 2$},\]
   and has as its first terms $d_0 = 0$, $d_1 = 1$ and $d_2 = 5$. Let $d(x)$ be a function such that $d(n) = d_n$ for all $n \in \N$. Then 
   it is annihilated by the following linear operator:
   \begin{equation}\label{equ:double}
      L = \left(x + 2\right) E^{3} - \left(8 x + 14\right) E^{2} + \left(16 x + 24\right) E.
   \end{equation}

   We want to find an explicit formula for this sequence, so we try to compute a sum with respect to the binomial basis $\cC$,
   i.e., we write $d(x) = \sum_{k} c_k \binom{x}{k}$. Computing the operator $\cR[\cC](L)$ yields:
   \[\left(n + 3\right) S_n^{4} - \left(4 n + 12\right) S_n^{3} - 2 n S_n^{2} + 12\left( n + 2\right) S_n + 9 (n + 1).\]

   Hence, the sequence $(c_n)_n$ defined by $\cR[\cC](L)$ with initial terms $0,1,3,11,36,\ldots$ allows us to write:
   \[d_n = \sum_{k=0}^n c_k\binom{n}{k}.\]

   If we look at OEIS for this sequence $(c_n)_n$, we do not find anything. So we compute now a new sequence $(b_n)_n$ for which 
   we can write $c_n = \sum_{l=0}^n b_k\binom{n}{k}$. To this end, we compute the operator:
   \[\cR[\cC](\cR[\cC](L)) = \left(n + 3\right) \mathit{S_n}^{4} + n \mathit{S_n}^{3} -2 \left(4 n + 9\right) \mathit{S_n}^{2} - 8 n \mathit{S_n} + 16 n \mathit{S_n}^{-1} + 8(2n + 3).\]

   Using the results of this paper, we know that the sequence $(b_n)_n$ with initial terms $0,1,1,5,6$ is annihilated by $\cR[\cC](\cR[\cC](L))$.
   This sequence still does not show up in OEIS. However, we observe that the sequence $(b_n)_n$ is the interlacing of two simpler sequences. More specifically, the 
   odd terms look like the original sequence $(d_n)_n$. Hence, if we put everything together, we can prove by using closure properties that, if $(d_n)_n$ are defined as 
   above and $(b_n)_n$ are defined by the formula:
   \[d_n = \sum_{k=0}^l \sum_{l=0}^k b_l \binom{k}{l}\binom{k}{n},\]
   then for all $n\in \N$, we have $b_{2n-1} = d_n$.
\end{example}

\begin{example}[Second double binomial sum]\label{exm:double2}
   For this example we are going to consider the sequence:
   \[a_n = \sum_{k=0}^n\binom{2n}{k}.\]
   This sequence\footnote{\url{https://oeis.org/A032443}} is half of the sum of the binomial coefficients having even upper argument. Let $a(x)$ be a function such that $a(n) = a_n$. We can check that this 
   function is annihilated by the following recurrence operator:
   \[L = \left(x + 2\right) E^{2} - 2\left(4 x +5\right) E + 8(2 x + 1).\]

   We proceed now similarly to Example~\ref{exm:double1}. Let $(c_n)_n$ be a sequence with $a_n = \sum_{k=0}^n c_k\binom{n}{k}$. Then $(c_n)_n$ is annihilated 
   by $\cR[\cC](L)$, which we can compute:
   \[\cR[\cC](L) = \left(n + 2\right) \mathit{S_n}^{2} - \left(5 n + 6\right) \mathit{S_n} + 3 n + 9 n \mathit{S_n}^{-1}.\]

   This recurrence operator involves the inverse shift $S_n^{-1}$. In order to apply again the recurrence compatibility with the binomial basis $\cC$, we need to remove this 
   inverse shift. For doing so we simply multiply $\cR[\cC](L)$ by $S_n$ from the left. Our sequence $(c_n)_n$ is still annihilated by $S_n\cR[\cC](L)$. 
   
   The sequence $(c_n)_n$ can be found in OEIS as A027914, defined as the sum of the first half of trinomial coefficients. Let us now consider the sequence $(b_n)_n$ defined
   again as $c_n = \sum_{k=0}^n b_k\binom{n}{k}$. The sequence $(b_n)_n$ is annihilated by $\cR[\cC](S_n\cR[\cC](L))$, which we can compute:
   \[\cR[\cC](S_n\cR[\cC](L)) = \left(n + 3\right) \mathit{S_n}^{3} - \left(n + 2\right) \mathit{S_n}^{2} - 2\left(3n + 5\right) \mathit{S_n} + 4(n + 1) + 8 n \mathit{S_n}^{-1}.\]

   The sequence $(b_n)_n$, defined by $\cR[\cC](\cR[\cC](L))$ and the initial terms 
   \[b_0 = 1,\quad b_1 = 1,\quad b_2 = 3,\quad b_3 = 4,\quad b_4 = 11,\]
   appears in OEIS as the sequence A027306 (which can be checked automatically using closure properties of D-finite sequences). This sequence has a closed form formula:
   \[b_n = 2^{n-1} + \left(\frac{1 + (-1)^n}{4}\right)\binom{n}{n/2}.\]

   Putting everything together, this process has proved the following identity:
   \[\sum_{k=0}^n\binom{2n}{k} = \sum_{k=0}^n\sum_{l=0}^k \left(2^{l-1} + \left(\frac{1 + (-1)^l}{4}\right)\right)\binom{l}{l/2} \binom{k}{l}\binom{n}{k}.\]
\end{example}
         
\begin{example}[Third double binomial sum]\label{exm:artificial}
   We can illustrate the recursive use of our methods on the linear recurrence equation $Ly = 0$ where $L \in \Q[x]\langle E\rangle$ is the recurrence operator of order 5 defined
   by
   \begin{align*}
      L\ &=\ -64 (1 + x) (2 + x) (3 + x) (-151 - 39 x + 8 x^2 + 2 x^3)\\
      &+ 16 (2 + x) (3 + x) (-3867 - 2400 x - 182 x^2 + 108 x^3 + 16 x^4) E\\
      &-4 (3 + x) (-48214 - 42707 x - 10472 x^2 + 530 x^3 + 500 x^4 + 48 x^5) E^2\\
      &+ 2 (-163088 - 179069 x - 66637 x^2 - 6360 x^3 + 1822 x^4 + 480 x^5 + 32 x^6) E^3\\
      &- (-61566 - 62939 x - 21344 x^2 - 1644 x^3 + 580 x^4 + 132 x^5 + 8 x^6) E^4\\
      &+ (5 + x) (-106 - 49 x + 2 x^2 + 2 x^3) E^5.
   \end{align*}

   If we look for solutions $y(x)$ such that $L y = 0$ we will not find any hypergeometric solutions. Then we can try to use the methods of this paper to 
   find a definite-sum solution for this recurrence. We start by taking the binomial basis $\cC$ that we have used throughout the paper. If we write the solutions 
   $y(x)$ in the binomial basis $y(x) = \sum_{n=0}^\infty z_n \binom{x}{n}$, then we know that the sequence $(z_n)_n$ is annihilated by the recurrence $\cR[\cC](L)$:

   \begin{align*}
      \cR[\cC](L) &= -8 (n-5) (n-4) (n-3) (n-2) (n-1) n\, S^{-6}\\
      &- 4 (n-4) (n-3) (n-2) (n-1) n (15 + 4 n)S^{-5}\\
      &+ 2 (n-3) (n-2) (n-1) n (17 - 114 n + 12 n^2)S^{-4}\\
      &+ 2 (n-2) (n-1) n (451 - 493 n + 32 n^2 + 32 n^3)S^{-3}\\
      &- (n-1) n (-941 + 2804 n + 692 n^2 - 576 n^3 + 16 n^4)S^{-2}\\
      &- n (4128 + 8413 n + 1816 n^2 - 1772 n^3 + 152 n^4 + 96 n^5)S^{-1}\\
      &+ 180 + 10988 n + 19519 n^2 + 8440 n^3 - 752 n^4 - 680 n^5 - 16 n^6\\
      &+ (50990 + 66145 n + 22205 n^2 - 1808 n^3 - 1000 n^4 + 288 n^5 + 64 n^6) S\\
      &+ (-35864 - 77301 n - 45009 n^2 - 6616 n^3 + 1596 n^4 + 472 n^5 + 24 n^6) S^2\\
      &- (85212 + 67646 n + 12843 n^2 - 1308 n^3 - 236 n^4 + 108 n^5 + 16 n^6) S^3\\
      &- (-58916 - 60882 n - 21043 n^2 - 1728 n^3 + 562 n^4 + 132 n^5 + 8 n^6) S^4\\
      &+ (5 + n) (-106 - 49 n + 2 n^2 + 2 n^3) S^5.
   \end{align*}

   In particular, the sequence $(z_n)_n$ will be annihilated by $S^6 \cR[\cC](L)$. This operator
   only has forward shifts and can be considered as an element of $\Q[x]\langle E\rangle$ (as
   the operator $L$), by mapping $S \mapsto E$ and $n \mapsto x$. Now, we consider a function
   $z(x)$ with $z(n) = z_n$ that is annihilated by the following operator:

   \begin{align*}
      M &= -8 (x+1) (x+2) (x+3) (x+4) (x+5) (x+6)\\
      & -4 (x+2) (x+3) (x+4) (x+5) (x+6) (39 + 4 x)E\\
      & +2 (x+3) (x+4) (x+5) (x+6) (-235 + 30 x + 12 x^2) E^2\\
      & + 2 (x+4) (x+5) (x+6) (5557 + 3347 x + 608 x^2 + 32 x^3) E^3\\
      & -  (x+5) (x+6) (-62885 - 37276 x - 6220 x^2 - 192 x^3 + 16 x^4) E^4\\
      & - (x+6) (680718 + 592237 x + 210112 x^2 + 36436 x^3 + 3032 x^4 + 96 x^5) E^5\\
      & - (4416936 + 4645888 x + 1770833 x^2 + 323528 x^3 + 29792 x^4 +  1256 x^5 + 16 x^6)  E^6 \\
      & + (4786184 + 4125565 x + 1639901 x^2 + 354352 x^3 + 42200 x^4 + 2592 x^5 + 64 x^6) E^7\\
      & + (3309382 + 4225311 x + 1666719 x^2 + 305288 x^3 + 28716 x^4 + 1336 x^5 + 24 x^6) E^8\\
      & - (1951356 + 1322930 x + 482643 x^2 + 101028 x^3 + 11644 x^4 + 684 x^5 + 16 x^6) E^9\\
      & - (573028 + 1214154 x + 509885 x^2 + 93840 x^3 + 8842 x^4 + 420 x^5 + 8 x^6)  E^{10}\\
      & + (11 + x) (104 + 191 x + 38 x^2 + 2 x^3) E^{11},
   \end{align*}
   which is the same operator as $S^6\cR[\cC](L)$ but written as an element of $\Q[x]\langle E\rangle$.
   Similarly to what we did with $y(x)$, we can write the function $z(x) =\sum_{k = 0}^\infty w_k \binom{x}{k}$ and 
   then the sequence $(w_n)_n$ is annihilated by the operator $\cR[\cC](M)$:

   \begin{align*}
      \cR[\cC](M) &=256 (n-2)^2 (n-1) n (2 n-3) S^{-3}\\
      & -128 (n-1) n \left(4 n^4-66 n^3+124 n^2-77 n+14\right) S^{-2}\\
      & -32 n \left(144 n^5-1036 n^4-954 n^3+52 n^2+n+32\right) S^{-1}\\
      & -16 \left(1176 n^6-260 n^5-18480 n^4-42580 n^3-37988 n^2-12369 n-252\right)\\
      & -16 \left(2880 n^6+16930 n^5+9973 n^4-120405 n^3-307141 n^2-257421 n-54488\right) S\\
      & -8 \left(9420 n^6+109214 n^5+441622 n^4+610729 n^3-368468 n^2-1447147 n-656803\right) S^2\\
      & -2 \left(43344 n^6+740620 n^5+4880834 n^4+15304796 n^3+22173471 n^2+11018808 n-25704\right) S^3\\
      & -\left(71912 n^6+1612772 n^5+14331384 n^4+63562068 n^3+144253438 n^2+150787607 n+51445066\right) S^4\\
      & -\left(43344 n^6+1198400 n^5+13254962 n^4+74139652 n^3+216230077 n^2+297858761 n+134496052\right) S^5\\
      & -\left(18840 n^6+617420 n^5+8122956 n^4+54243220 n^3+189442880 n^2+311777199 n+161972208\right) S^6\\
      & -\left(5760 n^6+217784 n^5+3308558 n^4+25513428 n^3+102653179 n^2+192498561 n+107373416\right) S^7\\
      & -\left(1176 n^6+50300 n^5+863968 n^4+7519420 n^3+33976774 n^2+70410893 n+39850322\right) S^8\\
      & -\left(144 n^6+6864 n^5+131182 n^4+1266684 n^3+6307771 n^2+14115123 n+7618716\right) S^9\\
      & -\left(8 n^6+420 n^5+8812 n^4+93012 n^3+502140 n^2+1188245 n+560444\right) S^{10}\\
      & +(n+11) \left(2 n^3+38 n^2+191 n+104\right) S^{11}.
   \end{align*}
   If we now clean the inverse shifts as we did for $\cR[\cC](L)$ then we obtain the operator of order 14 defined by $N = S^3 \cR[\cC](M)$. 
   We again transform the operator $N$ to the operator ring $\Q[x]\langle E\rangle$ by mapping $S \mapsto E$ and $n \mapsto x$:

   \begin{align*}
      N &= 256 (x+1)^2 (x+2) (x+3) (2 x+3)\\
      & -128 (x+2) (x+3) \left(4 x^4-18 x^3-254 x^2-683 x-559\right) E\\
      & -32 (x+3) \left(144 x^5+1124 x^4-426 x^3-25598 x^2-79013 x-74179\right) E^2\\
      & -16 \left(1176 x^6+20908 x^5+136380 x^4+347300 x^3-60488 x^2-1776489x - 2231667 \right) E^3\\
      & -16 \left(2880 x^6+68770 x^5+652723 x^4+3078171 x^3+7218056 x^2+6781572 x+179368 \right) E^4\\
      & -8 \left(9420 x^6+278774 x^5+3351532 x^4+20826253 x^3+69908761 x^2+118492934 x+77352791 \right) E^5\\
      & -2 \left(43344 x^6+1520812 x^5+21841574 x^4+163936364 x^3+676112031 x^2+1447565850
         x+1252737441\right) E^6 \\
      & -\left(71912 x^6+2907188 x^5+48231084 x^4+419520636 x^3+2013028306 x^2+5038293899 x+5123434213\right) E^7\\
      & -\left(43344 x^6+1978592 x^5+37082402 x^4+364460956 x^3+1975485853 x^2+5577093275 x+6372374530\right) E^8\\
      & -\left(18840 x^6+956540 x^5+19927656 x^4+217460092 x^3+1305865484 x^2+4067804487 x+5088583521\right) E^9\\
      & - \left(5760 x^6+321464 x^5+7352918 x^4+87927084 x^3+576736243 x^2+1951205055 x+ 2622724016 \right) E^{10}\\
      & -\left(1176 x^6+71468 x^5+1777228 x^4+23049076 x^3+163315666 x^2+592690529 x+ 842959919 \right) E^{11}\\
      & -\left(144 x^6+9456 x^5+253582 x^4+3536388 x^3+26819995 x^2+103319745 x+153333162\right) E^{12}\\
      & -\left(8 x^6+564 x^5+16192 x^4+240876 x^3+1938216 x^2+7845869
         x+11977427\right)E^{13}\\
      & + (x+14) \left(2 x^3+56 x^2+473 x+1073\right) E^{14}
   \end{align*}

   Now we can use any software to find explicit solutions to $N v = 0$. For example, we find two linearly independent hypergeometric solutions:
   \begin{align*}
      v^{(1)}(n) &= (2n+1)!,\\
      v^{(2)}(n) &= \frac{1}{n!}.
   \end{align*}   

   Hence, we can write two linearly independent solutions of the original operator $L$ by unrolling the identities we have found:
   \begin{align*}
      y^{(1)}(n) &= \sum_{k=0}^n \sum_{j=0}^k \binom{n}{k}\binom{k}{j}(2j+1)!,\\
      y^{(2)}(n) &= \sum_{k=0}^n \sum_{j=0}^k \binom{n}{k}\binom{k}{j}\frac{1}{k!},
   \end{align*}
   both satisfy $L y = 0$.

   The operators $L$, $M$ and $N$ can be found in the repository together with the code\footnote{\url{https://github.com/Antonio-JP/pseries_basis/tree/master/notebooks/example60}}.
\end{example}
\section{The matrix elements of  \texorpdfstring{$[\cR L]$}{[RL]} from Example~\ref{ex:main}}
\label{app:ex:main}

Here we present explicit formulas for the operators $L_{0,0},L_{0,1},L_{1,0}$ and $L_{1,1}$ from 
Example~\ref{ex:main}. The computations for obtaining these operators are explained in detail in the aforementioned 
Example, but can also be automatically computed using the software \packname. They can also be found (ready to 
be used in SageMath) in the repository\footnote{\url{https://github.com/Antonio-JP/pseries_basis/tree/master/notebooks/example44}}.

\begin{align*}
L_{0,0}\ &=\ (k+8) (27034107689 k+247037440535)\, S^7\\
&-\frac{1}{(k+1) (k+2) (k+3) (k+4) (k+5)
   (k+6)}(54068215378 k^9\\
&-315669611138 k^8-45148617745347 k^7-782696842132919 k^6\\
&-6454240392445055 k^5-30050534179653883 k^4-82215116461457480 k^3\\
&-129113197043173300 k^2-105757314240946896 k-34247146225582080) \,S^6\\
&+\frac{1}{(k+1) (k+2) (k+3) (k+4) (k+5)}(27034107689 k^9-3508146051312 k^8\\
&-127964289486598 k^7-1741174847222631 k^6-12524498803569684 k^5\\
&-53047967564919031 k^4-136503346354387959 k^3-209045777928727562 k^2\\
&-173958661328786224 k-59682736706956320) \,S^5\\
&+\frac{1}{(k+1) (k+2) (k+3) (k+4)}(1972211122835 k^8+62134267567378 k^7\\
&+616718084410852 k^6+2619141590805683 k^5+4315508250526315 k^4\\
&-1167669149632785 k^3-9620009176334670 k^2-4014526382135216 k\\
&+3400385599899936) \,S^4\\
&-\frac{1}{(k+1) (k+2) (k+3)}(2972566483581 k^7-148275885358425 k^6\\
&-2589937042152480 k^5-16499978058431541 k^4-52671318556586357 k^3\\
&-88255097542772662 k^2-71905587088529204
   k-21524025761438520) \,S^3\\
&-\frac{1}{(k+1) (k+2)}(64025119688979 k^6+916316298831859 k^5\\
&+5515823411381379 k^4+17451451407071553 k^3+30016470047039710 k^2\\
&+26136807998134436 k+8854550588334008) \,S^2\\
&+\frac{4}{k+1}(12511390805301 k^5+48661327183573 k^4-74830042870409 k^3\\
&-512087325174801 k^2-633098967293677k-198345093160056) \,S\\
&+4 (34604693659372 k^4+175550020109206 k^3+291507102636319 k^2\\
&+199874021738859 k+49640119659704)\\
&+8 k^2 (2263487310112 k^2+6642551248868 k+2276852470297)\,S^{-1}\\
&-413236428752 (k-1)^2 k^2 \,S^{-2},
\end{align*}

\begin{align*}
L_{0,1}\ &=\ (432545723024 k^2+5219471638609 k+13834096669960) \,S^6\\
&-\frac{1}{(k+1) (k+2) (k+3) (k+4) (k+5)}(811023230670 k^8\\
&+19046230918120 k^7+165247796584595 k^6+627022054492313 k^5\\
&+729851398238169 k^4-1509733892578497 k^3-4308735593846114 k^2\\
&-2212258070133576 k+614777382717120) \,S^5\\
&+\frac{1}{(k+1) (k+2) (k+3) (k+4)(k+5)}(378477507646 k^9\\
&+2622949450120 k^8-178977301246832 k^7-3536993458109041 k^6\\
&-28948803267991653 k^5-128904824467220745 k^4-332824849676929377k^3\\
&-491118612333672390 k^2-377639831328665120 k-114882404752528800) \,S^4\\
&+\frac{1}{(k+1) (k+2) (k+3)(k+4)}(3985703076428 k^8+193788717050920 k^7\\
&+2713882670024520 k^6+18619321752485251 k^5+73349659346823626 k^4\\
&+174605063422565737 k^3+248149371724462126 k^2+193178386633211432 k\\
&+62908960035990144) \,S^3\\
&-\frac{2}{(k+1) (k+2)(k+3)} (15771010694581 k^7+181528774532964 k^6\\
&+745276353701544 k^5+977855992237626 k^4-1343180715641631 k^3\\
&-4587333163883826 k^2-2904023672777446 k+307621359311628) \,S^2\\
&-\frac{4}{(k+1)(k+2)}(10843329249882 k^6+197606913503225 k^5\\
&+1312326808327958 k^4+4181515523826039 k^3+6805531587905072 k^2\\
&+5342072311504845 k+1547863353158842) \,S\\
&+\frac{8}{k+1} (17316549266881 k^5+108087475519889 k^4+241637933180241 k^3\\
&+250891662715092 k^2+128154251184293 k+28521786904807)\\
&+16 k (4536557506826 k^3+16485306406473 k^2+14750499433633 k\\
&+4048952022402)\,S^{-1}\\
&+384 (k-1) k^2 (1547923253 k+13567625316) \,S^{-2},
\end{align*}

\begin{align*}
L_{1,0}\ &=\ \frac{1}{(k+2) (k+3) (k+4) (k+5) (k+6) (k+7)} (432545723024 k^8\\
&+17547024744793 k^7+300772935395324 k^6+2834950240712954 k^5\\
&+16000429195865408 k^4+55071479995635089 k^3+112099510188633348 k^2\\
&+122247182298335548 k+53987648288898960)\, S^7\\
&-\frac{1}{(k+2) (k+3) (k+4) (k+5) (k+6)}(811023230670 k^8\\
&+24723393532810 k^7+292833072628835 k^6+1626762750499999 k^5\\
&+3369314689609239 k^4-6528259273082053 k^3-46958468605880528 k^2\\
&-85444980480836572 k-52432761655513872) \,S^6\\
&+\frac{1}{(k+2) (k+3) (k+4) (k+5)}(378477507646 k^8+3190665711589 k^7\\
&-176165113042889 k^6-3680820024183060 k^5-30578022192058416 k^4\\
&-133391475526561039 k^3-319542866474066205 k^2-394131699873504978 k\\
&-193003564034906648) \,S^5\\
&+\frac{1}{(k+2) (k+3) (k+4)}(3985703076428 k^7+199767271665562 k^6\\
&+2926376432119304 k^5+20553495786751855 k^4+79525367763859646 k^3\\
&+173129687209637083 k^2+197180830938857338 k+90043789227857560) \,S^4\\
&-\frac{1}{(k+2) (k+3)}(31542021389162 k^6+410370581149671 k^5\\
&+1889632865572464 k^4+2742859006263721 k^3-4455809171785822 k^2\\
&-16973330513822344 k-13162727503867212) \,S^3\\
&-\frac{4}{k+2}(10843329249882 k^5+213871907378048 k^4+1533963953805732 k^3\\
&+5087511194624529 k^2+7854660846698885 k+4507808585185441) \,S^2\\
&+4\,(34633098533762 k^4+268124598840421 k^3+719049847857749 k^2\\
&+787188237460817 k+289840947961864) \,S\\
&+8 (k+1) (9073115013652 k^3+46580285333424 k^2+68719863652441 k\\
&+31063488457919)\\
&+192 k^2 (k+1) (3095846506 k+28683173885)\,S^{-1},
\end{align*}

\begin{align*}
L_{1,1}\ &=\ (k+1) (27034107689 k+247037440535)\, S^7\\
&+\frac{1}{(k+2) (k+3) (k+4) (k+5) (k+6)} (54068215378 k^8-694147118784 k^7\\
&-55428688219699 k^6-878006017531531 k^5-6681392431254415 k^4\\
&-28097754885306673 k^3-66148844332414088 k^2-80650521283582156 k\\
&-38958998544954000)\, S^6\\
&-\frac{1}{(k+2) (k+3) (k+4) (k+5) (k+6)}(27034107689 k^9-3508146051312 k^8\\
&-152565327483588 k^7-2467626608265457 k^6-21224638658568224 k^5\\
&-107972466474419135 k^4-333959033455101617 k^3-612421038700564968 k^2\\
&-605231342224899340 k-243934827298760208) \,S^5\\
&+\frac{1}{(k+2) (k+3) (k+4) (k+5)}(1972211122835 k^8+71974682766174 k^7\\
&+875773006381164 k^6+4554430750938027 k^5+6786812575418620 k^4\\
&-31583019684887547 k^3-157161970509807851 k^2-257888323765433574 k\\
&-149105098468738984) \,S^4\\
&-\frac{1}{(k+2) (k+3) (k+4)}(2972566483581 k^7-149184063635101 k^6\\
&-3473324917413492 k^5-29111963906335209 k^4-124841735272806629 k^3\\
&-292268791433903686 k^2-353135255931287712 k-170560133319385632) \,S^3\\
&-\frac{1}{(k+2) (k+3)}(64025119688979 k^6+1150730119088011 k^5\\
&+8609373148451587 k^4+33951954000293401 k^3+73784589829185334 k^2\\
&+83087956017304548 k+37513154389125452) \,S^2\\
&+\frac{4}{k+2}(12511390805301 k^5+68878950332595 k^4-68222690798060 k^3\\
&-1049675078236094 k^2-2149789355833503 k-1346833666634436) \,S\\
&+4\,(34604693659372 k^4+235686292414298 k^3+553354051695523 k^2\\
&+545211853981501 k+192577819165746)\\
&+8 k (k+1)(2263487310112 k^2+8868888400908 k+5810514296121)\,S^{-1}.
\end{align*}

\section*{Acknowledgements}

The first author acknowledges financial support from the strategic program ``Innovatives O\"O-2010 plus'' of
the Ober\"osterreich region, the Austrian Science Fund (FWF) under the grant SFB F50-07, and the Paris Ile-de-France region.

The second author acknowledges financial support from the Slovenian Research Agency (research core funding No.\ P1-0294). The paper was started in 2017 while he was attending the thematic programme ``Algorithmic and Enumerative Combinatorics'' at the Erwin Schr\"odinger International Institute for Mathematics and Physics in Vienna, Austria. He thanks the Institute for its support and warm hospitality.

Both authors thank the anonymous reviewers for their thorough and well-informed reviews which helped to improve this paper significantly.
 
\renewcommand\bibname{References}

\end{document}